%
\documentclass[letter,11pt]{article}
\usepackage{fullpage}
\usepackage{graphicx}
\graphicspath{{./figs/}}
%


\usepackage{amsmath,amssymb,amsthm}

\makeatletter
\newcommand{\leqnomode}{\tagsleft@true\let\veqno\@@leqno}
\newcommand{\reqnomode}{\tagsleft@false\let\veqno\@@eqno}
\makeatother

\newtheorem{corollary}{Corollary}
\newtheorem{lemma}{Lemma}
\newtheorem{definition}{Definition}
\newtheorem{theorem}{Theorem}


\newtheorem{claim}{Claim}
      \numberwithin{claim}{lemma}

\newtheorem{proposition}{Proposition}
\newtheorem{remark}{Remark}
\newtheorem{observation}{Observation}


\usepackage{wrapfig}

\usepackage{comment}

\usepackage[algosection,ruled,lined,linesnumbered,longend]{algorithm2e}

\usepackage{algcompatible}
\usepackage{color}
\usepackage{bbm}
\usepackage{mathbbol}
\usepackage{tabularx}
\usepackage{enumerate}

\usepackage{thmtools}
\usepackage{xcolor}
\colorlet{agcolor}{blue!70!white}
\colorlet{jkcolor}{green!60!black}

\usepackage[textwidth=20mm]{todonotes}

\newcommand{\0}{\mathbb{0}} 
 
\newcommand{\C}{\ensuremath{\mathcal{C}}}

\newcommand{\calT}{\mathcal{T}}
\newcommand{\tilT}{\mathcal{T}}

\usepackage[utf8]{inputenc}

\newcommand{\ECT}{\ensuremath{\textsc{ECT}}}

\usepackage[numbers,sort&compress]{natbib}
\usepackage[bookmarks=true,colorlinks=true,citecolor=blue,linkcolor=red]{hyperref}

    %


\begin{document}

\title{Hitting Weighted Even Cycles in Planar Graphs}
\author{Alexander G{\"o}ke\thanks{Hamburg University of Technology, Institute for Algorithms and Complexity, Hamburg, Germany. \texttt{alexander.goeke@tuhh.de}}
  \and Jochen Koenemann\thanks{University of Waterloo, Waterloo, Canada. \texttt{jochen@uwaterloo.ca}}
  \and Matthias Mnich\thanks{Hamburg University of Technology, Institute for Algorithms and Complexity, Hamburg, Germany. \texttt{matthias.mnich@tuhh.de}}
  \and Hao Sun\thanks{University of Waterloo, Waterloo, Canada. \texttt{hao@uwaterloo.ca}}}
\date{}
\maketitle              
\begin{abstract}
  A classical branch of graph algorithms is graph transversals, where one seeks <a minimum-weight subset of nodes in a node-weighted graph $G$ which
  intersects all copies of subgraphs~$F$ from a fixed family $\mathcal F$.
  Many such graph transversal problems have been shown to admit polynomial-time approximation schemes (PTAS) for \emph{planar} input graphs $G$, using a variety of techniques like the shifting technique (Baker, J. ACM 1994), bidimensionality (Fomin et al., SODA 2011), or connectivity domination (Cohen-Addad et al., STOC 2016).
  These techniques do not seem to apply to graph transversals with parity constraints, which have recently received significant attention, but for which no PTASs are known.

  In the {\em even-cycle transversal} (\ECT) problem, the goal is to find a minimum-weight hitting set for the set of even cycles in an undirected graph.
  For ECT, Fiorini et al. (IPCO 2010) showed that the integrality gap of the standard covering LP relaxation is $\Theta(\log n)$, and that
  adding sparsity inequalities reduces the integrality gap to~10.

  \quad Our main result is a primal-dual algorithm that yields a $47/7\approx6.71$-approximation for ECT on node-weighted planar graphs, and an integrality gap of the same value for the standard LP relaxation on node-weighted planar graphs.

\end{abstract}
\section{Introduction}
\label{sec:introduction}
Transversal problems in graphs have received a significant amount of attention from the perspective of algorithm design.
Such problems take as input a node-weighted graph~$G$, and seek a minimum-weight subset $S$ of nodes which intersect all graphs $F$ from a fixed graph family $\mathcal{F}$ that appears as subgraph in $G$.
A prominent example in this direction is the fundamental {\sc Feedback Vertex Set (FVS)} problem, where $\mathcal F$ is the class of all cycles.  FVS is one of Karp's 21 $\mathsf{NP}$-complete problems~\cite{Karp1972}.
It admits a 2-approximation in polynomial time~\cite{BafnaBF1999,BeckerG1996}, which cannot be improved to a $(2-\varepsilon)$-approximation for any $\varepsilon > 0$ assuming the Unique Games Conjecture~\cite{KhotR2008}.

Recently, several graph transversal problems have been revisited in the presence of additional parity constraints \cite{LokshtanovR2012,MisraRRS2012,LokshtanovRSZ2020,NageleZ2020}.
The natural parity variants of FVS are {\sc Odd Cycle Transversal} (OCT) and {\sc Even Cycle Transversal} (ECT), where one wishes to intersect the odd-length and even-length cycles of the input graph~$G$, respectively.
The approximability of these problems is much less understood than that of FVS: for OCT, only an $\mathcal{O}(\sqrt{\log n})$-approximation is
known~\cite{AgarwalCMM2005}, and for ECT, only a 10-approximation is known~\cite{MisraRRS2012}.

Planar graphs are a natural subclass of graphs in which to consider graph transversal problems.
The interest goes back to Baker's shifting technique~\cite{Baker1994}, which yielded a PTAS for {\sc Vertex Cover} in planar graphs (where $\mathcal F$ is the single graph consisting of an edge).
The technique was generalized by Demaine et al.~\cite{DemainBidimEPTAS}, who gave EPTASs for graph transversal problems satisfying a certain bidimensionality criterion, including FVS in \emph{unweighted} planar graphs.
That result was later extended to yield an EPTAS for FVS in unweighted $H$-minor free graphs~\cite{FominLST2010}, for any fixed graph~$H$.
In \emph{edge-weighted} planar graphs, PTAS are known for edge-weighted {\sc Steiner Forest} and OCT~\cite{BateniPlanarSteiner,PlanarMaxcut1,PlanarMaxcut2}.

On \emph{node-weighted} planar graphs, the situation appears to be
more complex.  First, the existence of a PTAS for FVS on node-weighted
planar graphs was a long-standing open question which was resolved
only recently in a paper of Cohen-Addad et
al.~\cite{CohenAddadEtAl2016}. The authors presented a PTAS for FVS in
node-weighted planar graphs, crucially exploiting the fact that the
treewidth of $G - S$ is bounded for feasible solutions $S$.  The
existence of an EPTAS for FVS in node-weighted planar graphs is still
open.

To deal with cycle transversal problems (in node-weighted planar graphs)
which are more complex than FVS, Goemans and
Williamson~\cite{GoemansW1998} first proposed a primal-dual based
framework.  Their framework requires the cycle family $\mathcal F$ to
satisfy a certain uncrossing property. The latter property can be
seen to be satisfied by OCT, {\sc Directed FVS} in directed planar
graphs, and {\sc Subset FVS}, which seeks a minimum-cost node set
hitting all cycles containing a node from a given node set~$T$. For
those problems, the authors obtained
3-approximations\footnote{18/7-approximations were claimed but later
  found to be incorrect~\cite{BermanY2012}.}.
  The framework of Goemans and Williamson~\cite{GoemansW1998} also yields a 3-approximation for {\sc Steiner
  Forest} in node-weighted planar graphs
\cite{SteinerDemaine,Moldenhauer2011}.  Berman and Yaroslavtsev
\cite{BermanY2012} later improved the approximation factor for the same class of uncrossable cycle transversal problems
from 3 to 2.4.  For none of those problems, though, the existence of a
PTAS is known.

The main question driving our work is whether the framework of Goemans and Williamson~\cite{GoemansW1998} (and its improvement by Berman and Yaroslavtsev~\cite{BermanY2012}) can be extended to cycle transversal
problems that do not satisfy uncrossability. In this paper we focus
on ECT in node-weighted planar graphs as a natural such problem: even
cycles are not uncrossable, and hence the framework of Goemans and Williamson~\cite{GoemansW1998} does not apply.  Furthermore, the
framework of Cohen-Addad et al.~\cite{CohenAddadEtAl2016} requires
that contracting edges only reduces the solution value, which is not
the case for even cycles either.  For \emph{unweighted} planar graphs, it is
still possible to obtain an EPTAS for ECT, by building on the work of
Fomin et al.~\cite{FominLRS2011}.  Their main result are EPTASs for
bi\-dimensional problems, which ECT is not (as contracting edges can
change the parity of cycles).  Yet, to obtain their result, they show
that any transversal problem that satisfies the
``$\nu$-transversability'' and ``reducibility'' conditions has an
EPTAS on $H$-minor free graphs (cf. \cite[Theorem~1]{FominLRS2011}).
Both conditions are met by unweighted
ECT\footnote{$\nu$-transversability follows from as graphs without
  even cycles have treewidth $2$, and reducibility from unit weights
  and con\-nectedness of the to-be-hit subgraphs $F$.}, which thus
admits an EPTAS on $H$-minor free graphs.  For ECT on node-weighted
planar graphs, though, reducibility fails, and the existence of a PTAS
is unknown.
The currently best result for ECT is a 10-approximation, which was given by Fiorini et al.~\cite{FioriniJP2010} for general graphs.
They showed that the integrality gap of the standard covering LP relaxation for ECT is $\Theta(\log n)$, but that adding sparsity inequalities reduces the integrality gap to~10.
No better than 10-approximation is known for ECT in node-weighted planar graphs.

\subsection{Our results}
We prove an improved approximation algorithm for ECT in node-weighted planar graphs.
\begin{theorem}
\label{thm:apxmain}
  ECT admits an efficient $47/7\approx6.71$-approximation on node-weighted planar graphs.
\end{theorem}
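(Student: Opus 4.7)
The plan is to design a primal-dual algorithm for the natural covering LP of ECT and analyze it via a planarity-based charging argument that exploits the rigid structure of even-cycle-free subgraphs.

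\textbf{LP and algorithm.} Work with $\min \sum_v c_v x_v$ subject to $\sum_{v \in C} x_v \geq 1$ for every even cycle $C$, and $x \geq 0$; the dual has $y_C \geq 0$ and $\sum_{C \ni v} y_C \leq c_v$. In the Goemans--Williamson style, initialize $S = \emptyset$ and $y = 0$ and iterate: identify the connected components $K$ of $G - S$ that still contain an even cycle; for each such active component, raise all dual variables $y_C$ with $C \subseteq K$ uniformly until some node $v$ becomes tight, and add $v$ to $S$; stop once no component of $G - S$ contains an even cycle. Finally, perform reverse delete: process the added nodes in reverse order of addition and remove each node that is not needed to maintain even-cycle-freeness of $G - S$.

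\textbf{Analysis target.} Complementary slackness gives $\sum_{v \in S} c_v = \sum_C y_C \cdot |C \cap S|$, while $\sum_C y_C \leq \mathrm{OPT}_{\mathrm{LP}}$. To prove a ratio of $\tfrac{47}{7}$, it suffices to show that in every moat-growing phase the total rate at which nodes become tight (and survive reverse delete) is at most $\tfrac{47}{7}$ times the number of active components. The main structural input is the classical characterization that a graph is even-cycle-free iff each block is an edge or an odd cycle; after reverse delete, every $v \in S$ therefore carries a small ``odd-cactus plus one extra path'' witness structure in $G - (S \setminus \{v\})$ that certifies its necessity.

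\textbf{Planar charging.} For each phase I would build an auxiliary planar multigraph $H$ whose vertices are the active components of $G - S$ and whose edges route along the witness structures in the planar embedding through every vertex added to $S$ during that phase. Euler's formula for $H$ controls $|E(H)| - |V(H)|$ in terms of face lengths; face lengths are bounded below using parity (the bounding walk of each face passes through the odd-cactus regions of $G - S$ in a constrained pattern) and girth considerations for odd cycles. The fraction $\tfrac{47}{7}$ would then emerge from a discharging step, or equivalently from the optimum of a small LP over face-length and vertex-incidence contributions.

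\textbf{Main obstacle.} The core difficulty is that even cycles are not uncrossable, so the active moats do not form a laminar family and the standard Goemans--Williamson/Berman--Yaroslavtsev analysis does not apply out of the box. The hardest step will be identifying the right witness structure for each reverse-delete-surviving node so that the induced auxiliary planar graph has bounded average degree, and then pushing the discharging argument to the tight constant $\tfrac{47}{7}$ instead of a looser $10$-style bound obtainable from sparsity inequalities alone.
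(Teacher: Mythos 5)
Your high-level framing — primal-dual on the covering LP, a reverse-delete, and a planar charging argument — points in the right general direction, and you correctly identify the central obstacle: even cycles do not uncross, so the Goemans--Williamson/Berman--Yaroslavtsev machinery does not apply directly. But the proposal stops at naming the obstacle rather than overcoming it, and the specific scheme you sketch has a concrete defect that would prevent the analysis from closing.

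The dual-raising rule you propose — for each active component $K$, raise \emph{all} $y_C$ with $C \subseteq K$ uniformly — is not a Goemans--Williamson moat scheme (where one raises a single dual per violated set), and it does not yield a controllable primal-to-dual rate ratio. A node $v$ becomes tight at rate proportional to the number of even cycles through $v$ in $K$, which is typically exponential and highly non-uniform across nodes; the dual objective grows at rate equal to the total number of even cycles in $K$. There is no reason the resulting ratio is bounded by any constant, and the paper in fact opens by exhibiting an example (a long odd path closed off by an even number of $5$-faces) where raising only face-minimal even cycles already fails to give a constant factor — so the choice of \emph{which} cycles to raise is the whole game, and ``all of them'' does not resolve it. The paper's actual mechanism is quite different: it first compresses the residual graph ($1$- and $2$-compression), works with \emph{blended inequalities} (convex combinations of cycle constraints whose coefficients adapt to residual costs on handles of pieces), finds an inclusion-minimal pocket of the compressed graph, computes a \emph{$2/3$-quasi-perfect tiling} of that pocket via a matching/Tutte argument in the planar dual, and raises only the blended inequalities of the tiles. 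The reverse-delete is also nonstandard (it deletes certain \emph{node pairs} atomically), which is essential because a single node of a piece may not be deletable while its twin on the opposite handle is kept. None of these ingredients are present in your sketch, and the $47/7$ constant is traced explicitly to a small optimization combining the $18/7$ subpocket bound inherited from Berman--Yaroslavtsev with the $2/3$-quasi-perfect coverage guarantee; asserting it ``would emerge from a discharging step'' leaves the entire quantitative content of the theorem unproven. In short, the gap is not a missing calculation but a missing set of structural ideas (compression, blended inequalities, tilings, paired reverse-delete) that the paper needs to make any constant-factor bound go through.
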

This improves the previously best 10-approximation by Fiorini et al.~\cite{FioriniJP2010} for planar graphs.

Our algorithm takes as input a node-weighted planar graph $G$ with
node weights $c_v\in\mathbb N$ for each $v\in V(G)$.  We then employ
a primal-dual algorithm that is based on the following natural covering LP for
ECT and its dual, where \C\ denotes the set of even cycles in $G$:

\begin{center}

\begin{minipage}{0.45\textwidth}   
  \leqnomode
  \vspace{-1em}
  \begin{align}
                 ~~~~~\min~& c^Tx\notag\\
    ~~~~~~\textnormal{s.t.}~&	x(C) \geq 1 \quad \forall \ C \in \mathcal C \label{lp:p2}\tag{$\mbox{P}_{  \textnormal{ECT} }$}\\
    	                & x \geq \0\notag
  \end{align}
\end{minipage}
\hfill\vline\hfill
\begin{minipage}{0.45\textwidth} 
  \vspace{-1.5em}
  \begin{align}
                 \max~& \mathbbm{1}^Ty\notag\\
    \textnormal{s.t.}~&	\sum\limits_{C \in \mathcal C, v \in C} y_C \leq c_v \quad \forall v \in V(G) \label{lp:d2}\tag{$\mbox{D}_{  \textnormal{ECT} }$}\\
    	                & y \geq \0\notag
  \end{align}
\end{minipage}

\end{center}

Fiorini et al.~\cite{FioriniJP2010} proved that the integrality gap of this LP is $\Theta(\log n)$.
Our main result is an improved integrality gap of this LP for ECT in planar graphs:
\begin{theorem}
\label{thm:weighted_apx}
 The integrality gap of the LP \eqref{lp:p2} is at most $47/7\approx6.71$ in planar graphs.
\end{theorem}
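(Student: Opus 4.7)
The plan is to design a primal--dual algorithm in the spirit of Goemans--Williamson~\cite{GoemansW1998} that constructs a feasible integer primal $S$ to \eqref{lp:p2} together with a feasible dual $y$ for \eqref{lp:d2} satisfying $c(S)\le \tfrac{47}{7}\cdot \mathbbm{1}^T y$; by weak LP duality this immediately bounds the integrality gap. Starting from $S=\emptyset$ and $y=\0$, at each step I would identify a collection of \emph{active substructures} in the residual graph $G-S$, each witnessing an as-yet-unhit even cycle, then raise the dual value $y_C$ of a carefully chosen representative even cycle $C$ associated with each active substructure at unit speed, until some vertex $v$ becomes tight; then add $v$ to $S$. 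The process halts once $G-S$ is even-cycle-free, after which a reverse-delete pass produces a minimal hitting set.

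The conceptual obstacle is that the family of even cycles is not uncrossable, so one cannot directly grow the dual on minimal violated cycles as in the OCT or Steiner forest settings of \cite{GoemansW1998,BermanY2012}. I would sidestep this using the structural fact that a graph is even-cycle-free if and only if every one of its blocks is either a single edge or an odd cycle (since in any $2$-connected graph with two distinct cycles the three cycles of a theta-subgraph have lengths summing to an even number and hence at least one is even). Consequently a violation in $G-S$ corresponds precisely to a block of $G-S$ that is neither an edge nor an odd cycle, and within each such bad block one can canonically select a short even cycle as the dual representative. Taking the blocks of $G-S$ rather than the violated cycles themselves as the growing objects yields the tree-like decomposition of the residual graph that the analysis needs in place of uncrossability, and keeps the family of raised cycles well-behaved under deletions.

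For the analysis, the usual primal--dual accounting reduces the task to bounding, in each iteration, the average number of reverse-delete-minimal solution vertices lying on the representative cycles of the active blocks by~$\tfrac{47}{7}$. The main obstacle -- and where I expect the constant $\tfrac{47}{7}$ to come from -- is a planar structural argument. Combining the minimality of $S$ after reverse-delete with planarity, one can invoke the sparsity bounds $|E|\le 3|V|-6$ and the bipartite-planar bound $|E|\le 2|V|-4$ (applied to the relevant auxiliary graphs of active blocks incident to $S$-vertices) to control how many active blocks can simultaneously claim any given $S$-vertex. The tight cases should balance contributions from short odd-cycle blocks against the longer even substructures forced by the bipartite-planar bound, and the value $\tfrac{47}{7}$ arises as the optimum of this balancing; verifying that the worst-case configurations are indeed planar-realizable, and that the canonical choice of representative even cycles within each bad block does not disrupt this counting, is the most delicate part of the argument.
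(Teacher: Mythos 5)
Your high-level framework---primal-dual, grow duals on even cycles in the residual graph until tight, reverse-delete, bound the amortized number of hit nodes per raised cycle via planar sparsity---is indeed the skeleton of the paper's argument, and your structural observation that even-cycle-free graphs have only edge- or odd-cycle-blocks is correct and is used implicitly. However, the central step of your plan has a gap that the paper explicitly identifies as the crux of the difficulty, and the rest of your proposal does not explain how to close it.

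You propose that in each iteration one should ``canonically select a short even cycle as the dual representative'' inside each bad block. This is precisely the strategy the paper shows cannot work, even in its simplest form. The counterexample in the paper's Figure~1 is a \emph{single} $2$-connected bad block: a face $F$ surrounded by an even number of $5$-cycles, where $F$ is the only face-minimal even cycle. Raising only $y_F$ (or, more generally, any one canonical even cycle per bad block) leaves an $\Omega(n)$-size solution where the optimum uses $O(1)$ nodes, so no constant factor is attainable with ``one representative per block.'' The mechanism you would need---and which the paper develops---is to raise an entire \emph{family} of even-cycle duals simultaneously, and moreover to raise \emph{blended inequalities} (with coefficients $0,1/2,1$) rather than raw cycle inequalities, so that each ``piece'' of the $2$-compressed residual graph is charged at most once. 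Finding the right family of cycles to raise is itself nontrivial: the paper recasts it as finding a matching on the odd faces of the compressed residual graph's planar dual (a \emph{tiling}), and proves the existence of a ``$2/3$-quasi-perfect'' tiling via Tutte's theorem applied to that dual, using the absence of pseudo-pockets to control odd components. None of this is captured by ``identify bad blocks and raise one representative each.''

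Further, the quantitative endgame you sketch is not how the constant arises. You appeal generically to $|E|\le 3|V|-6$ and $|E|\le 2|V|-4$ on ``auxiliary graphs,'' but the $47/7$ actually comes from combining (i) a bound of $18/7$ on the average number of hit nodes per face of the debit graph (extending Berman--Yaroslavtsev and requiring a laminar family of \emph{pseudo-witness} cycles built via their Subset-FVS lemma), (ii) the $2/3$-quasi-perfect tiling guarantee with its parameters $\beta,\psi$, and (iii) the $+1$ slack per twin-edge cycle from blended inequalities, optimized jointly subject to $\beta(1-\psi)+2\psi\ge 2/3$. The paper also needs a non-standard reverse-delete that treats certain \emph{node pairs} atomically; without this the pieces-of-a-block structure you want to exploit breaks down. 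In short, your proposal names the right starting point but omits the three key ideas---compression with blended inequalities, tilings via dual matchings, and the pseudo-witness/debit-graph balance bound---that the paper needs to get any constant at all, let alone $47/7$.
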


\subsection{Our approach}
Designing a primal-dual algorithm is far from trivial, as the imposed parity constraints rule out a direct application of the framework proposed by Goemans and Williamson~\cite{GoemansW1998}.
Unlike in their work, \emph{face-minimal even cycles} (even cycles containing a minimal set of faces in their interior) are not necessarily
faces, and may thus overlap.
Indeed, increasing the dual variables of face-minimal even cycles does not yield a constant-factor approximation in general. 

Consider \autoref{PieceCounterex}, and let $F$ be the inner face that is only incident to  blue and black nodes.
\begin{wrapfigure}{r}{.5\textwidth}
  \begin{center}
    \includegraphics[width=.5\textwidth]{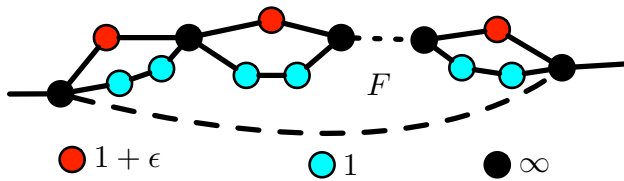}
    \caption{The bottom path has odd length, and the number of
      length-5 faces at the top is even.\label{PieceCounterex}}
  \end{center}
\end{wrapfigure}
For an even number of 5-cycles surrounding~$F$, $F$ is the only face-minimal even cycle in the graph.
Using only $F$ for the dual increase, even including a reverse-delete step, leaves one blue node of each 5-cycle. 
Yet, an optimal solution would take a single red and blue node from one 5-cycle.

To circumvent this impediment, we establish strong structural properties of planar graphs related to ECT. 
Those properties along with results from matching theory allow us to algorithmically find a large set of pairwise face-disjoint even cycles whose dual variables  we can then increment.
Even with this set of cycles found, it remains technically challenging to bound the integrality gap. 
For this purpose, we first use the structure of minimal hitting sets of  our graph to associate each such set with a hitting set in a subdivision of the so called 2-compression of our graph; the latter is a certain minor that we define in detail shortly. 
We then show  that faces that are contained in even cycles we increment are incident to few nodes on average.
Crucial in this step is a   technical result that is implicit in the work of Berman and Yaroslavtsev~\cite{BermanY2012}. 
Eventually, this approach leads to an integrality gap of $47/7$, and an algorithm with the same approximation guarantee.

\section{Primal-dual algorithm for ECT on node-weighted planar graphs}
\label{sec:pd}
We describe a primal-dual, constant-factor approximation for ECT on node-weighted planar graphs.
Our algorithm borrows some ideas from Fiorini et al.~\cite{FioriniJP2010} for the {\sc Diamond Hitting Set} (DHS) problem, which seeks a minimum-cost set of nodes in a node-weighted graph $G$ that hits all {\em diamonds} (sub-divisions of the graph consisting of three parallel edges). 
For DHS, Fiorini et al.~\cite{FioriniJP2010} employ a primal-dual algorithm to prove that the natural covering LP \eqref{lp:p2} (where $\mathcal C$ is replaced by the set of diamonds) has integrality gap $\Theta(\log n)$. 
We develop several new ideas to obtain a constant integrality gap.

We now outline the ideas of our primal-dual approach.
Consider a planar input graph $G$ with node costs $c_v\in\mathbb N$ for each $v\in V(G)$.
Given feasible dual solution~$y$ to \eqref{lp:d2}, let the \emph{residual cost} of node $v \in V(G)$ be $ c_v - \sum_{C \in \mathcal C, v \in C} y_C$.
Our primal-dual method begins with a trivial feasible dual solution $y=\0$, and the empty, infeasible hitting set $S = \emptyset$.

Then, in each iteration, we increase $y_C$ for all $C$ in some carefully chosen subset $\mathcal{C}'\subseteq \mathcal C$ of even cycles, while maintaining dual feasibility, and until some \emph{primary condition} is achieved. 
A common such primary condition is that some dual node-constraint becomes \emph{tight} in the increase process, and hence the corresponding node ends up having residual cost~$0$.

When this happens, we add the node to $S$.
Once $S$ is a feasible ECT, our algorithm ends its first phase, and executes a problem-specific \emph{reverse-delete} procedure.  
Here, we consider all nodes in~$S$ in reverse order of addition to $S$, and we delete such a node if the feasibility of~$S$ is maintained.
We will later describe a subtle and crucial refinement of this reverse-delete procedure.
Call the resulting final output of the algorithm $S'$.

During our algorithm, we will use the term \emph{hitting set} to refer to $S$, and during the analysis we will use the term \emph{hitting set} to refer to $S'$.
We will say a hitting set is \emph{feasible} if it is a feasible ECT, and refer to nodes of the hitting set as \emph{hit nodes}.

In the next subsections, we will fill in the details of the algorithm, and analyze the cost of~$S'$ compared to the value of an optimal
solution.
We begin by defining the concept of ``blended inequalities'' and necessary graph compression operations. 
Blended inequalities were used by Fiorini et al.~\cite{FioriniJP2010}, and our definitions follow their's closely. 

\subsection{Blended inequalities and compression}
A \emph{block} of $G$ is an inclusion-maximal 2-connected subgraph of $G$.
The \emph{block graph} of $G$ is the bipartite graph $B_G$ with bipartition $V(B_G) = B_1 \cup B_2 $, where $B_1$ are the blocks of~$G$, $B_2$ are the cut nodes of $G$, and $(b_1,b_2) \in B_1 \times B_2$ is an edge if $b_2$ is a node of $b_1$.

Let $S$ be a partial solution to the given ECT instance at some point during the execution of our algorithm. 
Then let $G^S$ be the corresponding residual graph that we obtain from $G - S$ by deleting all nodes that do not lie on even cycles. 
Our primal-dual algorithm now first looks for an even cycle $C$ in $G^S$ such that at most two nodes of $C$ have neighbours outside $C$. 
If such a cycle $C$ is found, we increment its dual variable $y_C$ until a node becomes tight.  
The reason for doing this is that such a cycle will pay for at most two hit nodes, which we will show later.

If there is no even cycle $C$ in $G^S$ such that at most two nodes of
$C$ have neighbours outside~$C$, we successively compress the residual
graph $G^S$ using two types of graph compression.  To this end, first
note that any minimal solution will only contain one node in the
interior of any induced path in $G^S$.  Suppose we contract some path
$P$ of $G^S$ of length at least two down to an edge $e$.  Choosing a
node in the interior of $P$ is ``equivalent'' to choosing the edge
$e$.  This is the motivation for the \emph{$1$-compression}.

Suppose we contract two $u$-$v$ paths $P_1,P_2$ with lengths of different parity down to edges~$e_1,e_2$, respectively.
We will find it helpful to think of these edges as a single
\emph{twin} edge between~$u$ and $v$ whose parity is {\em flexible}. 
This is the motivation for the \emph{$2$-compression}. 

Formally, we will successively compress $G^S$ as follows:
\begin{itemize}
  \item Obtain the {\em $1$-compression} $G^S_1$ of $G^S$ by repeatedly \emph{folding} degree-2 nodes $v$, as long as they exist, which means to delete $v$ and adding the edge $uw$ between its neighbors $u,w$.
 \item Note that no pair of nodes in $G^S_1$ is connected by more than two edges. 
   Obtain~$\bar{G}^S_1$ from~$G^S_1$ by replacing each pair of parallel edges by a {\em twin} edge. 
   In $\bar{G}^S_1$, we now once again fold degree-2 nodes as long as those exist.
   The resulting graph is the {\em $2$-compression} $G^S_2$ of~$G^S$.
\end{itemize}

See \autoref{2compress} for examples of 1- and 2-compression of a graph. 
In the following, we will omit the superscript $S$ from $G_1^S$, $\bar{G}^S_1$, and $G^S_2$ if this is clear from the context.
Let $G_3$ be obtained from~$G_2$ by replacing every edge of $G_2$ with a path of length two. 
If a twin edge was replaced, call the two edges of the path added \textit{twin edges}.  
By an abuse of notation, we say that a cycle of $G_1, \ G_2$ or~$G_3$ is even if it contains a twin edge, or if its preimage in $G$ is even.
\begin{figure}[h]
  \begin{center}
  \hspace{-3mm}  \includegraphics[scale=0.7]{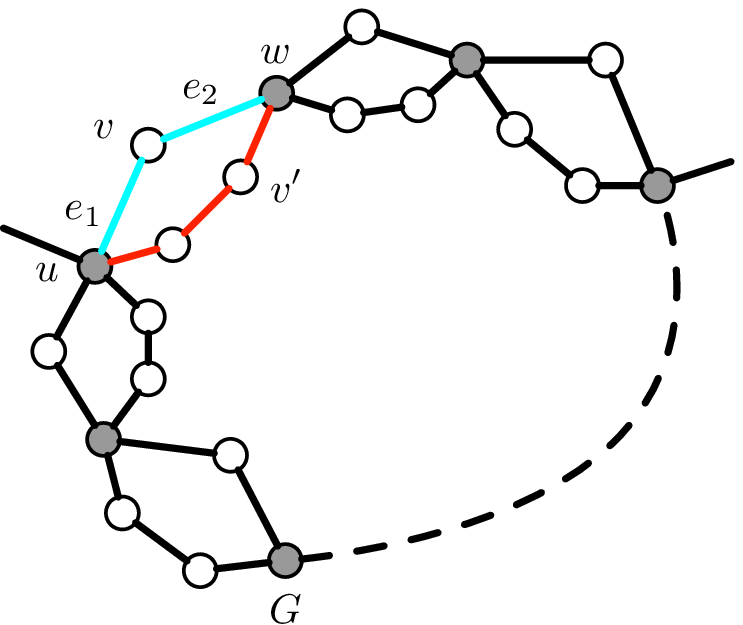}\hspace{-2mm}
    \includegraphics[scale=0.7]{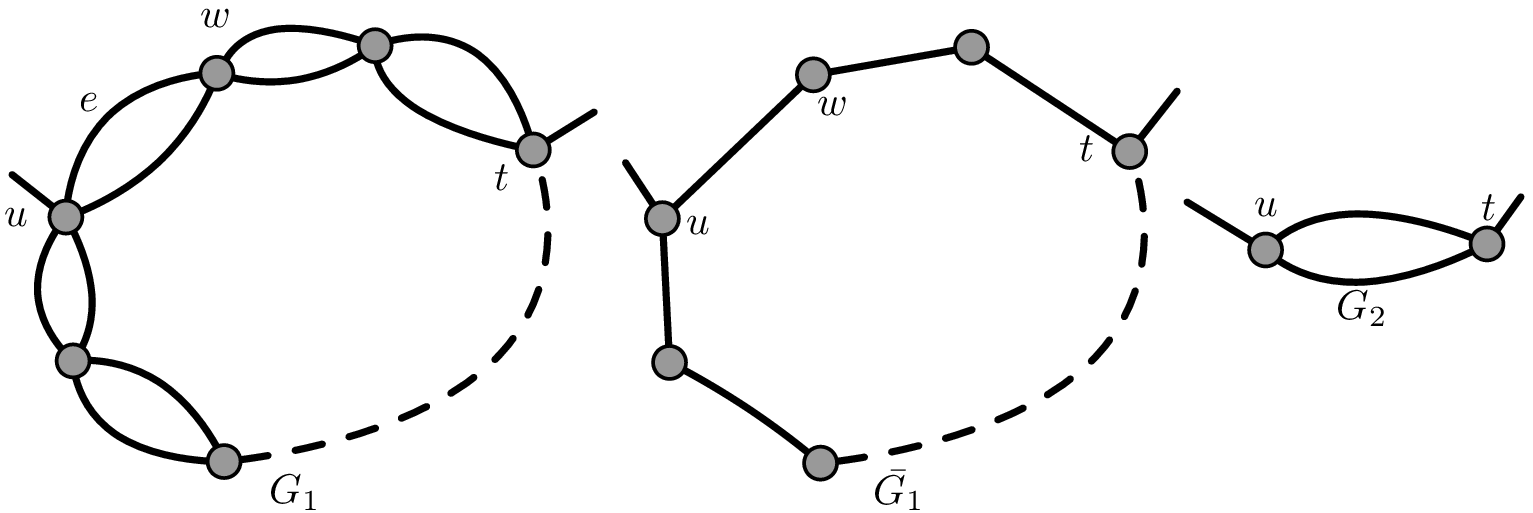}\hspace{-3mm}
    \caption{The graph $G$ and its 1- and 2-compression $G_1$ and $G_2$.\label{2compress}}
  \end{center}
\end{figure}

In the following, we will sometimes call the subgraph $Q$ of $G$ whose contraction yields a subgraph~$R$ of $G_2$ the {\em preimage} of $R$. 
If $R$ is an edge, call $Q$ a {\em piece}, and say $Q$ \emph{corresponds} to $R$. 
Furthermore, call $u$, $v$ \emph{ends} of $Q$ and other nodes of $Q$ \emph{internal nodes}.
If the edge was twin, call the piece \emph{twin}, otherwise, call the piece \emph{single}.
The blocks of a piece are cycles and paths, and the block graph of a piece is a path. 
Each cycle of a piece is called an {\em elementary cycle}.
For an elementary cycle $C$, call its two nodes $u_C$ and $v_C$ with neighbours outside $C$ {\em branch nodes}.
Call the two $u_C-v_C$-paths $P_1,P_2$ in $C$ the {\em handles} of $C$, which form the \emph{handle pair} $(P_1,P_2)$.
For an illustration, see the red and light blue edges in \autoref{2compress}.

The reason for defining $G_3$ is that intuitively selecting a node inside a piece corresponds to selecting the edge corresponding to the piece in $G_2$.
It will be simpler for us if our hitting set consists of only nodes, so we subdivide each edge of $G_2$.
Suppose that $S$ is the partial (and infeasible) hitting set for the cycles in \C\ at some point during the algorithm. 
Further, assume that~$G^S$ has even cycles, but none with at most two outside neighbours.
In this case, one can see that if an even cycle $C'$ in $G^S$ contains an internal node of some piece~$Q$, then $C' \cap Q$ is a path between the ends of $Q$.  
We illustrate this in \autoref{EvenCycInG}.
It follows that $C'$ has the form $ v_1 P_1 v_2 P_2 \hdots v_k P_k v_1 $, where for $i =1,\hdots,k$ the nodes $v_i,v_{i+1\mod k}$ are ends of some piece $Q_i$, and~$P_i$ is a $v_i$-$v_{i+1}$ path in $Q_i$.
For $i = 1,\dots,k$, the pieces $Q_i, Q_j$ 
for $i \neq j$ are disjoint except for their ends.
We will say that $C'$ in $G^S$ \emph{corresponds} to the cycle $C = (v_1,\hdots,v_k)$ in $G^S_2$.

\begin{wrapfigure}{r}{.35\textwidth}
  \begin{center}
    \includegraphics[scale=0.65]{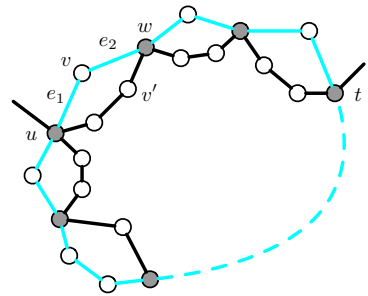}
    \caption{The light blue cycle in $G$ has two $u$-$t$ paths lying
      in different pieces of $G$; the dashed path has odd length.\label{EvenCycInG}}
  \end{center}
\end{wrapfigure}

For each such cycle $C$, its \emph{blended inequality} is
\begin{equation}
\label{eq:beq}\tag{$\circledast$}
  \sum_{v} a^C_v x_v \geq 1,
\end{equation}
where $a^C_v \in \{0, 1/2, 1\}$ for all nodes $v$, and where the support of $a^C$ is contained in the node set of the preimage of $C$. 
We next provide a precise definition of the coefficients of \eqref{eq:beq}.
With those, one can show that \eqref{eq:beq} is dominated by a convex combination of inequalities $x(C)\geq 1$ in \eqref{lp:p2}.

Consider an elementary cycle of the preimage of $C$ and let $h_1,h_2$ be its two handles. 
For each of these handles, we define its residual cost as the smallest residual cost of any of its internal nodes. 
Suppose that the residual cost of $h_2$ is at most that of $h_1$.
We will also call $h_1$ the {\em dominant}, and~$h_2$ the {\em non-dominant} handle of this cycle.   
As an invariant, our algorithm maintains that the designation of dominant and non-dominant handles of an elementary cycle does not change throughout the algorithm’s execution.

Suppose first that the residual cost of $h_1$ is strictly larger than that of $h_2$. 
In this case, let $a^C_v = 1$ for all internal nodes of handle $h_1$, and let $a^C_v = 0$ of the internal nodes of $h_2$.
If the residual cost of both handles is the same, we let $a^C_v = 1/2$ on internal nodes of both handles.

In certain cases, we need to {\em correct the parity} of the constructed inequality.
This is necessary if~$a^C$ as defined above is $0,1$ (i.e., if all elementary cycles of $C$ have a strictly dominant handle), and if the cycle formed by all dominant handles is odd.
In this case, we pick an arbitrary elementary cycle on $C$, and declare it {\em special}. 
For this special cycle, we then set $a^C_v=1$ for the internal nodes on {\em both} handles.
Following the same reasoning as Fiorini et al.~\cite{FioriniJP2010} for DHS, we can show the following for ECT:
\begin{lemma}
  Each feasible point of our LP \eqref{lp:p2} satisfies any blended inequality.
\end{lemma}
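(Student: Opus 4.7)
My plan is to prove the blended inequality valid by expressing $\sum_{v} a^{C}_{v} x_{v}$ as an expectation of LP constraints $x(C') \geq 1$ taken over a suitable distribution on even cycles $C'$ corresponding to $C$. From the discussion preceding the lemma, any such cycle $C'$ is fully determined by a handle assignment $\sigma: \mathcal{E} \to \{h^+, h^-\}$ on the set $\mathcal{E}$ of elementary cycles in the preimage of $C$: the $v_i$--$v_{i+1}$ path through each piece $Q_i$ is forced in every non-cycle block, and at each elementary cycle of $Q_i$ one picks one of the two handles. Consequently, every preimage node not internal to a handle---that is, the $v_i$, the branch nodes of elementary cycles, and internal nodes of path blocks---lies on every $C'_\sigma$; I read the definition of $a^C$ so that such nodes carry coefficient $1$.

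In the generic case, where no special cycle is declared, I fix $\sigma(E)=h^+_E$ at every strictly-dominated cycle and randomise over $\mathcal{E}_=$, the equal-handle elementary cycles. Writing $\delta_E \in \{0,1\}$ for the parity of the elementary cycle $E$ itself and $p_\emptyset$ for the parity of the all-dominant cycle, $C'_\sigma$ is even precisely when $\sum_{E \in \mathcal{E}_= \,:\, \sigma(E)=h^-_E} \delta_E \equiv p_\emptyset \pmod{2}$. This condition picks out a nonempty coset of an $\mathbb{F}_2$-linear subspace of $\{h^+,h^-\}^{\mathcal{E}_=}$ (nonempty because $C$ corresponds to at least one even cycle by the setup), and a short symmetry argument shows that sampling $\sigma$ uniformly from this coset yields marginals $\Pr[\sigma(E)=h^+_E]=\tfrac{1}{2}$ for every $E \in \mathcal{E}_=$, matching the blended coefficients exactly. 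Taking expectations gives $\mathbb{E}[x(C'_\sigma)] = \sum_v a^C_v x_v$, and the right-hand side is at least $1$ by LP feasibility.

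In the special case, all elementary cycles along $C$ are strictly dominated, $p_\emptyset$ is odd, and one elementary cycle $E^*$ on $C$ is declared special with both its handle coefficients raised to $1$. Since some even cycle must correspond to $C$ yet the strict-dominance assignment alone yields parity $p_\emptyset=1$, the preimage must contain an odd elementary cycle, and one verifies that the algorithm must pick $E^*$ to be such a cycle (otherwise no even cycle could be produced by swapping only at $E^*$). Then the cycle $C'_{E^*}$ obtained by taking the non-dominant handle at $E^*$ and the dominant handle everywhere else is even, and since every node of $C'_{E^*}$ has $a^C_v=1$---the internal nodes of $h^-_{E^*}$ because $E^*$ is special, the remaining nodes because they are either non-handle or dominant-handle internal---we obtain $\sum_v a^C_v x_v \geq x(C'_{E^*}) \geq 1$.

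The step I expect to be trickiest is the marginal computation in the generic case: one must verify carefully that the uniform distribution on the coset of admissible parity patterns really gives $\tfrac{1}{2}$ marginals on every coordinate of $\mathcal{E}_=$ (the arguments for $\delta_E=1$ and $\delta_E=0$ coordinates are slightly different), and handle separately the degenerate subcase in which every $E \in \mathcal{E}_=$ satisfies $\delta_E=0$, where $p_\emptyset$ is forced to be even and any product distribution with $\tfrac{1}{2}$ marginals suffices. Once this is pinned down, matching expected handle-indicators against the blended coefficients and verifying the special-case identity are routine bookkeeping.
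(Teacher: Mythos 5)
The paper offers no proof of this lemma—it simply defers to ``the same reasoning as Fiorini et al.\ for DHS''—so there is nothing explicit to compare against; I will evaluate your argument on its own terms.

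Your overall plan (write $\sum_v a^C_v x_v$ as at least an expectation of $x(C'_\sigma)$ over a distribution on even cycles in the preimage of $C$) is the natural one, and your treatment of the special case is sound: with all elementary cycles strictly dominated and $p_\emptyset$ odd, the cycle taking the non-dominant handle at $E^*$ and the dominant handle everywhere else is even, and since $a^C_v = 1$ on every node of it, a single LP constraint suffices.

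There are, however, two concrete problems in the generic case. First, a structural fact you overlook but that the argument needs: by the time the algorithm forms blended inequalities, $G^S$ has no even cycle with at most two outside neighbours, and an elementary cycle has exactly its two branch nodes as outside neighbours; hence \emph{every} elementary cycle is odd, so $\delta_E = 1$ for all $E$. This means the ``degenerate subcase in which every $E \in \mathcal{E}_=$ has $\delta_E = 0$'' that you set aside never occurs, and you also should not lean on it to justify nonemptiness. (Your stated reason for nonemptiness—``$C$ corresponds to at least one even cycle''—is also off: that even cycle may well use non-dominant handles at strictly-dominated elementary cycles, and such a $\sigma$ is not in the coset you sample from, because you have already frozen $\sigma(E) = h^+_E$ there.)

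Second, and more seriously, the claimed $\tfrac12$-marginals fail when $|\mathcal{E}_=| = 1$. With a single equal-cost elementary cycle $F$ and $\delta_F = 1$, the constraint $\sum_{E \in \mathcal{E}_=: \sigma(E)=h^-} \delta_E \equiv p_\emptyset$ determines $\sigma(F)$ uniquely, so the ``uniform coset'' is a single point and the marginal on $F$ is $0$ or $1$, not $\tfrac12$. Consequently the expectation does not reproduce the $\tfrac12$ coefficients on $F$'s handles, and in fact there is no convex combination of even-cycle indicators pointwise dominated by $a^C$: any admissible cycle must put weight $1$ on one handle of $F$ (where $a^C = \tfrac12$), and any cycle avoiding that handle must traverse a non-dominant handle of some strictly-dominated elementary cycle (where $a^C = 0$). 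So the argument as written does not go through for $|\mathcal{E}_=| = 1$; you would need an additional parity correction (analogous to the special-cycle rule, but applied when a single equal-cost elementary cycle remains) or an entirely different derivation to cover this case, and you should make that explicit rather than fold it into a ``routine bookkeeping'' claim.
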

In our algorithm, we assume that inequalities \eqref{eq:beq} are part of \eqref{lp:p2}. 
Throughout the algorithm, we increase dual variables $y_{\circledast}$ of such inequalities. 

We will sometimes say that variable $y_{\circledast}$ (or cycle~$C$) \emph{pays for} $\sum_{v \in S'} a^C_v$ hit nodes.
It is well-known (see, e.g., Goemans and Williamson~\cite{GoemansW1998}) that if during any iteration dual variables for a family of blended
inequalities are incremented uniformly, and the dual variables pay for $\alpha$ hit nodes (of $S'$) on average, then the final solution produced by the algorithm is $\alpha$-approximate. 

The motivation for blended inequalities is to pay for no more than one
node in each piece.  Consider the example in \autoref{PieceCounterex}.
Here, the bottom black dashed path is odd, there are an even number of
handle pairs in the top part, and $\varepsilon$ is small.  Suppose we
set $a^C_v= 1/2$ on internal nodes of each handle.  If we were to
increment the inequality \eqref{eq:beq}, all the blue nodes of
weight~$1$ would become tight, and after reverse-delete, the algorithm
would keep one blue node for each handle pair.  However,
selecting a red node and a blue node would be a
cheaper solution.  This could be achieved by setting $a^C_v = 1$ for
red and black nodes, and $a^C_v = 0$ on blue nodes,
until the residual costs of the red nodes become 1, and
afterwards setting $a^C_v = 1/2$ on internal nodes of each handle.

During its execution, the algorithm carefully chooses a family of
even cycles \C\ in $G^S_2$ and increments the dual variables of
certain blended inequalities for each $C \in \C$ until a node becomes
tight, or the blended inequality changes; i.e. the residual costs of
two handles of a handle pair, which were previously not equal, become
equal.

In their primal-dual algorithms for cycle transversal problems with
uncrossing property, Goemans and Williamson~\cite{GoemansW1998}
started with the infeasible ``hitting set'' $S= \emptyset$.  While~$S$
is infeasible, the dual variables for faces of the residual digraph
that are cycles are incremented.  A reverse-delete step is applied at
the end. The authors show that tight examples for their algorithm
feature so called {\em pocket} subgraphs.  Not surprisingly, the
improved algorithm of Berman and Yaroslavtsev~\cite{BermanY2012} has
to pay special attention to these pockets to obtain the improvement in
performance guarantee.

\subsection{Pockets and their variants}
The following definition of crossing cycles was elementary to the
approach by Goemans and Williamson~\cite{GoemansW1998} for cycle
transversal problems in planar graphs.

\begin{definition}
\label{cross}
  In an embedded planar graph,  two cycles $C_1,C_2$ \emph{cross} if $C_i$ contains an edge intersecting the interior of the region bounded by $C_{3-i}$, for $i=1,2$.
  That is, the plane curve corresponding to the embedding of the edge in the plane intersects the interior of the region of the plane bounded by~$C_{3-i}$. 
  A set of cycles $\mathcal{C}$ is \emph{laminar} if no two elements of $\mathcal{C}$ cross. 
\end{definition}

Next, we formally define pockets, and
we also introduce the new notion of ``pseudo-pockets'', the lack of which will help us ``cover'' our graph with even cycles.

\begin{definition}
  Let $G$ be a graph and let $\mathcal C$ be a collection of cycles in $G$.
  A \emph{pseudo-pocket} of $(G,\mathcal C)$ is a connected subgraph $G'$ of $G$ which contains a cycle such that at most two nodes of $G'$ have neighbours outside $G'$. 
  A \emph{pocket} of $(G,\mathcal C)$ is a pseudo-pocket that contains a cycle of $\mathcal{C}$.
  A pocket is \emph{minimal} if it contains no pocket as a proper induced subgraph.
\end{definition}
\begin{figure}[h]
  \centering
  \includegraphics[scale=0.8]{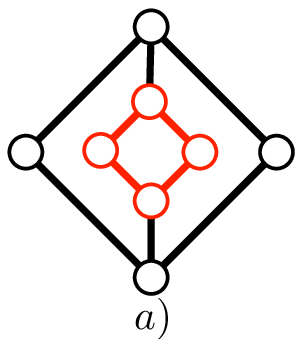}
  \includegraphics[scale=0.8]{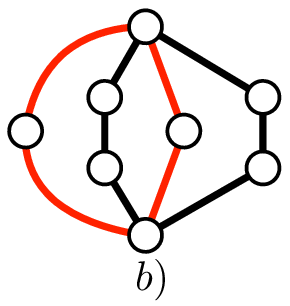}
    \caption{(a) Graph formed by red nodes is a
    pocket. (b) Crossing cycles in red and black.\label{pocket}}
\end{figure}

\subsection{Identifying families of even cycles via tilings}
The $12/5$-approximation algorithm of Berman and Yaroslavtsev~\cite{BermanY2012} for {\sc Directed FVS} in node-weighted planar digraphs $G$ proceeds roughly as follows. 
 
It starts with the empty hitting set $S=\emptyset$.
As long as $S$ is not a hitting set for the directed cycles of $G$, it first looks for a pocket $H$ of the residual digraph $G^S$, that is the digraph obtained from $G-S$ by deleting all nodes not on a directed cycle. 
It then increments the dual variables for the set of face minimal directed cycles of $H$, which happen to be faces. 
It then adds any nodes that become tight to $S$. 
Once $S$ is feasible, the algorithm performs a reverse deletion step.
 
As pointed out, in our setting, face-minimal even cycles may not be faces, and may cross.
Following Berman and Yaroslavtsev~\cite{BermanY2012},  we wish to ``cover'' our residual graph with face-minimal even cycles which do not cross, we call this a ``tiling''; see \autoref{tiletomat} iii).
As we will see, this tiling allows us to identify the dual variables to increase. 
Let us formalize the correspondence between edges of the dual between odd faces and even faces.

\begin{definition}\label{match-tile}
  Let $H$ be a plane graph without pseudo-pockets. 
  For each face $f$ of $H$, let~$v_f$ be the corresponding node of the planar dual $H^*$.
  A \emph{tile} of $H$ is an even cycle $C$ of $H$ bounding one or two faces.
  If $C$ is a single face $f$, we say that $C$ \emph{corresponds} to the node $v_f$.
  If $C$ bounds two faces $f$ and $g$, we say that $C$ corresponds to the edge $v_f v_g \in E(H^*)$.
  We say that nodes $v_f,v_g$ and the faces $f,g$ are \emph{covered} by the tile.
\end{definition}
For a node $v$ of $H^*$, let $f_v \subset E(H) $ be the edges on the
boundary of the corresponding face of~$H$.
Denote by~$h_\infty$ the node of $H^*$ corresponding to the infinite face.

Given $ wh_\infty \in E(H^*)$, a cycle $C^1 \subset E(H)$ \emph{corresponds} to $w v_\infty$ if $C^1$ is a cycle of $f_w \Delta f_{h_\infty}$, or $C^1= C' \Delta f_w$ and $C'$ is a cycle of $f_w \Delta f_{h_\infty}$.
We also call such a cycle $C^1$ a \emph{tile} and say that $C^1$ covers $h_\infty$, $w$, and the corresponding faces.

Given a matching $E' \subset E(H^*)$ and $V' \subset V(H^*)$, with $E'= \{e_1,\hdots,e_\ell\}$ and $V'=\{v_1,\hdots,v_t\}$, a set of tiles $ \calT = \{C_1,\hdots,C_{\ell+t}\}$ \emph{corresponds} to $E' \cup V'$ if $C_i$ corresponds to $e_i$ for $i = 1,\hdots,\ell$ and~$C_{j+\ell}$ corresponds to $v_j$ for $j = 1,\hdots,t$.

In \autoref{tiletomat} i), cycle $C$ bounds two faces $f$ and $g$; see also \autoref{tiletomat} ii).

\begin{figure}[h]
  \centering
  \includegraphics[scale=0.8]{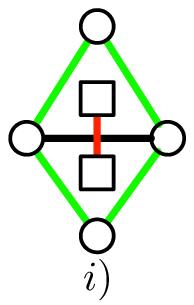}
  \includegraphics[scale=0.8]{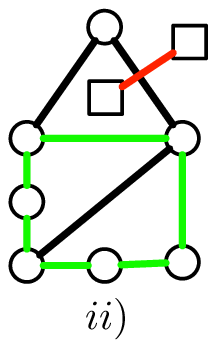}
  \includegraphics[scale=0.8]{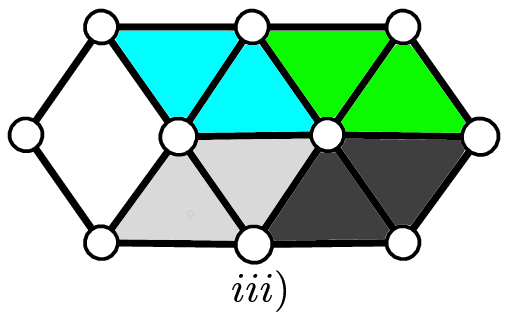}
  \includegraphics[scale=0.8]{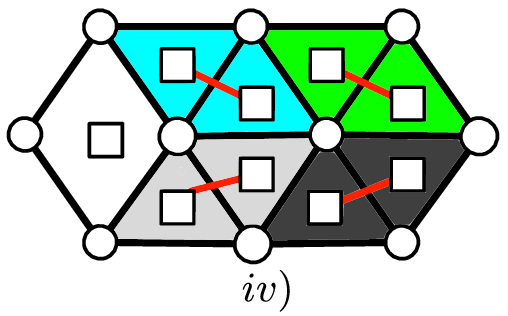}
  \caption{Diagrams i) and ii) show cycles in green and corresponding edges of the dual graph in red.
      (i) The red edge corresponds to the symmetric difference of two finite faces. 
      (ii) The red edge corresponds to the symmetric difference of a finite and infinite face.
      Diagrams iii) and iv) show a tiling indicated by the boundaries of the  various finite regions in white, light grey, etc and the corresponding matching.
    \label{tiletomat}}
  \end{figure}
\begin{definition}
  For a plane graph $H$, a set $\calT$ of tiles is a \emph{pseudo-tiling} if no face of $H$ is covered by more than one tile.
  If the node $ v_{h_\infty} $ corresponding to the infinite face of $H$  is not covered by $\calT$, we call $\calT$ a \emph{tiling}. 
\end{definition}

Certain tilings are particularly desirable; we will define these the next. 
\begin{definition}
\label{quasiperdef}
  Let $\alpha\in(0,1)$.
  A tiling is \emph{$\alpha$-quasi-perfect} if it covers all even finite faces, a $\beta$-fraction of odd finite faces of~$G^S$, and a
  $\psi$-fraction of the finite faces of $G^S$ are even, where 
  \begin{equation}
    \beta (1-\psi)+2\psi \geq \alpha.
  \end{equation}
\end{definition}

Let $C$ be an even cycle in $G^S_2$, and recall that we say that $C$
pays for $\sum_{v \in S} a^C_v$ hit nodes.  For an even cycle in a
tiling consisting of two faces, we bound the number of hit nodes it
pays for by the number of hit nodes each face pays for.

We will show that a finite face of our graph intersects at most $18/7$
hit nodes on average (over all finite faces). 
Ideally, we would want to cover all faces by a tiling.  
Then an even cycle of our tiling is incident to at most $36/7$ hit nodes on
average, twice the amount a face of our graph intersects on average.
Alas, tilings covering all faces need not always exist.
Thus, we try to find a tiling that covers as many finite faces as possible.  
Suppose that we find a tiling $\calT$ that covers a set $ \calT_{\textsf{Faces}} $ of finite faces consisting of $\alpha$-fraction of the finite faces of our graph.  
It follows that a face of $ \calT_{\textsf{Faces}} $ will be incident to
at most $18/7\alpha$ hit nodes on average, and so an even cycle of the
tiling $\calT$ is incident to at most $36/7 \alpha$ hit nodes on
average.  
Intuitively, even faces pay for fewer hit nodes than even cycles containing two faces, so it is good if a tiling contains many even faces. 
The motivation for quasi-perfect tilings is that it is good if a large fraction of faces are covered by the tiling and if the tiling contains a lot of even faces. 
We prove the following key result in \autoref{2o3quasitil}.

\begin{restatable}{theorem}{thmcmd}\label{2o3quasperexist}
Let $H$ be a $2$-compression of some planar graph $G$  that has an even cycle and contains no pockets.
Then $H$ has a $2/3$-quasi-perfect tiling.
\end{restatable}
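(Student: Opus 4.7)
The plan is to reduce constructing a $2/3$-quasi-perfect tiling to a matching problem on the planar dual of $H$, and then to bound the resulting matching deficiency via the no-pockets hypothesis. Write $F$ for the finite faces of $H$, decompose $F = E_f \sqcup O_f$ into even and odd faces, and set $n = |F|$, $\psi = |E_f|/n$. Two structural observations simplify things: first, each even face is already a valid single-face tile, so the base tiling $\calT_0$ that places every even face as its own tile is valid and covers all of $E_f$; second, because any cycle through a twin edge is declared even, no odd face can be incident to a twin edge. Consequently, for any two adjacent odd faces $f, g$ sharing an edge $e$, the combined cycle $\partial f \Delta \partial g$ has preimage length $|\partial f| + |\partial g| - 2|e|$, which is even, so it is a valid two-face tile. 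Thus every pair of adjacent odd finite faces is a good pair, and any matching $M$ in the auxiliary planar graph $\mathcal D := H^*[O_f]$ extends $\calT_0$ to a tiling covering all of $E_f$ and exactly $2|M|$ odd faces.

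Unpacking the quasi-perfect condition with $\beta = 2|M|/|O_f|$, the inequality $\beta(1-\psi) + 2\psi \geq 2/3$ simplifies to $|M| \geq n/3 - |E_f|$. In the easy case $\psi \geq 1/3$ this threshold is non-positive and $\calT_0$ alone suffices. In the remaining case, by the Tutte--Berge formula a matching of the required size exists whenever $\mathrm{odd}(\mathcal D - S) - |S| \leq n/3 + |E_f|$ for every $S \subseteq O_f$.

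To establish this deficiency bound I would argue by contradiction. Assume some $S$ violates it, and let $K_1, \dots, K_q$ be the odd components of $\mathcal D - S$ with $q - |S| > n/3 + |E_f|$. Each $K_i$ is a connected set of odd faces, and the subgraph $H_i \subseteq H$ consisting of the edges and nodes bounding the faces of $K_i$ is connected and carries at least one cycle (the boundary of any face of $K_i$). The outside nodes of $H_i$ -- those with neighbors in $H$ lying outside $V(H_i)$ -- correspond to the boundary shared with faces in $S \cup E_f \cup \{h_\infty\}$. A double-counting argument over all $q$ components, exploiting the planarity of $H$ and the minimum degree at least three inherited from 2-compression, shows that when $q - |S|$ exceeds $n/3 + |E_f|$ at least one $H_i$ must have at most two outside nodes. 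That $H_i$ is then a pocket of $H$, contradicting the no-pockets hypothesis.

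The main obstacle is this last step: translating the Tutte--Berge witness in the dual into a primal pocket. The difficulty is to precisely control how each component's boundary is distributed among $S$, $E_f$, and $h_\infty$, and to carry through the topological bookkeeping for components whose boundary runs along the infinite face or is affected by twin-edge adjacencies. I expect this to require a careful charging scheme that distributes the outside-node budget of $H$ across the $K_i$'s, and it is here that the full structural strength of the no-pockets hypothesis is invoked; the remainder of the argument is then a direct application of matching theory to the planar dual.
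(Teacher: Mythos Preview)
Your plan is essentially the paper's: reduce to a matching among odd faces in the dual, invoke Tutte--Berge, and derive a contradiction from the no-pockets hypothesis. Your observation that odd faces avoid twin edges (so any two adjacent odd faces merge to an even tile) and your arithmetic reduction to $|M|\ge n/3-|E_f|$ are both correct.

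Two differences are worth flagging. First, the paper does not work directly with $H^*[O_f]$; it first proves a $2/3$-\emph{pseudo}-perfect pseudo-tiling (allowing the infinite face to be covered) exists, and then separately converts this to a genuine tiling by a sharpening of the Euler-formula bound when $h_\infty$ is forced into the Tutte set. Your direct route should work too, but because you always discard $h_\infty$ you will need the same sharpening (roughly, $3|Z|-4$ rather than $3|Z|-3$ edges) to close the constants.

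Second, your ``double-counting via planarity and minimum degree three from 2-compression'' is aimed at the wrong graph. The primal minimum degree plays no role here. The paper's argument lives entirely in the dual: contract each odd component $K_i$ to a single node $j_i$, then add edges maximally among the separating set $Z=S\cup E_f\cup\{h_\infty\}$ while keeping the graph planar and avoiding length-two faces. Maximality forces every $z\in Z$ to have at least as many $Z$-neighbours as $J$-neighbours, whence $\sum_i\deg(j_i)\le 2|E(H'[Z])|\le 6|Z|-O(1)$ by Euler. On the other side, the no-pockets hypothesis is used to show each $j_i$ has degree at least three (with at most two exceptions coming from the boundary of $H$), giving $\sum_i\deg(j_i)\ge 3q-2$; combining yields $q\le 2|Z|-O(1)$, which is the contradiction. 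So the structure is not ``find one $H_i$ with two outside nodes'' but rather ``at most two $H_i$'s can have two outside nodes, and the remaining degree-three lower bound collides with Euler's upper bound.''

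Finally, a small gap: when a component $K_i$ has at most two dual neighbours, the corresponding primal subgraph $H_i$ is a priori only a \emph{pseudo}-pocket (it might contain no even cycle). The paper closes this with a separate lemma showing that in a 2-compression every pseudo-pocket contains an even cycle and is therefore a genuine pocket; you will need this step as well.
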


\subsection{The algorithm in detail}

We can now formally state our algorithm.  It takes as input a planar
graph $G$ with cost function $c:V(G)\rightarrow\mathbb N$.  Let
$\C(G)$ denote the set of even cycles of~$G$, and let
$\mathsf{opt}(G,c)$ denote the minimum cost of an \emph{even cycle
  transversal} of~$G$, which is a set of nodes intersecting every
cycle in $\mathcal C(G)$.

As we will see, the algorithm returns an even cycle transversal $S$ of
$G$ whose cost is at most $(47/7)\mathsf{opt}(G,c)$.  We start with
the empty candidate $S:= \emptyset$.  In each iteration, the
algorithm looks for an even cycle $C$ in the residual graph
$G^S$ such that at most two nodes of~$C$ have outside neighbours.  If
we find such $C$, increment the variable~$y_C$ until a node becomes
tight.  If no such cycle exists, the algorithm computes the
2-compression of~$G^S$, and in it, we find an inclusion-minimal pocket
$H$ of $G^S_2$.  Using \autoref{2o3quasperexist}, we find a
$2/3$-quasi-perfect tiling~$\tilT_H$ of $H$ and increments the
dual variables for the blended inequalities for each
$C \in \tilT_H$.  The algorithm then adds all nodes $X$ that
became tight to our candidate hitting set $S$.

During an iteration, for each handle pair $(Q_1,Q_2)$ for which the set $X$ of nodes that became tight contains a node in the interior of each handle, our algorithm will choose two nodes $a,b \in X$ with $a$ in the interior of $Q_1$ and $b$ in the interior of $Q_2$ and define $(a,b)$ to be a \emph{node pair}.
For instance, in \autoref{2compress} if $v$ and $v'$ are the only
nodes added during some iteration then the algorithm would define
$(v,v')$ to be a node pair.  For a set of nodes $X$ added during the
same iteration, nodes in a pair are considered to be added {\em before} any
node not in a pair.

At the end of the algorithm, we perform a non-trivial reverse-delete procedure.
Formally, let $w_1,\hdots,w_\ell$ be the nodes of $S$ in the order they were added to $S$ by the algorithm, where for nodes $w_i,w_j$ that were added during the same iteration if $w_i$ is in a pair and $w_j$ is not, then $i<j$. 
That is, for reverse-delete purposes, nodes not in a pair are considered for deletion first.
For $p=\ell,\ell-1,\hdots,1$, if $w_p$ is not in a node pair, then if $S \backslash \{w_p\}$ is a feasible ECT, the algorithm deletes $w_p$ from $S$; otherwise, it does not. 
If $w_p$ is in a node pair $(w_p,w')$, then if $S \backslash \{w_p,w' \}$ is a feasible hitting set, then delete both $w_p,w'$ from $S$; else, keep both~$w_p,w'$.

\begin{algorithm}[ht]
  \caption{EvenCycleTransversal$(G,c)$\label{tiling}}
  \SetAlgoVlined
  \SetKwInOut{Input}{Input}\SetKwInOut{Output}{Output}
  \SetKw{KwDownTo}{downto}
  \Input{A planar graph $G$ with node costs $c:V(G)\rightarrow\mathbb N$.}
  \Output{An even cycle transversal $S$ of $G$ of cost at most $\frac{47}{7}\mathsf{opt}(G,c)$.}
  $S \leftarrow \emptyset$ \\
  \While{residual graph $G^S$ contains an even cycle}{
	  \eIf{$G^S$ contains a cycle $C$ with at most $2$ outside neighbours}{
		  increase the dual variable $y_C$ for $C$ until a node $v$ becomes tight.
  	}{
		compute the 2-compression $G^S_2$ of $G^S$.
	}
	$H\leftarrow$ minimal pocket of $G^S_2$.\\
  $\tilT_H \leftarrow$ a $2/3$-quasi-perfect tiling of $H$.\label{alg:qpt}\\
  Increment dual variables of blended inequalities of all $C \in \tilT_H $ until a node $v$ becomes tight or the blended inequality changes.\\
  Add all nodes that became tight to $S$. Denote by $X$ the set of nodes that became tight. 
  \For{ each handle pair $(Q_1,Q_2)$ }{ 
  \If{$X$ contains a node in the interior of each handle }{choose two nodes $a,b \in X$ with $a$ in the interior of $Q_1$ and $b$ in the interior of $Q_2$ and define $(a,b)$ to be a node pair.}{} }      
}
$w_1,\hdots,w_\ell\leftarrow$ nodes of $S$ in the order they were added, where for nodes $X$ added during the same iteration, any node of $X$ in a pair appears before others node of $X$ not in pairs.
\For{ $i = \ell$ \KwDownTo $1$}{
	\eIf{$w_i$ is not part of a pair}{ 
		\If{$S \backslash \{ w_i \}$ is feasible}{
			$S \leftarrow S \backslash \{ w_i \} $.
		}
	}{
		Let $(w_i,w_j)$ be the pair containing $w_i$.
		\If{$S \backslash \{ w_i, w_j \}$ is feasible}{
			$S \leftarrow S \backslash \{ w_i, w_j \} $.
		}
	}
}
\Return{$S$}
\end{algorithm}

The intuition behind the caveat in our reverse-delete step is that node pairs are often very useful to keep, because they disconnect a piece. 
Consider the example in \autoref{VPairCounterEx}. 
\begin{figure}[h]
  \centering
  \includegraphics[scale=0.8]{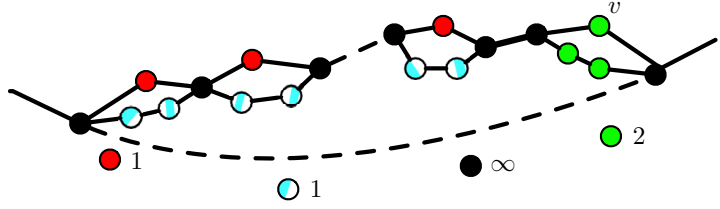}
  \caption{The red and blue striped nodes have weight $1$, black nodes have infinite weight and green nodes have cost $2$.
    The bottom dashed black path has odd length. 
    The number of length-5 faces at the top is assumed to be even.
    \label{VPairCounterEx}}
\end{figure}
There is a piece with green nodes of cost $2$, and an odd number of
length-5 faces with red and blue striped nodes of cost 1.  The black
nodes have cost infinity.  The bottom dashed path has odd length.  In
the 2-compression, all length-5 faces in the figure
belong to one piece.  Suppose for the blended inequality we choose the
length-5 face with the green nodes as the special cycle, and
we increment this blended inequality.  One 
sees that the red, blue striped and green nodes become tight
simultaneously.

To see that reverse delete orders need to be chosen carefully,
consider the following adversarial ordering: in reverse delete,
consider the two green nodes other than $v$ first, then consider the
red nodes, and then consider one blue striped node on each
handle. Finally, consider the remaining blue striped nodes.  One can
see that the algorithm would end up with $v$ and one blue striped node
per handle, which is significantly more costly than the optimum which
selects the solution consisting of one red and one blue striped node
on a handle pair.  This completes the description of our approximation
algorithm for ECT, whose complete pseudo-code is given as
\autoref{tiling}.

\subsection{Analysis of approximation ratio}
We claim that the algorithm is a $47/7$-approximation for ECT on node-weighted planar graphs.

Fix an input planar graph $G$ with node costs $c_v\in\mathbb N$.
Consider a set $S\subseteq V(G)$ of nodes and a node $v \in S$.  A
cycle $C$ is a \emph{pseudo-witness cycle} for $v$ with respect to $S$
if $C \cap S = \{v\}$.  If $C$ is additionally even, then $C$ is a
\emph{witness cycle} for $v$.  Note that if $S$ is an
inclusion-minimal ECT for~$G$, then there is a set $W_v$ of witness
cycles for each node in $v\in S$.  If the reverse-delete procedure
does not delete any node of $S$, then each node not in a pair has a
witness cycle and for each pair, at least one of the nodes in the pair has a
witness cycle.

The analyses of the algorithms by Goemans and Williamson~\cite{GoemansW1998}, and by Berman and Yaroslavtsev~\cite{BermanY2012}, for {\sc Subset FVS} on planar graphs rely
crucially on the fact that, each node of an inclusion-wise minimal
solution has a witness cycle.  Goemans and
Williamson~\cite{GoemansW1998} showed that one can find a laminar
collection
$\mathcal{A}$
of witness cycles. Laminar families are well-known to have a natural
tree representation. The key argument by Goemans and Williamson~\cite{GoemansW1998}, and by Berman and Yaroslavtsev~\cite{BermanY2012}, is that for each {\em leaf} cycle $C$
of the laminar family, 
one can increment the dual variable of at least one face contained in
the region defined by $C$. 
Further, this dual variable 
pays only for the hit node that $C$ is a witness of.  This is used to
argue that a large portion of the dual variables they incremented pay
for a single hit node.  An additional bound on how many nodes the
other dual variables pay for is proven exploiting the sparsity of
planar graphs.

For the ECT problem, however, we do not have laminar witness cycles.
Instead, we must extend the analysis of Berman and
Yaroslavtsev~\cite{BermanY2012} to find a set of laminar
pseudo-witness cycles.

Consider some time $\bar{t}$ during the algorithm when applied to $(G,c)$.
Let $S_{\bar{t}}$ be the current hitting set and $G^{S_{\bar{t}}}$ the residual graph.
Let $\{\sum_{v \in V(G)} a^C_v  \geq 1 \}_{C\in \mathcal{L}}$ be the set of inequalities of the increased dual variables.
Here, $\mathcal{L}$ will be either a single cycle of $G^{S_{\bar{t}}}$, or a tiling of $G^{S_{\bar{t}}}_2$.
We wish to show that the primal increase rate towards the final set $S'$ at time $\bar{t}$,   $ \sum_{C \in \mathcal{L} } \sum_{v \in S'} a^C_v $  is at most $47/7$ times the dual increase rate $|\mathcal{L}|$. 

If the algorithm incremented $y_C$, where $C$ was a cycle of $G$ for which at most two nodes have outside neighbours, then the inequality we increase is $ \sum_{v \in C} x_v \geq 1 $.
As $S'$ is minimal under reverse-delete, $|C \cap S'| \leq 2$, and hence the primal increase rate $\sum_{v \in S'} a^C_v = |C \cap S'| $ is at most twice the dual increase rate 1.

Otherwise, if the algorithm did not increment $y_C$, then there is no
cycle $C$ of $G^{S_{\bar{t}}}$ such that at most two nodes of $C$ have
neighbours outside $C$.  Hence, the set of increased inequalities are
the blended inequalities of a tiling $\tilT_H$ of an inclusion-minimal
pocket $H$ of ~$G^{S_{\bar{t}}}_2$.  For a cycle $C$
of~$G^{S_{\bar{t}}}_2$, let
$\sum_{v \in V(G^{S_{\bar{t}}})} a^C_v \geq 1$ be the blended
inequality $C$ (see \autoref{eq:beq}).

Recall that informally speaking, we wish to pay for at most one hit node inside a piece.
To do this, we need the following theorem which generalizes a result by Fiorini et al.~\cite[Theorem 5.7]{FioriniJP2010} and tells us the structure of a minimal solution within a piece.

\begin{theorem}
\label{node piece}
  Let $S'$ be the output of \autoref{tiling} on input $(G,c)$. 
  Consider an edge $uw \in E(G^{S_{\bar{t}}}_2)$ on the
  even cycle whose dual variable we increase, 
  and let $Q$ be the piece corresponding to $uw$ in~$G$.
  Then exactly one of the following occurs:
  \begin{enumerate}
    \item $S'$ contains no internal node of $Q$,
    \item $S'$ contains exactly one node of $Q$, and this node is a cut-node of $Q$,
    \item $S'$ contains exactly two nodes of $Q$, and they belong to opposite handles of a cycle of $Q$,
    \item $S'$ contains exactly one node per elementary cycle of $Q$, each belonging to the interior of some handle of the corresponding cycle.
  \end{enumerate}
\end{theorem}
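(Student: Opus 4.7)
The plan is to analyze $T := S' \cap \mathrm{int}(Q)$ by combining the minimality of $S'$ under the pair-sensitive reverse-delete procedure with the structural constraints of $Q$. The piece $Q$ decomposes into blocks $B_1,\ldots,B_m$ which are elementary cycles and path-blocks, concatenated through cut-nodes $u=c_0,\ldots,c_m=w$; consequently every internal node of $Q$ is either a cut-node of $Q$ (a $c_i$ for $1 \le i \le m-1$ or an internal node of a path-block) or an internal node of some handle of an elementary cycle. Any even cycle of $G$ meeting $\mathrm{int}(Q)$ is either an elementary cycle $E_i$ of $Q$, or a ``transit cycle'' that uses a $u$-$w$ path inside $Q$ (choosing one of the two handles in every elementary cycle and traversing each path-block entirely) together with an external $u$-$w$ path in $G\setminus\mathrm{int}(Q)$.

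The central tool is a witness characterization extracted from the reverse-delete step applied to $S'$: every unpaired $v\in S'$ admits an even cycle $C_v$ with $C_v\cap S'=\{v\}$, and every retained pair $(v,v')\subseteq S'$ admits an even cycle $C$ with $C\cap S'\subseteq\{v,v'\}$. From this I first derive a ``one-node-per-handle'' lemma: two nodes of $T$ on the same handle $h$ are impossible, since any candidate witness for the later-processed of the two would either be the entire elementary cycle (which then contains the other node) or must traverse the portion of $h$ from that node to one of its branch nodes, hence necessarily passes through the other node; pair members cannot rescue this configuration because by construction they lie on opposite handles of the same elementary cycle.

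The proof then proceeds by a case analysis on $|T|$. The case $|T|=0$ is case~1. For $|T|=1$ I show that the unique node is a cut-node (case~2), unless $Q$ has only one elementary cycle (case~4 with a single handle node); the key observation is that any ``middle'' elementary cycle $E_j\neq E_i$ connects to $V(G)\setminus\mathrm{int}(Q)$ only through cut-nodes of $Q$ that are not in $T$, so if $E_j$ is even it cannot be hit by $S'$, violating feasibility. For $|T|\ge 2$ I split according to whether $T$ contains a cut-node: a cut-node $v$ blocks all transit cycles, so any second node $v''\in T$ has an elementary cycle as its only possible witness, and the reverse-delete ordering (non-pair nodes processed first) together with the one-per-handle lemma then exhibits either $v$ or $v''$ as reducible, a contradiction; if $T$ contains only handle nodes, the requirement of blocking every $u$-$w$ transit forces some $E_i$ to have both its handles intersected by $T$, yielding either case~3 (two nodes on opposite handles of $E_i$, kept as a node pair) or, by the one-per-handle lemma combined with the feasibility requirement that every even elementary cycle be internally hit, case~4. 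The main obstacle will be ruling out hybrid configurations, such as a cut-node coexisting with a handle node, or two opposite-handle nodes in one cycle plus a single handle node in another; each contribution can look individually necessary, and disentangling them requires careful use of the pair-sensitive reverse-delete ordering together with the specific witness structure forced by the even cycle through $uw$ whose dual variable is being incremented.
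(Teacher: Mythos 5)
Your one-node-per-handle lemma is a genuine improvement in rigor: the paper simply asserts ``a handle can only contain one hit node of $S'$'' without argument, and your observation that any witness for the later-processed of two same-handle nodes must traverse the other node is the right justification. However, the plan has a genuine gap where it matters most: you never invoke the parity-swap argument that is the engine of case 4. When the algorithm reaches the 2-compression branch, $G^{S_{\bar{t}}}$ has no even cycle with at most two outside neighbours, so every elementary cycle of a piece is odd; consequently the two handles of every handle pair have lengths of opposite parity, and elementary cycles themselves are never even. The paper uses this to show that if some handle pair $(Y_1,Y_2)$ carries no hit node while another handle $P_1$ does, then replacing $P_1$ by $P_2$ and $Y_i$ by $Y_{3-i}$ in the witness cycle for the hit node changes the length by two odd summands, preserves evenness, and yields an even cycle that misses $S'$ entirely --- a contradiction. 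Your replacement for this step, ``if $E_j$ is even it cannot be hit by $S'$, violating feasibility,'' is vacuous precisely because $E_j$ is always odd in this regime. Similarly, the premise that $S' \cap \mathrm{int}(Q)$ must ``block every $u$-$w$ transit'' is not a real requirement: an unblocked transit through $Q$ produces an infeasibility only when it completes to an \emph{even} cycle of $G$ that also avoids $S'$ outside $Q$, which is exactly what the parity-swap construction delivers rather than something available for free. Together with your own acknowledgment that the hybrid configurations remain unresolved, the case analysis cannot be closed as written; reworking the handle-interior cases around the parity-swap is required.
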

\begin{proof}
  If $S'$ contains two nodes $a$ and $b$ in the interiors of different handles of a pair, then since removing both $a$ and $b$ disconnects $u$ from $w$ in $Q$, our algorithm would delete all other nodes of $V(Q) \backslash \{ u,w\}$ from $S'$.
  If $u$ or $w$ were in $S'$, then our algorithm would delete both $a$ and $b$.
  Thus $u,w \notin S'$, and case 3 holds.  

  Similarly, if $S'$ contains a cut node $z$, then since removing $z$ disconnects from $u$ from $v$ in $Q$, our algorithm would delete all other nodes of $V(Q) \backslash \{u,v\}$ from $S'$.
  If $u$ or $w$ were in $S'$, then our algorithm would delete $z$.
  Thus $u,w \notin S'$, and case 2 holds.

  If $u$ or $w$ is in $S'$, then for any $r \in S' \cap ( V(Q) \backslash \{ u,w \} ) $ there cannot be an even cycle of $G$ which intersects $S'$ only at $r$ as such a cycle would have to go through $u$ or $w$, and thus~$S'$ contains no internal node of $Q$ and case 1 holds. 

  Assume that cases 1,2 and 3 do not hold, so $u,w \notin S'$.
  Let $(P_1,P_2)$ be a handle pair on~$Q$ such that $P_1$ contains a hit node $t$ in its interior and $P_2$ does not. 
  Suppose that $Y_1,Y_2$ was another handle pair with no hit node on either of $Y_1$ or $Y_2$. 
  By our deletion procedure, there must be an even cycle $C$ which intersects $S'$ at $t$ only. 
  Such a cycle $C$ uses the handle $P_1$ and one handle $Y_i$ of the pair $Y_1,Y_2$. 
  Let $C'$ be the cycle obtained from $C$ by replacing the paths $P_1$ and $Y_i$ in $C$ by the paths $P_2$ and $Y_{3-i}$. 
  Since the lengths of different handles of a pair have different parity, $C'$ is even.  
  Since $P_2,Y_1$ and $Y_2$ contain no nodes of $S'$, $C'$ contains no nodes of $S'$, which is a contradiction. 
  Since a handle can only contain one hit node of $S'$, this implies that case 4 holds.
\end{proof}

Given a hitting set $S'$ 
output by  \autoref{tiling}, we wish to
construct a corresponding hitting set for $G^{S_{\bar{t}}}_3$ such
that the primal increase rate of any particular blended inequality (with
respect to~$S'$) is 
equals the number of nodes of $  S'_3$ on the
corresponding cycle of $G^{S_{\bar{t}}}_3$.
\begin{definition}
\label{G vs G3}
Let $S'$ be a hitting set output by \autoref{tiling}.  The
\emph{corresponding hitting set for~$G^{S_{\bar{t}}}_3$} is the set
$ S'_3 \subset V(G^{S_{\bar{t}}}_3)$ obtained by first taking the
nodes of $S' \cap V(G^{S_{\bar{t}}}_3)$. Now, consider an edge $uv$ of
$G^{S_{\bar{t}}}_2$ with corresponding piece $P$. 
Replace $uv$ by the path $u w_p v$ in $G^{S_{\bar{t}}}_3$, and add $w_p$ to
$ S'_3$ if $P - S'$ has two components.\footnote{Note that the
minimality of $S'$ implies that removing $S'$ from $P$ yields at most
two connected components.}
\end{definition}

\begin{claim}\label{blendG3}
  Let $C$ be the preimage of an even cycle in $G^{S_{\bar{t}}}_2$, and $C_3$ the corresponding cycle in~$G^{S_{\bar{t}}}_3$.
We claim that $ \sum_{v \in S'} a^C_v \leq |C_3 \cap  S'_3 | +1$. 
Further, if $C$ does not contain a twin edge, then it holds $ \sum_{v \in S'} a^C_v \leq |C_3 \cap  S'_3 |$.
\end{claim}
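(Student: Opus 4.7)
The plan is to prove the claim by a piece-by-piece analysis along the cycle $C$, using \autoref{node piece} to classify how $S'$ meets the interior of each piece. A first observation is that $a^C_v$ vanishes outside the internal nodes of handles of elementary cycles of pieces on $C$; in particular $a^C_v=0$ on every end of a piece. Hence, letting $\alpha_P$ denote the sum of $a^C_v$ over nodes $v\in S'$ internal to piece $P$, the left-hand side of the claim splits as $\sum_{v\in S'} a^C_v = \sum_P \alpha_P$ summed over the pieces $P$ of $C$. Setting $\beta_P := \mathbbm{1}[P-S'\text{ has at least two components}]$, we have $|C_3\cap S'_3|\ge \sum_P \beta_P$, since the remaining contribution from ends $u_i\in S'$ is nonnegative. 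It therefore suffices to show $\sum_P \alpha_P \le \sum_P\beta_P + 1$, and, in the no-twin-edge case, $\sum_P\alpha_P\le \sum_P\beta_P$.

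Next I apply \autoref{node piece} to each piece $P$. Case~1 is trivial: $\alpha_P=\beta_P=0$. In Case~2 the unique hit node $z$ in the interior of $P$ is a cut node of $P$, hence not interior to any handle, so $a^C_z=0$, giving $\alpha_P=0$ while $\beta_P=1$. In Case~3 the two hit nodes lie on opposite handles of a single elementary cycle $E_0$ of $P$; removing them disconnects the two ends of $P$, giving $\beta_P=1$, and $\alpha_P = a^C_a+a^C_b=1$ unless $E_0$ is the special elementary cycle used to correct the parity of the blended inequality \eqref{eq:beq}, in which case $\alpha_P=2$. In Case~4 there is exactly one hit node per elementary cycle of $P$, each in the interior of a handle; here I combine the witness-cycle argument from the proof of \autoref{node piece} with the parity structure of $C$ to show that any excess $\alpha_P-\beta_P$ must be attributed either to the special elementary cycle or to a twin edge of $C$.

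Aggregation then exploits the fact that at most one elementary cycle of $C$ is declared special by the construction of \eqref{eq:beq}. Summing the per-piece inequalities yields $\sum_P\alpha_P\le \sum_P\beta_P+1\le |C_3\cap S'_3|+1$, establishing the first inequality. For the stronger bound when $C$ contains no twin edge, I argue that parity correction (and hence the special cycle) is superfluous: the only source of parity flexibility in the preimage of $C$ comes from twin edges, so in their absence the dominant handles of the elementary cycles of $C$ already form a closed walk of the same parity as $C$. Consequently no special cycle is declared, Case~3 gives $\alpha_P=\beta_P$, and a parallel parity argument in Case~4 forces the dominant-handle contributions to vanish, yielding $\sum_P\alpha_P\le |C_3\cap S'_3|$.

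The step I expect to be the main obstacle is Case~4 of \autoref{node piece}, particularly when a single piece $P$ contains more than one elementary cycle (arising from folding a degree-$2$ node of $\bar{G}^S_1$ that is incident to two twin edges). There a naive per-piece bound can fail because $\alpha_P$ may exceed $\beta_P=0$, and one must exploit both the algorithm's node-pairing and reverse-delete rules and the parity constraint imposed by $C$ being an even cycle to charge the excess against either a twin edge of $C$ or the special cycle. This coupling between the combinatorial structure of multi-cycle pieces and the parity of $C$ is the technical heart of the proof.
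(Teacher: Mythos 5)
Your high-level plan---a piece-by-piece analysis along $C$ keyed to the four cases of \autoref{node piece}, with the $+1$ charged to the special cycle---is the right skeleton and matches the paper's structure. However, there are two concrete problems.

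First, the ``observation'' that $a^C_v$ vanishes on ends of pieces (and hence that $a^C_z=0$ for the cut node $z$ in Case~2) is false. The coefficients $a^C_v$ equal $1$ on the ends $v_i$ of the pieces and on cut nodes interior to a piece; only on internal nodes of handles do they take the values $1$, $1/2$, $0$ according to dominance. Indeed the paper's own bookkeeping relies on $b^C_{v_i}=1$ for ends and $b^C_z=1$ for cut nodes. Your two wrong values happen to cancel in the final inequality (the ends of $C$ in $S'$ contribute equally to $\sum_v a^C_v$ and to $|C_3\cap S'_3|$, and the Case~2 cut node gives $\alpha_P=1=\beta_P$), so your numbers survive by luck, but the stated justification is incorrect, and the identity $\sum_{v\in S'}a^C_v=\sum_P\alpha_P$ as you wrote it does not hold.

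Second, and more seriously, Case~4 is not proved---you acknowledge this yourself. In Case~4, $P-S'$ stays connected, so $\beta_P=0$, yet $a^C_v$ on the hit node of each elementary cycle could a priori be $1/2$ or $1$, giving $\alpha_P$ as large as the number of elementary cycles. The missing idea is \emph{not} a parity argument: it is the interaction with the algorithm's node-pairing and reverse-delete rules. Because in Case~4 there is exactly one hit node per elementary cycle, the algorithm cannot have created a node pair inside $Q$ (a pair is kept or deleted jointly), and this in turn forces that throughout the run the handle containing the hit node has \emph{strictly smaller} residual cost than its partner; by the construction of the coefficients, the hit node's handle therefore receives coefficient $0$ (i.e.\ $b^C_v=0$ there), so $\alpha_P=0$ except possibly on the special cycle's handles. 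This is the paper's argument, it is entirely about the reverse-delete pairing rule and residual costs, and it is what bounds $\sum_P\alpha_P-\sum_P\beta_P$ by $1$; no ``coupling with the parity structure of $C$'' is involved. Similarly, for the no-twin-edge refinement, the correct observation is structural: a piece has an elementary cycle if and only if the corresponding edge of $G^{S_{\bar t}}_2$ is twin, so if $C$ has no twin edge there are no handles at all, Cases~3 and~4 cannot occur, no special cycle is declared, and $a^C=b^C$---not a ``parity of dominant handles'' argument. Without the node-pair/residual-cost argument your proof has a genuine gap exactly where the claim is hardest.
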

 \begin{proof}[Proof of \autoref{blendG3}]

Define $b^C$ as follows: For a handle pair, while one handle has greater residual cost than the other set $b^C_v=1$ for $v$ on the handle of greater residual cost $b^C_v=0 $ on internal nodes of the other handle (change $b^C$ whenever residual costs become equal). 
Otherwise, $b^C_v=1/2$ on internal nodes of both handles. In short,
$b^C_v$ are the coefficients $a^C_v$ if we had not redefined $a^C_v=1$
for nodes on the special cycle.

Let $uw \in E(G^{S_{\bar{t}}}_2) $, $Q$ be the preimage of $uw$ in $G^{S_{\bar{t}}}$ and $u w_Q w$ be the subdivision of $uw$ in $G^{S_{\bar{t}}}_3$.
Let $S'_3$ be the corresponding hitting set of $S'$ for $G^{S_{\bar{t}}}_3$. 
We claim $\sum_{v \in S' \cap (Q \backslash \{ u,w \} } b^C_v = |S'_3 \cap \{ w_Q \}| $.
We distinguish which case of \autoref{node piece} is satisfied by $uw$ and $S'$.
\begin{itemize}
  \item If $uw$ and $S'$ satisfy (1), then $\sum_{v \in S' \cap (Q \backslash \{ u,w \} } b^C_v = 0 $. Since $S'$ contains no internal node of $Q$, $Q \backslash S$ is connected and hence $S'_3$ does not contain $w_Q$. Hence $\sum_{v \in S' \cap (Q \backslash \{ u,w \} ) } b^C_v =  |S'_3 \cap \{ w_Q \}|$.
  \item If $uw$ and $S'$ satisfy (2) or (3), then $S'$ does not contain either end node of $Q$, and contains either a single cut node of $Q$, or exactly two nodes of $Q$ in the interiors of two handles of a handle pair of $Q$.
    Thus, $S' \cap Q$ consists either of a single node $v$ for which $b^C_v=1$, or two nodes $j,k$ for which $b^C_j=b^C_k=1/2$, and so $\sum_{v \in S' \cap Q} b^C_v = 1$. 
    
    In either case (2) or (3), $Q \backslash S'$ is disconnected so $|S'_3 \cap \{ w_Q \}| =1$. Hence $\sum_{v \in S' \cap (Q \backslash \{ u,w \} ) } b^C_v =  |S'_3 \cap \{ w_Q \}|$.
  \item Suppose $S'$ satisfies (4).  
  Suppose for a contradiction that \autoref{tiling} added a node pair $(l,m)$ 
  on some handle pair $(P_1,P_2)$  of $Q$. 
  It then follows from the reverse-delete step that the final solution $S'$ contains both $l$
and $m$, or none of them. 
Since we do not contain a node pair, the deletion procedure of \autoref{tiling} implies the algorithm did not
add a node pair with nodes in $Q$.  

Hence, throughout the algorithm, for each handle pair $(P_1,P_2)$  of $Q$, the handle $P_i$, which contains a hit node in its interior must have strictly less residual cost than the other.
Hence $b^C_v=0$ on handle $P_i$. This implies
\begin{equation}
  \sum_{v \in ( V(Q) \backslash \{ u,w \} ) }  b^C_v = 0 \enspace . \label{item 4} 
\end{equation}

 \end{itemize}
Thus, $\sum_{v \in S' \cap (Q \backslash \{ u,w \} } b^C_v = |S'_3 \cap \{ w_Q \}| $.

Let $C=v_1 v_2 \hdots v_\ell v_1$. Let $Q_i$ be the piece corresponding to $v_i v_{i+1 \mod \ell} $. 
Let $ q_i $ be the node resulting from subdividing $v_i v_{i+1 \mod l} $ in $G^{S_{\bar{t}}}_2$ to obtain $G^{S_{\bar{t}}}_3$.
Let $C_3:=v_1 q_1 v_2 , q_2 ,\hdots,  v_\ell q_\ell$ the cycle corresponding to $C$ in $G^{S_{\bar{t}}}_3$.  
We showed 
\begin{equation}\label{QiandS'}
    \sum_{v \in S' \cap (Q_i \backslash \{ u,w \} ) } b^C_v = |S'_3 \cap \{ q_i \}| \enspace .
\end{equation}
Summing \eqref{QiandS'} for $i-1,\hdots,\ell$ yields 
$ \sum_{v \in S' \cap (\cup_{i=1}^l Q_i \backslash \{ v_1,v_2,\hdots,v_\ell \}  ) } b^C_v = | \{ q_1,q_2,\hdots,q_\ell\} \cap C_3 | $.

Noting $b^C_{v_i}=1$ for each $i$ and $b^C_v=0$ for $v \notin \cup_{j=1}^\ell Q_j$, yields
  \begin{equation}
  \label{blendtoG3}
    \sum_{v \in S'} b^C_v = |C_3 \cap  S'_3 | \enspace . 
  \end{equation}
Let us now relate $a^C_v$ to $b^C_v$.
If $C$ has no twin edge, then the blended inequality coefficients $a^C_v$ are equal to $b^C_v$, therefore $\sum_{v \in S} a^C_v = |C_3 \cap  S'_3 | $. 

In general, $C$ may contain a twin edge. In this case, $a^C_v$ differs from $b^C_v$ only in the interior of the handles $H_1,H_2$  
of the special cycle: then either $b^C_v=\frac{1}{2}$ in the interior of $H_1$ and $H_2$, or $b^C_v=0$ in the interior of the dominant handle, and $b^C_v=a^C_v$ everywhere else.

If $b^C_v=\frac{1}{2}$ in the interior of $H_1$ and $H_2$, then note from \autoref{node piece} there are at most two nodes of $S'$ on $H_1 \cup H_2$.
Thus, $\sum_{v \in S} a^C_v \leq \sum_{v \in S} b^C_v + 1$.  

Otherwise, $b^C_v=0$ in the interior of the dominant handle, and $b^C_v=a^C_v$ everywhere else.
Since $S$ contains at most one node from the dominant handle $\sum_{v \in S} a^C_v \leq \sum_{v \in S} b^C_v + 1  $.
Thus,  $\sum_{v \in S} a^C_v \leq |C_3 \cap  S'_3 | +1$ completing the proof.
\end{proof}

To show that $|C_3 \cap  S'_3 | +1$ is small on average we need the
fact that $S'_3$ is a minimal ECT, which is stated in the following remark.

\begin{remark}
  Let $S'$ be the output of \autoref{tiling} on input $(G,c)$.
  Let $ S'_3$ be the corresponding hitting set for $G^{S_{\bar{t}}}_3$ in \autoref{G vs G3}.
  Then there is a witness cycle for each $v \in  S'_3$.
\end{remark}

For a node $h$ and cycle $C$, denote by $C \circ h$ that $h$ lies on $C$.
\begin{definition}
\label{debit}
  Let $\mathcal{R}$ be a set of cycles of a graph $G$, and let $S \subset V(G)$.
  The \emph{debit graph} for $\mathcal{R}$ and $S$ is the bipartite graph $\mathcal{D}_G = (\mathcal{R} \cup S, E)$ with edges $E_\mathcal{R} = \{(C,s)\in\mathcal{R}\times S \mid C \circ s \}$.
\end{definition} 

Given an embedding of $G$ and a set $\mathcal{R}$ of faces of $G$, we can obtain an embedding of $\mathcal{D}_G$ by placing a node $v_M$ inside the face $R$ for each $R \in \mathcal{R}$.
This shows the  following observation.
\begin{observation}[\cite{GoemansW1998, BermanY2012}]
  If $\mathcal{R}$ is a set of faces of $G$, then the debit graph is planar. 
\end{observation}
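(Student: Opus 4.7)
The plan is to construct an explicit planar drawing of $\mathcal D_G$ that piggybacks on the given planar embedding of $G$. Fix such an embedding of $G$ once and for all. The vertex set $\mathcal R \cup S$ of $\mathcal D_G$ is placed as follows: every $s \in S$ keeps its position as a node of $G$, and for each face $R \in \mathcal R$ pick an arbitrary interior point $v_R$ of $R$ and place the corresponding vertex of $\mathcal D_G$ there. The remaining task is to realize every edge $(v_R,s) \in E_{\mathcal R}$ (i.e. every pair with $s$ on $\partial R$) by a Jordan arc in the plane so that no two arcs cross.

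For a single face $R$, each edge $(v_R,s)$ will be drawn as a simple curve from $v_R$ to $s$ lying entirely in the closed region $\overline{R}$, with its interior contained in the open face $R$. Since $v_R$ lies in the interior of $R$ and every $s$ on $\partial R$ is accessible from $v_R$ within $\overline R$ (the face $R$ is an open connected region whose closure contains each such $s$), such arcs exist; and because they all emanate from the single common endpoint $v_R$ inside a topological disk, they can be arranged in a ``star-shape'' around $v_R$ so that any two of them meet only at $v_R$. This produces a crossing-free drawing of all edges incident to $v_R$.

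Next I would argue that arcs belonging to different faces $R \neq R'$ do not cross either. Their relative interiors are contained in the disjoint open faces $R$ and $R'$, so the interiors cannot meet; and their endpoints are either the distinct points $v_R, v_{R'}$ (trivially non-crossing) or nodes $s \in S$ that may lie on the boundary of both $R$ and $R'$, in which case the arcs merely share an endpoint, which is allowed in a planar embedding. Hence the union of all these arcs yields a valid planar drawing of $\mathcal D_G$.

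The one subtlety to handle carefully is that a boundary node $s$ may appear multiple times on the walk bounding a single face $R$ (if $G$ is not $2$-connected, $s$ can be a cut-node and hence traversed repeatedly by the facial walk). In that case I still draw a single arc from $v_R$ to the point $s$, which is legitimate because $\mathcal D_G$ contains only one edge $(v_R,s)$. This is the only non-routine point; once it is observed that the star-shaped routing inside the open face $R$ reaches every accessible copy of $s$ on $\partial R$ and can be chosen to avoid all other such arcs, planarity follows immediately from the standard fact that a graph drawn in the plane without edge crossings is planar.
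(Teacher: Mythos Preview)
Your argument is correct and follows exactly the same idea as the paper: place a node $v_R$ in the interior of each face $R\in\mathcal R$, leave the hit nodes $s\in S$ where they are, and route each edge $(v_R,s)$ inside the closed face $\overline R$. The paper states this in a single sentence without spelling out the star-shaped routing or the disjointness of open faces; your write-up simply fills in those routine topological details.
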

  
Note that for $\mathcal{R}$ a set of cycles, a cycle $R \in \mathcal{R}$, the number of nodes $|R \cap S|$ that $R$ pays for in the hitting set is the degree of $R$ in the debit graph.

Recall the definition of the {\sc Subset FVS} problem, which seeks a minimum-weight node set $X$ which intersects all cycles from $\mathcal C_T$, the collection of cycles in $G$ which contain some node from a given set $T\subseteq V(G)$.
Observe that each node of $ S'_3$ has a witness cycle in $G^{S_{\bar{t}}}_3$; therefore, it is an inclusion-minimal hitting set for the collection $\mathcal C_T$ with $T =  S'_3$.
Goemans and Williamson \cite[Lemma 4.2]{GoemansW1998} 
showed that any inclusion-minimal hitting set for $\mathcal C_T$ has a laminar set of  witness cycles, which implies that there is a laminar set of pseudo-witness cycles $\mathcal{A}$ for hitting set $ S'_3$.
\begin{proposition}[{\cite[Lemma 4.2 specialized for {\sc Subset FVS}]{GoemansW1998}}]
\label{thm:minimalsubsetfvs}
  Let $G'$ be a planar graph and let $T \subseteq V(G')$.
  Let $\mathcal{C}_T$ be the set of cycles of $G'$ containing at least one node of $T$, and let $X$ be an inclusion-minimal hitting set for $\mathcal C_T$.
  Then there is a laminar set of cycles $\mathcal{A} = \{A_x \mid  x \in X \}$, satisfying $A_x \in \mathcal{C}_T$ and $A_x \cap X = \{x\}$.
\end{proposition}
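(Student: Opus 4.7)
The plan is to carry out a minimum-crossing uncrossing argument of the kind popularized by Goemans and Williamson. By inclusion-minimality of $X$, for each $x \in X$ the set $X \setminus \{x\}$ fails to hit at least one cycle in $\mathcal{C}_T$; that cycle is nevertheless hit by $X$, so it must intersect $X$ exactly at $x$. Hence some initial witness family $\{A_x\}_{x \in X} \subseteq \mathcal{C}_T$ with $A_x \cap X = \{x\}$ exists. I would then select, among all such families, one $\mathcal{A} = \{A_x\}_{x \in X}$ that lexicographically minimizes the pair $\bigl(\#\{\{x,y\} : A_x \text{ crosses } A_y\},\ \sum_{x \in X}(\text{number of faces enclosed by } A_x)\bigr)$, and argue that $\mathcal{A}$ must be laminar.

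Suppose for contradiction that two chosen witnesses $A_x$ and $A_y$ cross in the sense of \autoref{cross}. I would apply the canonical planar uncrossing on $H := A_x \cup A_y$: regarding $A_x$ and $A_y$ as Jordan curves bounding open regions $R_x, R_y$, set $A_x' := \partial(R_x \cup R_y)$ and $A_y' := \partial(R_x \cap R_y)$. These are edge-disjoint cycles of $H$ that do not cross, and $A_x' \cup A_y' \subseteq A_x \cup A_y$. Because $X \cap (A_x \cup A_y) = \{x, y\}$ and $x, y$ lie on distinct pieces of the uncrossing, the labels can be chosen so that $A_x' \cap X = \{x\}$ and $A_y' \cap X = \{y\}$. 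A routine planar-topology check shows that every third witness cycle $A_z$ crosses $\{A_x', A_y'\}$ no more often than it crossed $\{A_x, A_y\}$, and that the total enclosed face-count strictly decreases, contradicting the minimality of $\mathcal{A}$.

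The step I expect to be the main obstacle is verifying that both $A_x'$ and $A_y'$ still lie in $\mathcal{C}_T$, i.e., each contains a node of $T$. Fix $t_x \in A_x \cap T$ and $t_y \in A_y \cap T$. If either belongs to $A_x \cap A_y$, it survives on both new cycles and there is nothing to check. Otherwise each of $t_x, t_y$ sits on an open arc of its original cycle between consecutive shared crossing nodes, and each such arc is assigned in its entirety to $A_x'$ or to $A_y'$. If the inner--outer uncrossing above happens to send both arcs to the same new cycle, I instead perform the alternative ``cross-over'' uncrossing (swapping the pairing locally at each shared crossing node). A short case analysis on the cyclic interleavings of $x, y, t_x, t_y$ around $\partial(A_x \cup A_y)$ then shows that at least one of the two uncrossings separates $t_x$ from $t_y$, so $A_x', A_y' \in \mathcal{C}_T$. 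This is precisely the uncrossability property of $\mathcal{C}_T$ exploited in the original Goemans--Williamson argument, and delivers the laminar family $\mathcal{A}$ claimed in the proposition.
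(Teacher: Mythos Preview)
The paper does not supply its own proof of this proposition; it is quoted verbatim from Goemans and Williamson~\cite{GoemansW1998} and used as a black box. So there is no ``paper's proof'' to compare against beyond the citation itself.

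Your overall strategy (pick a witness family minimizing a crossing/face potential, then uncross) is indeed the Goemans--Williamson template, but the sketch has a real gap in the uncrossing step. You assert that for the inner/outer uncrossing ``$x,y$ lie on distinct pieces,'' but this is false in general: if $x$ lies on the arc of $A_x$ outside $R_y$ and $y$ lies on the arc of $A_y$ outside $R_x$, then both $x$ and $y$ sit on $\partial(R_x\cup R_y)$ and neither sits on $\partial(R_x\cap R_y)$. You then propose to fall back on the ``cross-over'' uncrossing to handle the $T$-membership, but you never recheck that the cross-over still separates $x$ from $y$. In fact the two requirements are coupled and can conflict. Concretely, take two cycles meeting in exactly two nodes $p,q$, with arcs $a_1,a_2$ of $A_x$ and $b_1,b_2$ of $A_y$ ($a_1,b_1$ the outer arcs). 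Put $x,t_x\in a_1$, $y\in b_1$, $t_y\in b_2$, and suppose these are the only $T$-nodes on $A_x\cup A_y$. Then the inner/outer pair $(a_1\cup b_1,\;a_2\cup b_2)$ puts $x,y$ together on the outer cycle, while the cross-over pair $(a_1\cup b_2,\;a_2\cup b_1)$ separates $x,y$ but leaves $a_2\cup b_1$ with no $T$-node. Neither of your two candidate uncrossings yields a valid pair of witnesses.

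The missing idea is that one need not replace \emph{both} witnesses simultaneously. In the configuration above, $C:=a_1\cup b_2$ contains $x$ and $t_x$, avoids $y$, and encloses strictly fewer faces than $A_x$; replacing $A_x$ by $C$ while keeping $A_y$ makes progress (and $C$ is laminar with $A_y$). The actual argument in \cite{GoemansW1998} works with a potential based solely on total enclosed faces and shows that whenever two witnesses cross, at least one of them can be shrunk to a smaller valid witness inside $A_x\cup A_y$; your lexicographic potential with crossing-count first, and your insistence on a simultaneous two-cycle swap, are what create the obstruction. Reworking the argument along these lines closes the gap.
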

Applying \autoref{thm:minimalsubsetfvs} to $G'=G_3$ and $X = T =  S'_3$ implies there is a laminar set $\mathcal{A} = \{A_x \mid  x \in S'_3 \}$ of cycles satisfying $A_x \cap S'_3 = \{x\}$.  
In other words, $\mathcal{A}$ is a laminar set of pseudo-witness cycles for $S'_3$.
Note that cycles of $\mathcal{A}$ may not be even, hence they may be pseudo-witness cycles for~$S'_3$, but not necessarily witness cycles for nodes of $ S'_3$.

Recall that, during the current iteration, our algorithm incremented
the blended inequalities of the cycles in a $2/3$-quasi-perfect tiling
$\tilT_H$ of $H$.  
Recall $H$ is an inclusion-minimal pocket
  of $G^{S_{\bar{t}}}_2$.  
  By abuse of notation, let $\tilT_H$ be
  the corresponding cycles of $G^{S_{\bar{t}}}_3$.  
  Let $\mathcal{D}$ 
  be the debit graph formed using $G^{S_{\bar{t}}}_3$, the cycle
  set~$\tilT_H$ and hitting set $ S'_3$.  
  
Obtain graph~$\mathcal{D}'$ from $\mathcal{D}$ by replacing each even cycle $C$ containing two faces with the two faces that compose it.
To be precise, construct $\mathcal{D}'$ by first taking all nodes of $S'_3$ and all faces of $H$ that lie inside some even cycle of $\tilT_H$ as the vertex set.
For each edge $(C,v) \in E(\mathcal{D})$, if the cycle $C$ consist of two faces $f_1,f_2$ add the edges $(f_1,v)$ and $(f_2,v)$ to $\mathcal{D}'$, otherwise add the edge $(C,v)$ to~$\mathcal{D}'$ (see \autoref{onethir}).
Delete isolated vertices from $\mathcal{D}'$.
\begin{figure}
  \centering
  \includegraphics[scale=0.8]{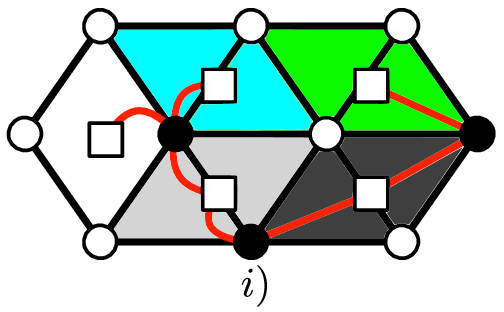}
  \hspace{1em}
  \includegraphics[scale=0.8]{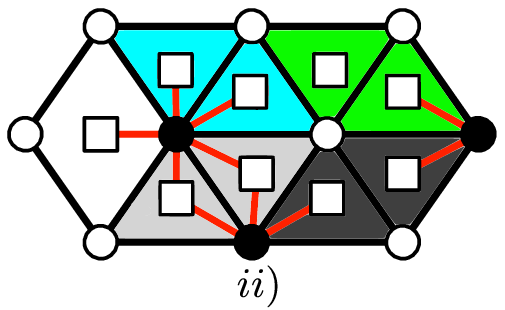} 
  \caption{Left: A possible debit graph $\mathcal{D}$ with the cycles of the tiling in \autoref{tiletomat}.
    Right: the graph~$\mathcal{D}'$ obtained by replacing each cycle with the faces that compose it.}
  \label{onethir}
\end{figure}

Let $\calT_{\textsf{Faces}(H)}$ be the ``face nodes'' of $\mathcal{D}'$.
Let $\mathcal{F}_{\textsf{all}(H)}$ denote the finite faces of $H$.
Let $\mathcal{F}_H $  denote the set of finite faces of $H$ that contain a hit node.
Observe that $M \cap  S'_3 = \emptyset$ for each $M \in  \mathcal{F}_{\textsf{all}(H)}  \backslash \mathcal{F}_H$.
Now
\begin{multline}
\label{M to M''}
  \sum_{M \in \tilT_H } |M \cap  S'_3|
  \leq \sum_{M \in \calT_{\textsf{Faces}(H)} } |M \cap  S'_3|\\
  \leq \sum_{M \in \mathcal{F}_{\textsf{all}(H)} } |M \cap  S'_3| - |\mathcal{F}_H \backslash \calT_{\textsf{Faces}(H)}|
     = \sum_{M \in \mathcal{F}} |M \cap  S'_3| - |\mathcal{F}_H \backslash \calT_{\textsf{Faces}(H)}| \enspace .
\end{multline} 
The first inequality holds, because for each cycle $C$ consisting of two faces $f_1$ and $f_2$ we have $|C\cap  S'_3| \leq |f_1 \cap  S'_3| + |f_2 \cap  S'_3|$.
The second inequality holds, because each face of $\mathcal{F}_H $  contains a hit node, and so $|C \cap  S'_3| \geq 1$ for each $C \in \mathcal{F}_H $.
The last inequality holds, because by definition $|M \cap  S'_3|=0$ for all $M \in \mathcal{F}_{\textsf{all}(H)}  \backslash \mathcal{F}_H $.

If our tiling covers $2/3$ of all finite faces, then $| \calT_{\textsf{Faces}(H)} | \leq 2|\tilT_H|$ and $(2/3)| \mathcal{F}_H | \leq |\calT_{\textsf{Faces}(H)}|$, so $|\mathcal{F}_H | \leq 3| \tilT_H| $.
Alas, one can show that a  tiling that covers $2/3$ of all finite faces does not always exist; see \autoref{tilecounterex}.
\begin{figure}
  \centering
  \includegraphics[scale=0.5]{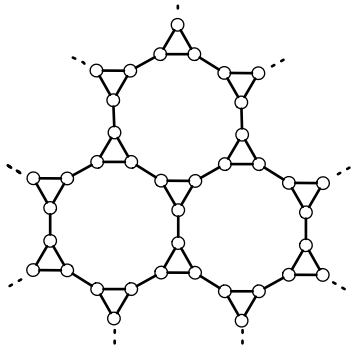}
  \caption{ A graph consisting of a tessellation of the plane with twice as many triangles as dodecagons. None of the triangles are adjacent, so a maximum tiling covers only the even dodecagons.\label{tilecounterex}}
\end{figure}
To overcome this impediment, we will show that $|\mathcal{F}_H | \leq 3| \tilT_H| $ holds for a $2/3$-quasi-perfect tiling. 
Suppose that our $2/3$-quasi-perfect tiling 
covers a $b$-fraction of the odd faces in~$\mathcal{F}_H $, and a $c$-fraction of the faces in~$\mathcal{F}_H$ which are even.
Let $\mathcal{F}_{\textsf{even}(H)}$ be the even finite faces of $\mathcal{F}_H $.
Then, as $\mathcal{F}_H \backslash \mathcal{F}_{\textsf{even}(H)} $ are the odd faces of~$\mathcal{F}_H$, and $\calT_{\textsf{Faces}(H)} \backslash \mathcal{F}_{\textsf{even}(H)}  $ are the odd faces covered by our tiling, it holds that $ b | \mathcal{F}_H \backslash \mathcal{F}_{\textsf{even}(H)}  | = | \calT_{\textsf{Faces}(H)} \backslash \mathcal{F}_{\textsf{even}(H)}  |$.
Simplifying, we get
\begin{equation*}
       b|\mathcal{F}_H | + (1-b)|\mathcal{F}_{\textsf{even}(H)} |
  \leq |\calT_{\textsf{Faces}(H)} |
  \leq 2| \tilT_H | -  | \mathcal{F}_{\textsf{even}(H)}  | \enspace .
\end{equation*}
By rearranging, we get $b | \mathcal{F}_H \backslash  \mathcal{F}_{\textsf{even}(H)}  | +2| \mathcal{F}_{\textsf{even}(H)} | \leq 2 | \tilT_H | $.
Noting that $b(1-c)+2c \geq 2/3$, and rearranging once more, yields
\begin{equation*}
       \frac{2}{3}|\mathcal{F}_H |
  \leq b|\mathcal{F}_H \backslash \mathcal{F}_{\textsf{even}(H)}| + 2|\mathcal{F}_{\textsf{even}(H)}|
  \leq |\calT_{\textsf{Faces}(H)}| \leq 2|\tilT_H| \enspace .
\end{equation*}
Noting that $|\mathcal{F}_{\textsf{even}(H)} | / |\mathcal{F}_H| = c$ and $b(1 - c) + 2c \geq 2/3$, we get
\begin{equation}
\label{bding size of M''}
  3|\tilT_H| \geq \frac{3}{2}(b(1-c)+2c)|\mathcal{F}_H|
             \geq |\mathcal{F}_H |  \enspace .
\end{equation} 
By \eqref{M to M''}, in order to bound $\sum_{M \in \tilT_H } |M \cap  S'_3|$, it suffices to bound $ \sum_{M \in \mathcal{F}} |M \cap  S'_3| $. To do this, we prove the following, which extends the work by Berman and Yaroslavtsev~\cite[Theorem 4.1]{BermanY2012}.
\begin{theorem}
\label{by2.4}
  Let $H$ be an inclusion-wise minimal pocket of $G$.
  Let $S \subset V(G)$ be a set of nodes with some set $\mathcal{A}$ of laminar pseudo-witness cycles.
  Let $\mathcal{R}$ be a set of finite faces of $H$ such that each cycle of $\mathcal{A}$ contains a face of $\mathcal{R}$ in its interior. 
  Then      $\sum_{M \in \mathcal{R} } |M \cap S| \leq \frac{18}{7}|\mathcal{R}|$.
 \end{theorem}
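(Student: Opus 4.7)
The plan is to prove $\sum_{M \in \mathcal R}|M \cap S| \le \tfrac{18}{7}|\mathcal R|$ via an amortized planar-counting argument on the debit graph, adapting the proof of \cite[Theorem 4.1]{BermanY2012} from their setting of (directed) witness cycles in Subset FVS to our setting of laminar \emph{pseudo}-witness cycles inside an inclusion-minimal pocket. The left-hand side is simply the number of edges in the debit graph $\mathcal D_H$ with bipartition $\mathcal R \cup S$; since $\mathcal R$ is a set of faces of the plane graph $H$, drawing each $M \in \mathcal R$ as a vertex placed inside the corresponding face yields a planar embedding of $\mathcal D_H$, so $\mathcal D_H$ is planar and bipartite. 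The naive Euler bound $|E(\mathcal D_H)| \le 2(|\mathcal R|+|S|)-4$ is by itself too weak — it only gives an average face degree approaching $2$ when $|S| \ll |\mathcal R|$ — so the laminar structure of $\mathcal A$ and the pocket minimality of $H$ must be exploited to close the gap and produce the constant $18/7$.

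First I would organize $S$ using the rooted forest $\mathcal F_\mathcal A$ induced by laminarity: for $x\in S$ let $p(x)$ be the unique element $y \in S$ such that $A_x$ lies strictly inside $A_y$ with no intermediate cycle of $\mathcal A$ between them, if any; this gives a well-defined forest because no two cycles of $\mathcal A$ cross. Each leaf $x$ of $\mathcal F_\mathcal A$ defines a ``leaf region'' bounded by $A_x$ whose interior contains at least one face of $\mathcal R$ (by the hypothesis on $\mathcal A$) and exactly one node of $S$ (namely $x$, since $A_x \cap S = \{x\}$). These leaf regions, together with the annular regions between each $A_x$ and the $A_y$ of its children, partition the plane in a way that cleanly assigns each face of $\mathcal R$ to exactly one vertex of $S$ via the deepest cycle of $\mathcal A$ containing it.

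Next, following the amortization scheme of Berman--Yaroslavtsev, I would classify the edges of $\mathcal D_H$ according to the depth jump they induce in $\mathcal F_\mathcal A$: a pair $(M,s)$ is ``local'' if $s$ is the owner of the region in which $M$ lies, and ``boundary'' otherwise (in which case $s$ is forced to lie on some ancestor cycle $A_y$ of the region containing $M$). The local incidences are bounded by a face-by-face argument that says each leaf region contains at least one face of $\mathcal R$, while the boundary incidences are controlled by applying the bipartite planar bound on carefully chosen subgraphs of $\mathcal D_H$, thereby turning the $|S|$ term of the naive bound into something proportional to $|\mathcal R|$. Optimizing the resulting trade-off (as in \cite{BermanY2012}) produces the target $18/7$ constant; concretely, one tracks the number of leaves, internal nodes, and incident faces and obtains a linear inequality system whose optimum is $\tfrac{18}{7}$.

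The main obstacle I expect is the adaptation from \emph{witness} cycles to \emph{pseudo}-witness cycles: cycles $A_x$ in $\mathcal A$ need not be even, so the parity-based local arguments that let Berman--Yaroslavtsev simplify configurations in their directed/subset-FVS setting do not transfer directly. Here the role of those parity reductions has to be played by the inclusion-minimality of the pocket $H$, which must rule out exactly the ``bad'' local configurations (such as a degree-$2$ face that could be peeled off as a smaller pocket) that would otherwise push the average face degree above $18/7$. Verifying that minimality of $H$ really does forbid all the local structures needed — and that it does so uniformly for both witness cycles and non-even pseudo-witness cycles — is the most delicate step of the argument.
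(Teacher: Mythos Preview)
Your proposal is a high-level sketch rather than a proof, and it misses the structural decomposition that actually drives the argument. The paper's proof does \emph{not} work directly with the full laminar forest of $\mathcal A$ and a local/boundary edge classification as you describe. Instead, it separates the cycles of $\mathcal A$ into \emph{hierarchical} ones (those contained entirely in $H$) and \emph{crossing} ones (those that leave $H$); this distinction is essential because $\mathcal A$ lives in $G$, not in $H$, so your leaf/annular regions need not be subgraphs of $H$ at all. The paper then (i) applies a reduction that replaces the interior of any hierarchical pseudo-witness cycle that is not already a face of $\mathcal R$ by that cycle itself, showing via the black-box \cite[Lemma~4.3]{BermanY2012} (restated as Proposition~\ref{complex}) that this cannot increase the balance; and (ii) observes that the remaining crossing cycles partition $H$ into subpockets, each bounded by at most two crossing cycles (Proposition~\ref{partitiontosubpocket}), hence each subpocket has at most two ``outer'' hit nodes and Proposition~\ref{complex} applies again to give non-negative balance in each piece. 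Summing yields the bound.

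Two specific points where your sketch goes astray: first, the constant $18/7$ is not produced by ``optimizing a linear inequality system'' over your local/boundary dichotomy; it comes straight out of \cite[Lemma~4.3]{BermanY2012}, which the paper invokes as a black box (once inside the reduction, once on each subpocket). Second, your identification of pocket \emph{minimality} as the mechanism that forbids bad local configurations is misplaced: the proof of Theorem~\ref{by2.4} uses only that $H$ is a pocket (so crossing cycles all pass through the at most two boundary nodes of $H$), not that it is a minimal one, and there is no parity obstruction to overcome --- Proposition~\ref{complex} already applies to pseudo-witness cycles without any evenness assumption.
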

We defer the proof of \autoref{by2.4} to \autoref{by2.4Proof}.

Let $\mathcal{A}$ be a set of laminar witness cycles for $ S'_3$.
If we were to set $ \mathcal{R}= \mathcal{F}_H $  (the set of finite faces of $H$ incident to a hit node), then each cycle $A \in \mathcal{A}$  contains a face of $\mathcal{R}$ in its interior, namely any face inside $A$ that is incident to the hit node of $ S'_3$ on $A$.
Thus, $  S'_3, \mathcal{A}$ and $\mathcal{R}$ meet the conditions of \autoref{by2.4}. 
 
To recap, we wish to bound the primal increase rate $\sum_{M\in \tilT_H } \sum_{v \in S} a^M_v$, so we analyze the  expression $\sum_{M \in \tilT_H } |M \cap  S'_3 |$. Recall from \autoref{blendG3} that $\sum_{v \in S} a^M_v$ is at most one more than $|M \cap  S'_3 |$ and $\sum_{v \in S} a^M_v=|M \cap  S'_3 |$ if $M$ contains no twin edge.
We bound $\sum_{M \in \tilT_H } |M \cap  S'_3 |$ by looking at the quantity $\sum_{M \in \mathcal{F}_H } |M \cap  S'_3 |$, because~$\mathcal{F}_H $ fits the conditions of \autoref{by2.4}.
One could then use $|\mathcal{F}_H | \leq 3|\tilT_H | $ (by \eqref{bding size of M''}), to  bound $\sum_{M\in \tilT_H } \sum_{v \in S} a^M_v$ in terms of the dual increase rate~$|\tilT_H |$. 
We will use $3|\tilT_H| \geq \frac{3}{2}(b(1-c)+2c)|\mathcal{F}_H|$ to obtain a stronger bound.

Let $\calT$ be our $2/3$-quasi-perfect tiling from \autoref{2o3quasperexist}.
Recall from \autoref{quasiperdef} that the fraction~$\beta$ of odd finite faces that are covered by the tiling, and the fraction $\psi$ of finite faces of~$H$, that are even satisfy $\beta (1-\psi)+2\psi \geq \alpha$.
Let $\mathcal{A}$ be a set of pseudo-witness cycles in~$H$ for $ S'_3$, the corresponding set for the hitting set $S'$ returned by our algorithm.
Define $\mathcal{R}= \mathcal{F}_H $.
We have that every cycle of $\mathcal{A}$ contains a face of~$\mathcal{R}$ in its interior.
Thus, $\mathcal{R}, \mathcal{A}$ and~$ S'_3$ satisfy the conditions of \autoref{by2.4}.
Therefore,
\begin{align}
  \label{bound}
  \sum_{M \in \tilT_H } |M \cap  S'_3| \leq \left(\sum_{M\in \mathcal{F}_H }|M \cap  S'_3 |\right) - |\mathcal{F}_H \backslash \calT_{\textsf{Faces}(H)} |
                                        \leq \frac{18}{7}|\mathcal{F}_H | -|\mathcal{F}_H \backslash \calT_{\textsf{Faces}(H)} | \enspace .
    \end{align} 
Note that $ \sum_{v \in S} a^M_v \leq |M \cap S |$, unless $M$ contains a twin edge.
If $M \in \calT$ is the disjoint union of two odd faces which share an edge, then~$M$ will not contain a twin edge.
That is,~$M$ can only contain a twin edge if $M \in \mathcal{F}_{\textsf{even}(H)} $, so $M$ is an even face then.
So 
\begin{equation}
\label{tilingbd}
        \sum_{M\in \tilT_H } \sum_{v \in S} a^M_v
   \leq \sum_{M\in \tilT_H } |M \cap S | + |\mathcal{F}_{\textsf{even}(H)}  |
   \leq \frac{18}{7}|\mathcal{F}_H | -|\mathcal{F}_H \backslash \calT_{\textsf{Faces}(H)} | + |\mathcal{F}_{\textsf{even}(H)}  | \enspace .
\end{equation}
Recall that $c = |\mathcal{F}_{\textsf{even}(H)}  | / |\mathcal{F}_H |$ is the fraction of finite faces of $\mathcal{F}_H $  which are even, and that $b = | \calT_{\textsf{Faces}(H)}  \backslash \mathcal{F}_{\textsf{even}(H)}   | / |\mathcal{F}_H \backslash \mathcal{F}_{\textsf{even}(H)}  |$ is the fraction of odd finite faces of $\mathcal{F}_H $  covered by our tiling. 
Note that
\begin{align*}
  |\mathcal{F}\backslash \calT_{\textsf{Faces}(H)} | & = |\mathcal{F}_H \ \mathcal{F}_{\textsf{even}(H)}  | - |\calT_{\textsf{Faces}(H)} \backslash \mathcal{F}_{\textsf{even}(H)}   |\\
  & = |\mathcal{F}\backslash \mathcal{F}_{\textsf{even}(H)}  | - b|\mathcal{F}_H \backslash \mathcal{F}_{\textsf{even}(H)}   |
   = (1-b)(1-c)|\mathcal{F}_H | \enspace .
\end{align*}

We now recall \eqref{bding size of M''}, by which $3|\tilT_H | \geq \frac{3}{2}(b(1-c)+2c)| \mathcal{F}_H |$.

Substituting these bounds for $|\mathcal{F}_H | $ and $|\mathcal{F}_H \backslash \calT_{\textsf{Faces}(H)}  |$  into \eqref{tilingbd}, we obtain 
\begin{align*}
  \sum_{M\in \tilT_H }\sum_{v \in S}a^M_v & \leq c|\mathcal{F}_H | + \frac{18}{7}\left(\frac{2}{b(1-c)+2c}|\tilT_H |\right) - (1-b)(1-c) |\mathcal{F}_H |\\
                                             & = \frac{2c}{b(1-c)+2c}|\tilT_H | + \frac{18}{7}\left(\frac{2}{b(1-c)+2c}|\tilT_H |\right) - \frac{2(1-b)(1-c)}{b(1-c)+2c}|\tilT_H | \enspace .
\end{align*}
If we maximize the right-hand side factor
$\frac{2c}{(b(1-c)+2c)}+\frac{36}{7(b(1-c)+2c)} - \frac{2(1-b)(1-c)}
{(b(1-c)+2c)}$ subject to\linebreak $b(1-c)+2c \geq 2/3$, we obtain
that the right-hand side is bounded by $\frac{47}{7} |\tilT_H |$.
 
This completes the proof of \autoref{thm:apxmain} modulo the proof
of \autoref{2o3quasperexist}; i.e., the fact that large
quasi-perfect tilings can be computed efficiently. The remaining part
of this paper will provide details for this remaining task. 

\subsection{Obtaining a $2/3$-quasi-perfect tiling}
\label{2o3quasitil}
We now show how to find the $2/3$-quasi perfect tiling in line \ref{alg:qpt} of \autoref{tiling}.
The following result states that the minimal pockets picked by the algorithm have such tilings.

\thmcmd*

To prove this theorem we will use the following lemma.

\begin{lemma}
\label{nopseudo}
  For any set $S$, any pseudo-pocket contained in $G^S_2$  contains an even cycle. 
\end{lemma}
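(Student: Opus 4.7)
The plan is to lift $H'$ from $G^S_2$ to the subgraph $\tilde H \subseteq G^S$ obtained as the union of all pieces corresponding to edges of $H'$, to show that $\tilde H$ contains an even cycle $C^\ast$ of $G^S$, and finally to project $C^\ast$ back to an even cycle of $H'$. First, $\tilde H$ is itself a pseudo-pocket of $G^S$: it is connected (since $H'$ and each piece are), it contains a cycle (the lift of any cycle of $H'$), and its outside-neighbour nodes in $G^S$ coincide with those of $H'$ in $G^S_2$, because by definition internal nodes of a piece have all their neighbours inside the piece. If $\tilde H$ has at most one outside-neighbour node then no simple cycle of $G^S$ can exit and re-enter through a single boundary node, so the even cycle of $G^S$ through any chosen node of $\tilde H$ (guaranteed by the definition of $G^S$) lies entirely inside $\tilde H$, and we are done.

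The main case has exactly two outside-neighbour nodes $u_1,u_2$; I would argue by contradiction, supposing every cycle of $\tilde H$ is odd. The key ingredient is the standard ear-decomposition fact that any 2-connected graph with an odd cycle and more than one cycle contains an even cycle, so the only 2-connected graphs with an odd cycle but no even cycle are single odd cycles. Hence every block of $\tilde H$ is a bridge, a triangle, or an odd cycle of length $\geq 5$. A non-cut non-outside vertex $v$ of $\tilde H$ lies in exactly one such block and has all its neighbours inside that block, so its degree in $G^S$ equals its degree in the block (either $1$ or $2$); combined with the $G^S$-requirement that $v$ lies on an even cycle, this forces tight structural restrictions on how such a $v$ can exist — in particular, non-cut non-outside vertices can appear only when the containing block admits paths out through multiple cut vertices that allow reaching $\{u_1,u_2\}$, which sharply limits the number of non-cut vertices in each block.

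The resulting count is then combined with the block-cut tree of $\tilde H$. Each leaf block contributes several non-cut vertices (a bridge-leaf contributes one, a triangle-leaf two, a long-odd-cycle leaf at least four), and any branching in the block-cut tree — via a cut vertex lying in three or more blocks or via a triangle or long cycle of block-cut-tree degree $\geq 3$ — forces at least three leaves and hence exceeds the budget of at most two outside nodes. So the block-cut tree must be a path with both leaves being bridges, and no internal block may be a triangle or long odd cycle (each such block would contribute an additional apex non-cut vertex). Hence every block of $\tilde H$ is a bridge, making $\tilde H$ acyclic and contradicting the existence of a cycle in $\tilde H$; the single-block case is ruled out identically. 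This yields the desired even cycle $C^\ast\subseteq\tilde H$.

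For the projection, for each piece $Q$ of $G^S$ the intersection $C^\ast\cap Q$ is a single $u_Q$-$v_Q$ path (internal nodes of $Q$ have all neighbours in $Q$, and $C^\ast$ is simple, so at most one such segment per piece); replacing each segment by the corresponding edge of $H'$ yields a simple cycle $R$ of $H'$. If $R$ uses a twin edge then $R$ is even by definition. Otherwise every edge of $R$ corresponds to a non-twin piece in which all $u$-$v$ paths share a common parity $p(e)$, so the preimage parity of $R$ equals $|C^\ast|\bmod 2 = 0$ and $R$ is even. The main obstacle will be the fine structural analysis in the contradiction step — carefully bounding the non-cut-vertex contribution of each possible block type and excluding every possible branching of the block-cut tree under the tight two-node outside budget, while using the $G^S$-requirement to pin down the behaviour of non-cut non-outside vertices in non-leaf blocks.
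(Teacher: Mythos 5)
Your proof has a genuine gap in the middle step, where you argue by contradiction that if every cycle of $\tilde H$ is odd then $\tilde H$ must be acyclic. Consider the possibility that $\tilde H$ is a single odd cycle with $u_1,u_2$ both lying on it (this is the ``single-block case'' that you claim is ``ruled out identically''). Each interior vertex $v$ has degree $2$ in $G^S$ and must lie on some even cycle of $G^S$; but that even cycle is allowed to exit $\tilde H$ through $u_1$ and $u_2$, travel through $G^S\setminus\tilde H$, and come back. If $G^S\setminus(\tilde H\setminus\{u_1,u_2\})$ contains both an odd and an even $u_1$-$u_2$ path, then every interior vertex of the odd cycle is on an even cycle of $G^S$, and your counting argument produces no contradiction. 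The error is that you conflate ``lies on an even cycle of $G^S$'' with ``lies on an even cycle of $\tilde H$''; only the former is guaranteed by the definition of $G^S$, and the resulting cycle can (and typically does) leave the pseudo-pocket. This same issue infects the treatment of interior odd-cycle blocks: cycles witnessing the parity constraint for vertices of such a block may use parity-correcting detours outside $\tilde H$.

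The reason such a $\tilde H$ cannot actually occur is precisely the property of the $2$-compression that your lifting approach abandons: an odd cycle of $G^S_2$ having exactly two branch nodes cannot exist, because the two arcs between the branch nodes would have been folded to parallel edges and then merged into a twin edge by construction of $G^S_2$. This is the one-line contradiction the paper's proof uses after establishing (as you do, via the same ear-decomposition/parity argument) that the blocks of the pseudo-pocket are edges or odd cycles and its block graph is a path. Working in $G^S$ loses this structural guarantee, and your ``at most two outside vertices'' budget is not strong enough on its own to recover it; you would need to additionally invoke that in $G^S_2$ every vertex has degree at least $3$ (so a pseudo-pocket that is a cycle cannot exist, since all but two of its vertices would be degree-$2$ and would have been folded). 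Folding that observation in would essentially reduce your argument to the paper's, done one level up in $G^S_2$ rather than after lifting. The projection step at the end is fine, but it never needs to run because the preceding step cannot be made to close.
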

\begin{proof}
  Informally speaking, the proof will show that any pseudo-pocket without even cycles contains an odd cycle for which only two nodes have   outside neighbours; this, however, cannot appear in the 2-compression, as we would have replaced this cycle by an edge in $G^S_2$.

  Suppose, for sake of contradiction, that $G^S_2$ contained a pseudo-pocket $Q$ without even cycles.
  Since each node of $Q$ is in an even cycle of $G_2$ and $Q$ contains no even cycle,  $Q$ contains exactly two nodes $u$ and $v$ with neighbours outside $Q$, and each node of $Q$ lies on a $u$-$v$ path of $Q$.
  Let $B_u$ and~$B_v$ be the blocks of $Q$ containing $u$ and $v$ in the block graph $\mathcal{B}$ of $Q$, respectively (see \autoref{PseudNoEven}).

  If $\mathcal{B}$ was not a path, then there would be some block $B_1$ that does not lie on a $B_u$-$B_v$ path in~$\mathcal{B}$, and thus there would be a node of $B_1$ that would not lie on a $u$-$v$ path in~$Q$---a contradiction.
  Hence, $\mathcal{B}$ is a path.
  \begin{figure}[h]
      \centering
      \includegraphics[scale=0.8]{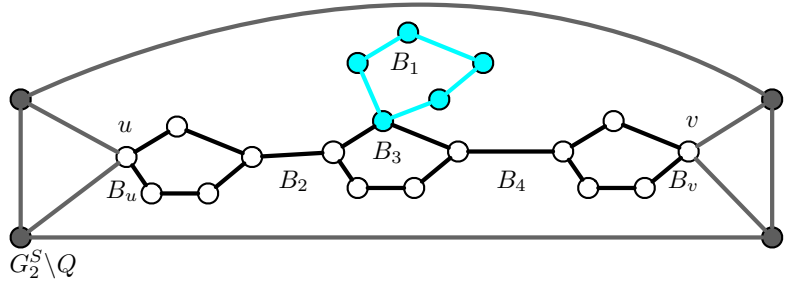}
      \caption{Graph $Q$ consisting of blocks labelled $B_1, B_2, B_3, B_4, B_u, B_v$. Block $B_1$ depicted in blue contains nodes not on any $u$-$v$ path, which is a contradiction. }
      \label{PseudNoEven}
  \end{figure}

  Let $B$ be a block of $Q$. 
Suppose, for sake of contradiction, that $B$ contains a cycle $C$ and a node~$v'$ of $C$ with a neighbour $u' \in V(B) $ outside $C$. 
Since $v'$ is not a cut node, there is a path~$P$ from $u'$ to $C \backslash v'$.
Construct the $u'$-$v'$ path $P'$ from $P$ by traversing $P$ from $u'$ to the first node~$w'$ of~$C \backslash v'$ and appending to that a $w'$-$v'$ path in $C$.
Since $Q$ contains no even cycles, the cycles $P' \cup {v'u'}$ and $C$ are odd.
Then the cycle formed by the edges  $ E(C) \Delta E(P' \cup {v'u'} ) $, that is edges of~$C$ or $P' \cup {v'u'}$, but not both, has length $ |E(C)| + |E(P' \cup {v'u'})|- 2|E(C) \cap E(P' \cup {v'u'}) | $ which is even, and hence a contradiction.
 Thus if $B$ contains a cycle then it does not contain nodes outside the cycle, or put simply $B$ is a cycle.
  Since we assume $B$ contains no even cycles, $B$ is an odd cycle.
  Thus, the blocks of $Q$ are odd cycles or edges. 
  Since~$Q$ contains at least one cycle, there is an odd cycle $C'$. 
Since $\mathcal{B}$ is a path, $C'$ contains 2 nodes  $a$ and $b$ with neighbours outside $C'$.  
 However,~$G^S_2$ cannot contain such an odd cycle, as that we would have contracted the two $a$-$b$ paths of $C'$ to parallel edges and then replaced them by a twin edge; see \autoref{cyclecompression}. This completes the proof.
  \begin{figure}[h] 
    \centering
    \includegraphics[scale=0.8]{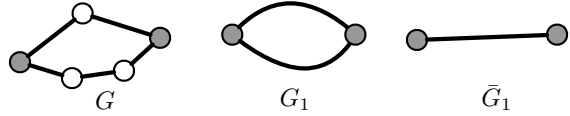}
    \caption{Cycle is replaced by an edge in 2-compression.}
    \label{cyclecompression}
  \end{figure}

  \end{proof}
For any set $S$, if $G^S_3$ contained a pseudo-pocket $Q$ without even cycles, then $Q$ was obtained from a subgraph~$Q'$ of $G^S_2$ by subdividing edges. 
Then $Q'$ would be a pseudo-pocket of $G^S_2$ without even cycles. 
This contradicts \autoref{nopseudo}. 
This shows the following corollary.
\begin{corollary}
For any set $S$, any pseudo-pocket of $G^S_3$ contains an even cycle.
\end{corollary}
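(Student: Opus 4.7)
The plan is to prove the corollary by contradiction, reducing to \autoref{nopseudo}. Suppose for sake of contradiction that for some $S$, the graph $G^S_3$ contains a pseudo-pocket $Q$ that contains no even cycle. I will construct an ``un-subdivided'' subgraph $Q'$ of $G^S_2$ and verify that $Q'$ is itself a pseudo-pocket of $G^S_2$ with no even cycle, contradicting \autoref{nopseudo}.

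Concretely, recall that $G^S_3$ is obtained from $G^S_2$ by replacing every edge $e = uv \in E(G^S_2)$ with a two-edge path $u\, w_e\, v$. Define
\begin{equation*}
  V(Q') := V(Q) \cap V(G^S_2), \qquad
  E(Q') := \{\, uv \in E(G^S_2) : u, v, w_{uv} \in V(Q)\,\}.
\end{equation*}
I would first check that $Q'$ is connected and contains a cycle. Any path in $Q$ between two original vertices $u,v \in V(Q) \cap V(G^S_2)$ must traverse each intermediate subdivision vertex $w_e$ via both of its (degree-two) endpoints, so it projects to a walk in $Q'$ and witnesses $Q'$-connectivity. Similarly, a cycle of $Q$ projects to a cycle of $Q'$, and by the parity convention (twin edges or preimage parity), this cycle is even in $G^S_2$ iff the original one was even in $G^S_3$. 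Thus $Q'$ contains a cycle, and if $Q$ had no even cycle, neither does $Q'$.

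The technical step, and the main bookkeeping obstacle, is to verify that $Q'$ has at most two vertices with outside neighbours in $G^S_2$. I would do this by defining, for each outside-neighbour vertex $u \in V(Q')$, a \emph{witness} in $V(Q)$: pick an edge $uv \in E(G^S_2)$ with $uv \notin E(Q')$, meaning $v \notin V(Q)$ or $w_{uv} \notin V(Q)$; in the former subcase take the witness to be $w_{uv}$ if $w_{uv} \in V(Q)$ and otherwise $u$ itself, and in the latter subcase take the witness to be $u$ itself. In each case the witness is a vertex of $V(Q)$ with a neighbour outside $V(Q)$ in $G^S_3$. Since subdivision vertices $w_{uv}$ have degree exactly two with distinct endpoints, distinct outside-neighbour vertices of $V(Q')$ are assigned distinct witnesses, giving an injection from outside-neighbour vertices of $Q'$ into outside-neighbour vertices of $Q$; the latter set has size at most two by assumption, so $Q'$ has at most two such vertices as well.

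Combining the three properties shows that $Q'$ is a pseudo-pocket of $G^S_2$ with no even cycle, contradicting \autoref{nopseudo}. Hence no such $Q$ exists, and the corollary follows.
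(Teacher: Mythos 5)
Your argument is correct and takes essentially the same route as the paper: both derive a contradiction with \autoref{nopseudo} by extracting from $Q$ a pseudo-pocket $Q'$ of $G^S_2$ that would lack even cycles. The paper's version is a two-sentence sketch that simply asserts $Q$ ``was obtained from a subgraph $Q'$ of $G^S_2$ by subdividing edges''; you carry out the bookkeeping it leaves implicit, in particular the injection from outside-neighbour vertices of $Q'$ into those of $Q$ needed to confirm that $Q'$ really is a pseudo-pocket.
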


Recall from \autoref{match-tile} and the  paragraph afterwards, that a  pseudo-tiling of our graph corresponds to the union of a matching of the dual graph and a set of even faces. A tiling corresponds to the union of a matching of the dual graph not containing any edge incident to the infinite face and a set of even finite faces.  
Under this correspondence, the existence of large pseudo-tilings is a much more natural thing to prove.
Let us first formally define a large pseudo-tiling.
\begin{definition}
\label{pseudo-perfect2}
  Let $\alpha \in (0,1)$.
  A pseudo-tiling $\mathcal{T} $ is \emph{$\alpha$-pseudo-perfect} if it covers all even faces (including the infinite face if it is even) and a $\beta$-fraction of the odd faces, and a $\psi$-fraction of the faces of $H$ are even, where  
  \begin{equation}
  \label{pseudo-perfecteq2}
    \beta (1-\psi)+2\psi \geq \alpha \enspace .
  \end{equation}
\end{definition}
We will first prove the existence of large pseudo-perfect pseudo-tilings.
We fix an embedding of~$H$.
For any multigraph $W$, let $\mathsf{oc}(W)$ be the number of odd components of $W$. 
Recall pseudo-tilings correspond to matchings. 
Our proof will use  Tutte's Theorem stated below, which informally speaking, says that the absence of a large matching implies the existence of a small set of vertices whose removal results in a graph with a large number of connected components of odd size.
\begin{theorem}[Tutte's Theorem]
\label{TutteThm}
  For any graph $G$, the number of nodes of $G$ which are not covered by a maximum size matching of $G$ is at most 
  \begin{equation}
  \label{Tutt}
     \mathsf{oc}(G \backslash X) - |X| \enspace .
  \end{equation}
  for some $X \subset V(G)$.  
  Further, if some node $v \in V(G)$ is covered by every maximum matching of~$G$, then \eqref{Tutt} holds for some $X \subset V(G)$ containing $v$.
\end{theorem}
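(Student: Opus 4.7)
The plan is to deduce Tutte's theorem (a form of the Tutte--Berge formula with an auxiliary ``essential vertex'' strengthening) from the Gallai--Edmonds structure theorem, by induction on $|V(G)|$. Let $M$ be a maximum matching of $G$ and let $U$ be the set of vertices uncovered by $M$. Consider the canonical Gallai--Edmonds decomposition $V(G) = D(G) \cup A(G) \cup C(G)$, where $D(G)$ is the set of vertices that are uncovered by at least one maximum matching of $G$, $A(G) := N(D(G)) \setminus D(G)$, and $C(G) := V(G) \setminus (D(G) \cup A(G))$. The three classical structural facts I would invoke are: (i) every connected component of $G[D(G)]$ is factor-critical (and hence of odd order); (ii) $G[C(G)]$ admits a perfect matching; and (iii) every maximum matching of $G$ matches $A(G)$ injectively into distinct components of $G[D(G)]$ and perfectly covers $C(G)$.

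From (iii), $|U|$ equals the number of connected components of $G[D(G)]$ minus $|A(G)|$. Choosing $X := A(G)$, the connected components of $G - X$ split into those contained in $D(G)$ (all odd by (i)) and those contained in $C(G)$ (each of even order by (ii)); hence $\mathsf{oc}(G - X)$ is at least the number of components of $G[D(G)]$, and therefore $\mathsf{oc}(G - X) - |X| \ge |U|$, which proves the main clause of the theorem.

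For the ``further'' clause, suppose $v \in V(G)$ is covered by every maximum matching, i.e.\ $v \notin D(G)$. If $v \in A(G)$, then the set $X = A(G)$ from the previous paragraph already contains $v$, and we are done. Otherwise $v \in C(G)$, in which case $v$ is essential: the maximum matching number of $G - v$ is one less than that of $G$, so any maximum matching of $G - v$ leaves $|U| + 1$ vertices uncovered. Applying the induction hypothesis to $G - v$ yields some $X' \subseteq V(G) \setminus \{v\}$ with $\mathsf{oc}((G - v) - X') - |X'| \ge |U| + 1$, and then $X := X' \cup \{v\}$ contains $v$, satisfies $G - X = (G - v) - X'$ and $|X| = |X'| + 1$, and hence $\mathsf{oc}(G - X) - |X| \ge |U|$, as required. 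The principal obstacle is establishing the Gallai--Edmonds facts (i)--(iii); the shortest self-contained route is a direct alternating-tree / augmenting-path analysis rooted at the vertices of $U$, repeatedly exploiting that $M$ admits no augmenting path because it is a maximum matching, and propagating parity information along even alternating walks to control which vertices lie in $D(G)$, $A(G)$, $C(G)$.
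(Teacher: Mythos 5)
The paper states \autoref{TutteThm} as a classical result and does not prove it, so there is no in-paper argument to compare against. Your derivation from the Gallai--Edmonds structure theorem is correct, and it is the standard route to the Tutte--Berge defect formula: with $X=A(G)$ the main clause holds with equality, since $\mathsf{oc}(G-A(G))$ is exactly the number of (odd, factor-critical) components of $G[D(G)]$ and every maximum matching matches $A(G)$ injectively into distinct such components while perfectly covering $C(G)$. Your handling of the ``further'' clause is also sound: if $v$ is essential then the defect of $G-v$ is $|U|+1$, so applying the (already-proved) main clause to $G-v$ and appending $v$ to the resulting barrier gives a Tutte set containing $v$. Two small remarks. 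First, you do not actually need an induction: the main clause is proved outright, so you may simply invoke it for $G-v$; phrasing it as induction is harmless but adds nothing. Second, there is a one-step alternative for $v\in C(G)$ that avoids even that: take $X=A(G)\cup\{v\}$; the even component of $G[C(G)]$ containing $v$ has $\ge 2$ vertices, and after deleting $v$ its remaining pieces have odd total size, so at least one new odd component appears, giving $\mathsf{oc}(G-X)-|X|\ge |U|$ directly. The only outstanding dependency, which you have flagged yourself, is that a self-contained proof must still establish the Gallai--Edmonds facts (i)--(iii); granting those, your argument is complete.
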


The main idea  of why such large pseudo-perfect pseudo-tilings should exist is that by Tutte's Theorem, the absence of a large pseudo-tiling implies that for some set $X$ of nodes of the dual graph $H^*$, the set  
of odd components of $H^* \backslash X$ is large relative to $|X|$.

Construct a new graph $H^{1}$ as follows.
Start with the graph $H^*$ and add as many edges as possible between nodes of $X$  while preserving planarity and not creating any faces of length~two (see \autoref{H1ocnHoc}). 

\begin{figure}[h]
    \centering
    \includegraphics[scale=0.8]{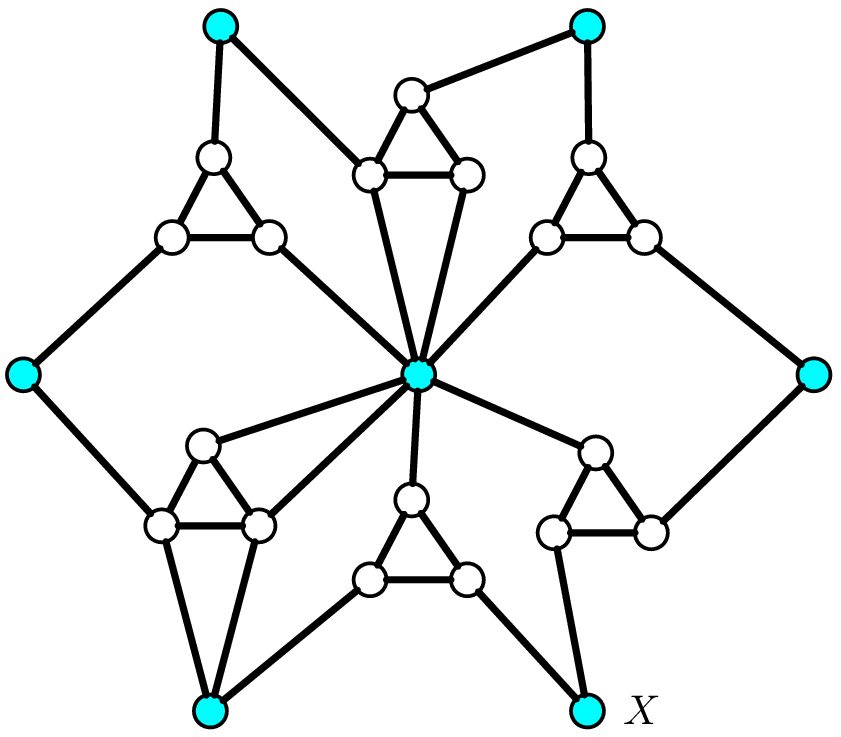} 
    \includegraphics[scale=0.8]{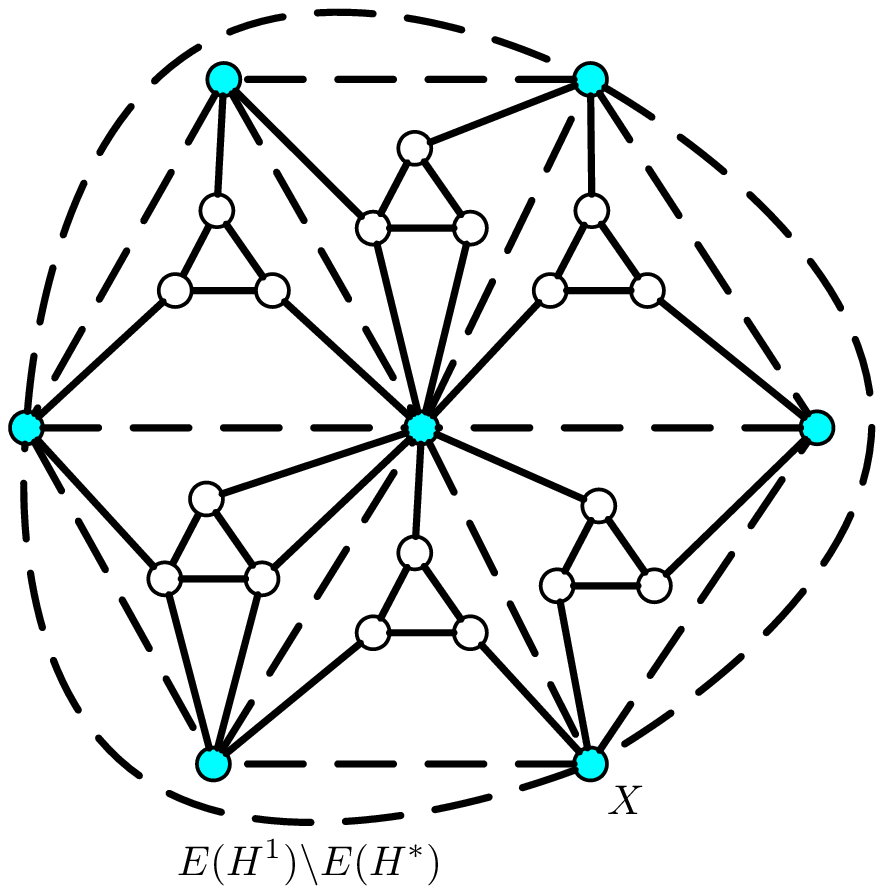}
    \caption{The graph $H^*$ with set $X \subset V(H^*)$ (depicted in blue) on the left. On the right, the graph $H^1$ obtained from $H^*$ by adding edges (dashed) between $X$. }
    \label{H1ocnHoc}
\end{figure}
We will show that each odd component of $H^{1} \backslash X$ lies in a different face of $H^{1}[X]$
and that $H^{1}$ contains at most two faces of length two. 
Thus using Euler's formula, $|E(H^{1}[X])| \leq 3 |V(H^{1}[X])| -4$, $H^{1}[X]$ does not have too many edges.
The crucial observation is that since each odd component of $H^{1} \backslash X$
lies in a different face of $H^{1}[X]$, each node $x \in X$  is adjacent to more other nodes of $X$ in $H^{1}$ than there are odd components of  $H^{1} \backslash X$ which contain a neighbour of~$x$. By facial region, we mean the region of the plane bounded by a face. 
We will also show there are at most two odd components $J_1, J_2$ for which at most two nodes of $X$ have neighbours in $J_i$, see \autoref{Deg2nddual} ii). There, for the odd component $J_i$, there are two nodes $u,w \in X$ which have neighbours in $J_i$. \autoref{Deg2nddual} iii) shows the ``corresponding dual graph'' $Q_i$ which contains only two nodes $s$ and $d$ with neighbours outside $Q_i$, which contradicts the fact that $H$ contains no pseudo-pockets. 
We can then show that the number of odd components is at most 
 $2/3$ the number of edges of $H^{1}[X]$ plus  $\frac{2}{3}$, which will contradict that the set of odd components is large.

\begin{lemma}
\label{pseudo-tiling}
  Let $H$ be as in \autoref{tiling}, that is, $H$ is a minimal pocket of $G^S_2$.
  Then $H$ has a $2/3$-pseudo-perfect pseudo-tiling.
\end{lemma}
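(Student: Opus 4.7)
The plan is to recast the existence of a $2/3$-pseudo-perfect pseudo-tiling as a matching problem in the planar dual $H^*$, and then to rule out any counterexample by combining Tutte's theorem with Euler's formula and the pocket-minimality of $H$. Any pseudo-tiling may take every even face of $H$ as a singleton tile and pair up adjacent odd faces via edges of $H^*$. Writing $n_e, n_o, n$ for the number of even, odd, and total faces of $H$, unfolding $\beta(1-\psi)+2\psi \ge 2/3$ with $\beta=2|M|/n_o$ and $\psi=n_e/n$ shows that a $2/3$-pseudo-perfect pseudo-tiling exists if and only if the subgraph $H^*_o$ of $H^*$ induced by the odd face nodes admits a matching of size $|M|\ge n/3-n_e$. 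If $n_e\ge n/3$ the empty matching already suffices, so I may assume $n_e<n/3$.

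Suppose for contradiction that the maximum matching $\mu$ in $H^*_o$ satisfies $\mu<n/3-n_e$. Tutte's theorem then furnishes $X\subseteq V(H^*_o)$ with $\mathsf{oc}(H^*_o\setminus X)>|X|+n/3+n_e$. Following the excerpt's outline, I would construct a planar super-graph $H^1$ of $H^*$ by greedily adding edges between nodes of $X$ as long as planarity is preserved and no new length-two face is created. The proof then rests on three structural facts: (i) each odd component of $H^*_o\setminus X$ is enclosed in a distinct face of $H^1[X]$, for otherwise one could add a further planar $X$-$X$ edge separating two components inside a shared face, contradicting edge-maximality of $H^1$; (ii) all but at most two odd components have at least three $X$-neighbors in $H^*$, because any component with one or two $X$-neighbors would correspond via the planar-dual back-translation to a pseudo-pocket of $H$ with at most two outside neighbors, contradicting pocket-minimality of $H$ by the corollary to Lemma~\ref{nopseudo} (the two exceptional components account for artefacts from the face of $H^1[X]$ adjacent to $h_\infty$); and (iii) $H^1$ has at most two length-two faces, so Euler's formula on $H^1[X]$ yields $|E(H^1[X])|\le 3|X|-4$.

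From (i), each $x\in X$ satisfies $\deg_{H^1[X]}(x)\ge$ (number of odd components of $H^*_o\setminus X$ adjacent to $x$ in $H^*$), since edge-maximality forces an $H^1[X]$-edge between consecutive such components in the rotation at $x$. Summing over $X$ and invoking (ii) gives
\[
3\bigl(\mathsf{oc}(H^*_o\setminus X)-2\bigr)\le 2|E(H^1[X])|\le 6|X|-8,
\]
hence $\mathsf{oc}(H^*_o\setminus X)\le 2|X|-2/3$. Combined with the Tutte inequality this forces $|X|>n/3+n_e$. On the other hand, since $X$ and the odd components of $H^*_o\setminus X$ lie in disjoint subsets of $V(H^*_o)$ with each odd component contributing at least one vertex, $\mathsf{oc}(H^*_o\setminus X)+|X|\le n_o=n-n_e$; combined once more with Tutte this forces $|X|<n/3-n_e$. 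These two bounds on $|X|$ are incompatible since $n_e\ge 0$, yielding the sought contradiction.

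The main obstacle is claim (ii): making the planar-dual translation from a low-$X$-neighbor odd component of $H^*_o\setminus X$ into a genuine pseudo-pocket of $H$ fully rigorous, and pinning down why at most two exceptional components can appear, requires careful bookkeeping for the several ways a cycle of $H$ may correspond to a dual edge incident to $h_\infty$ (cf.\ the discussion following Definition~\ref{match-tile}) and for how the infinite face interacts with the augmented graph $H^1$. Claims (i) and (iii) are comparatively routine edge-maximality and Euler arguments.
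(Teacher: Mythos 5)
Your overall strategy is the same as the paper's---recast pseudo-tilings as matchings in the dual $H^*$, apply Tutte's theorem, build an edge-maximal planar auxiliary graph, and then combine an Euler bound with a degree-counting argument driven by pocket-minimality---but there is a genuine gap in the middle of your argument: you construct the auxiliary graph $H^1$ by adding edges only among the Tutte set $X$, and then reason about faces of $H^1[X]$. The paper instead adds edges among $X\cup Y$, where $Y$ is the set of dual nodes corresponding to even faces, and works with $H'[X\cup Y]$. This difference matters.

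Concretely, your claim (i), that distinct odd components of $H^*_o\setminus X$ land in distinct faces of $H^1[X]$, can fail: two odd components $J_1$ and $J_2$ may each be adjacent (in $H^*$) to a common node $y\in Y$, so that $J_1\cup\{y\}\cup J_2$ is connected and avoids $X$, forcing $J_1$ and $J_2$ into the \emph{same} face of $H^1[X]$. For the same reason the degree inequality you deduce from edge-maximality---namely, that between any two $J$-neighbours in the rotation around $x\in X$ there must be an $H^1[X]$-edge---breaks down: the only candidate for the new edge's other endpoint may lie in $Y$, and you do not allow $X$--$Y$ edges to be added. Your claim (ii) has the same flaw in the other direction: what pocket-minimality forbids (via \autoref{nopseudo}) is an odd component with at most two $(X\cup Y)$-neighbours in $H^*$, not at most two $X$-neighbours, so a component with two $X$-neighbours but several $Y$-neighbours is not ruled out. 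The fix, which is exactly the route the paper takes, is to carry $Y$ along as part of the ``separator'': contract the odd components of $H^*\setminus(X\cup Y)$, add edges greedily among $X\cup Y$, bound $|E(H'[X\cup Y])|$ by Euler, bound the $J$--$(X\cup Y)$ edges via the rotation argument, and invoke pocket-minimality in terms of $(X\cup Y)$-neighbours. Once you substitute $X\cup Y$ for $X$ throughout (replacing the count $n_o$ by $n$ in the final bound $\mathsf{oc}+|X\cup Y|\le n$), the arithmetic you carried out still produces the contradiction, so your proof outline is salvageable with this correction. A minor secondary point: your Euler bound $|E(H^1[X])|\le 3|X|-4$ corresponds to the paper's strengthened estimate used only in \autoref{tilingimproved}; the base argument only needs the weaker $3|X\cup Y|-3$, which is all the paper proves here, and your arithmetic goes through with that as well.
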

\begin{proof}
  Suppose, for sake of contradiction, that $H$ does not have a 2/3-pseudo-perfect pseudo-tiling. 
  Recall that each edge of the dual graph $H^*$ of $H$ between two nodes which correspond to odd faces in $H$ corresponds to an even cycle of $H$.
  Thus, we may think of pseudo-tilings as the union of a set of even faces and a matching on the odd faces.
  Let $Y$ be the set of even faces of $H$.

  Consider a maximum matching of the odd faces of $H$, that is, a maximum matching $Q$ of $H^* \backslash Y$.
  Assume that $Q$ misses a $(1-b)$-fraction of the odd faces (of $H$), that is, $(1-b) = \frac{q'}{q}$, where $q'$ is the number of odd faces not incident to an edge of the matching, and $q$ is the total number of odd faces.  
  By \autoref{TutteThm} applied to $G = H^* \backslash Y$ (by an abuse of notation we also use $Y$ to denote the nodes of $H^*$ which correspond to faces of $Y$), there is a set of nodes of $ V(H^*) \backslash Y $ such that removing these nodes creates a relatively large number of odd components.
  More precisely, for some $X \subset V(H^*) \backslash Y$ we have
  \begin{equation}
  \label{tuttee2}
    (1-b) |V(H^* \backslash Y )| \leq \mathsf{oc}(H^* \backslash (X \cup Y))  -|X| \enspace .
  \end{equation}
  Tutte's Theorem also says that if $h_\infty$ is  
   covered by every maximum matching of $H^* \backslash Y$, then we may pick $X$ containing $h_\infty$.
  By rearranging \eqref{tuttee2}, we obtain $|V(H^*)| - |Y| - b|V(H^* \backslash Y )| \leq \mathsf{oc}(H^* \backslash (X \cup Y)) -|X|$. Subtracting $|Y|$ from both sides, we get  
  \begin{equation}
  \label{tuttee2p5}
     |V(H^*)| - 2|Y| - b|V(H^* \backslash Y) | \leq \mathsf{oc}(H^* \backslash (X \cup Y)) - |X \cup Y| \enspace .
  \end{equation}

  Note that a $|Y|/|V(H^*)|$-fraction of all the faces of $H$ are even, and by definition, a $b$-fraction of all the odd faces are covered by $Q$.
  There is a pseudo-tiling $\mathcal{T}$ corresponding to $Y \cup Q$ under \autoref{match-tile} and the paragraph afterwards.

  Let $J_1,\hdots,J_\ell$ be the odd components of $H^* \backslash (X \cup Y)$.
  Let $\hat{H}$ be the graph obtained from $H^*$ by contracting each $J_i$ deleting created parallel edges and loops. 
  For $i = 1,\hdots,\ell$ let $j_i$ be the node obtained by contracting $J_i$; let $J = \{j_1,\hdots,j_\ell\}$.
  Let $H'$ be an edge maximal (multi) graph obtained from $\hat{H}$ by adding edges between nodes of $X \cup Y$ while preserving planarity and not creating any faces of length two.

We will show the following 3 claims.
 \begin{claim}
  \label{sumodeg}
    The inequality  $\sum_{i=1}^\ell |\delta_{\hat{H}}(j_i) | \geq 3\ell-2$ holds. 
  \end{claim}
   \begin{claim}
  \label{EdgesOfH'}
    It holds $|E(H'[X \cup Y])| \leq 3| X \cup Y| - 3$.
  \end{claim}
   \begin{claim}
  \label{JvsH'}
    It holds $|E(H') \cap J \times (X \cup Y)| \leq 2|E(H'(X \cup Y)|$.
  \end{claim} 
We defer the proofs for now and show how to finish the proof given these claims.
From \autoref{sumodeg}, it follows that $3|J| -2  \leq \sum_{i=1}^\ell |\delta_{H'}(j_i)| = |E(H') \cap J \times (X \cup Y)|$. Thus, by \autoref{JvsH'} and \autoref{EdgesOfH'}, it follows that
  \begin{equation*}
    3|J| - 2 \leq 2|E(H'(X \cup Y)|
           \leq 6|X \cup Y| - 6 \enspace .
  \end{equation*}
  So $| X \cup Y| \geq 0.5 |J| =\ell$. 

Suppose for a contradiction that the pseudo-tiling $\mathcal{T}$ is not 2/3-pseudo-perfect, 
  then \eqref{pseudo-perfecteq2} of \autoref{pseudo-perfect2} is violated, that is, 
  \begin{equation*}
    b(1- (|Y|/|V(H^*)|) ) + 2|Y|/|V(H^*)|  < 2/3 \enspace .
  \end{equation*}
  After simplifying, we obtain $2|Y| +b |V(H^* \backslash Y) | < \frac{2}{3}|V(H^*)|$.
  Therefore, it holds\linebreak $\frac{1}{3}|V(H^*)| <  |V(H^*)| - 2|Y| - b|V(H^* \backslash Y)|$.
  Substituting this into the left-hand side of \eqref{tuttee2p5}, we obtain
  \begin{equation}
  \label{tuttee3}
     \frac{1}{3}|V(H^*)|    < |V(H^*)| - 2|Y| - b |V(H^* \backslash Y) |
                         \leq \mathsf{oc}(H^* \backslash X \cup Y)  -|X \cup Y|.
  \end{equation}

  From $|J|+|X \cup Y| \leq |V(H^*)|$ and $|X \cup Y| \geq \frac{1}{2}|J|$, we get $\frac{2}{3}|V(H^*)| \geq |J|$. 
  Consequently,
  \begin{equation*}
    \frac{1}{3}|V(H^*)| \geq \frac{1}{2}|J|
                        \geq |J| - |X \cup Y|
                           = \mathsf{oc}(H^* \backslash (X \cup Y) ) - |X \cup Y|,
  \end{equation*}
  which contradicts \eqref{tuttee3}.
  Therefore, $\mathcal{T}$ is $2/3$-pseudo-perfect.
  This completes the proof of the lemma.
  \end{proof}
 We use the notation in the proof of \autoref{pseudo-tiling} throughout the rest of this section.
 Denote by~$Q_i$ the subgraph of $H$ induced by the faces of $H$ corresponding to $J_i$. 
  Given a node $v \in V(H^*)$, denote by $v^* \subset H$ the face of $H$ which  $v$ corresponds to.
  Let $h^*_\infty$ denote the infinite face of $H$ and $h_\infty$ the node of the dual graph $H^*$ corresponding to $h^*_\infty$.
  
  We need the following remark for the next claim.
\begin{remark}\label{QiCycle}
 If $h\infty \notin J_i$, then the infinite face of $Q_i$ is a cycle.
\end{remark}
\begin{proof}
    Assume for a contradiction the infinite face $f_{Q_i \infty}$ of $Q_i$ was not a cycle. 
    Then there is a cycle $C$ of $f_{Q_i \infty}$  for which the region bounded by $C$ contains at least one and not all finite faces of~$Q_i$. Let $ F $ be the set of finite faces of $Q_i$ bounded by $C$. Since $C$ ``separates" the faces of $F$ from the other finite faces of $Q_i$, the vertices of $J_i$ corresponding to faces of $F$ are not reachable from the other vertices of $J_i$ in $H^* \backslash h_\infty$. 
\end{proof}
  
  We argue that $Q_i$ cannot be a pseudo-pocket.

  If $Q_i$ is a pocket, then since $Q_i$ is contained in $H$, this contradicts the fact that $H$ is an inclusion-wise minimal pocket. 
  Otherwise, $Q_i$ is a pseudo-pocket with no even cycle, which by \autoref{nopseudo}, cannot appear in the 2-compression of a graph.
  The following claim shows that a certain condition on $j_i$ implies $Q_i$ is a pseudo-pocket, which implies that such a condition cannot hold for $j_i$.
   \begin{claim}
  \label{claim:pseudopockets}
    Suppose the degree $|\delta_{\hat{H}}(j_i)|$ of $j_i$ in $\hat{H}$ is at most $ 2$, $h\infty \notin J_i$ and no node of $Q_i$ on the infinite face has a neighbour outside $H$ (see node $t$ in \autoref{Deg2nddual} $vi)$).
    Then $Q_i$  is a pseudo-pocket. 
 \end{claim}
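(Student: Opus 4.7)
The plan is to verify the three defining conditions of a pseudo-pocket for $Q_i \subseteq G^S_2$: connectedness, containing a cycle, and having at most two vertices with neighbours in $V(G^S_2)\setminus V(Q_i)$. Connectedness and the presence of a cycle are immediate. Dual-adjacent nodes of the connected set $J_i\subseteq V(H^*)$ correspond to faces of $H$ sharing a primal edge, and every such edge borders a $J_i$-face and hence lies in $Q_i$, so $Q_i$ is connected; and the boundary walk of any single $J_i$-face is a cycle lying in $Q_i$.

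For the outside-neighbour bound, I would introduce the cycle $C^*$ bounding the infinite face of $Q_i$ (which exists by \autoref{QiCycle} using $h_\infty\notin J_i$), the at most two distinct neighbours $f_1,f_2\in V(H^*)\setminus J_i$ of $j_i$ in $\hat H$ (so every edge of $C^*$ has $f_1$ or $f_2$ as its outside face, and these are the only non-$J_i$ faces of $H$ touching $Q_i$), and the at most two vertices $a,b\in V(H)$ carrying neighbours in $V(G^S_2)\setminus V(H)$, which exist by the pocket property of $H$. Because $H^*$ is connected and every component of $H^*\setminus J_i$ is adjacent to $J_i$, the hypothesis $|\delta_{\hat H}(j_i)|\le 2$ limits $H^*\setminus J_i$ to at most two components: one containing $h_\infty$ (yielding the infinite face of $Q_i$), plus possibly one additional hole.

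A vertex of $V(Q_i)$ with a neighbour in $V(G^S_2)\setminus V(Q_i)$ either belongs to $\{a,b\}\cap V(Q_i)$ or has a neighbour in $V(H)\setminus V(Q_i)$. Hypothesis~(3) forces $\{a,b\}\cap V(Q_i)\subseteq V(Q_i)\setminus V(C^*)$, i.e., strictly into the interior of $Q_i$. For the second category, every edge from $V(Q_i)$ to $V(H)\setminus V(Q_i)$ is drawn in a non-$J_i$ face-region of $Q_i$'s embedding and has its $V(Q_i)$-endpoint on the boundary of $f_1$ or $f_2$. I would then split cases on $|\delta_{\hat H}(j_i)|\in\{1,2\}$ and on whether $H^*\setminus J_i$ has one component or two, to identify in each configuration the precise set of boundary vertices of $Q_i$ carrying such edges, and to verify that the union of this set with $\{a,b\}\cap V(Q_i)$ has at most two elements.

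The main obstacle is this concluding case analysis: for each configuration of $j_i$ one must track, using planarity of $H$, which specific vertices on $C^*$ (or on the single hole boundary) carry outside-$Q_i$ edges, and show that the total together with interior $\{a,b\}$-vertices does not exceed two. The structural engine driving the bound is that the two hypotheses on $j_i$ confine $Q_i$'s interface with the rest of $H$ to the at most two adjacent faces $f_1,f_2$, while hypothesis~(3) places $a,b$ (if they lie in $V(Q_i)$) into the interior of $Q_i$, so that the two "slots" provided by the pocket property absorb all outside-neighbour vertices — giving $Q_i$ the required pseudo-pocket structure.
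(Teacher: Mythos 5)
Your proposal correctly identifies the setup (the outer cycle $C^*=W_i$ of $Q_i$, the at most two neighbouring dual nodes $f_1,f_2$ of $J_i$, and the $\leq 2$ nodes $a,b$ of $H$ with neighbours outside $H$), but it stops exactly at the step that carries all the weight of the argument. You write ``I would then split cases \ldots to identify in each configuration the precise set of boundary vertices of $Q_i$ carrying such edges, and to verify that the total \ldots does not exceed two,'' and you then name this unfinished case analysis ``the main obstacle.'' That case analysis \emph{is} the proof. The paper's argument is that every edge of $W_i$ lies on $f_1^*$ or $f_2^*$, and then -- crucially -- that because $H$ contains no pseudo-pockets, the intersection $W_i\cap f_1^*$ is a single path $A_1$, with $W_i\cap f_2^*$ the complementary path $A_2$ sharing the same two endpoints $s,d$; hence $s,d$ are the only vertices of $Q_i$ that can have a neighbour in $H\setminus Q_i$ (for the degree-one case, all of $W_i$ lies on $u^*$ and only the single attachment vertex of $u^*$'s boundary to $W_i$ qualifies). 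Your proposal never invokes this ``two-faces-share-a-path'' consequence of $H$ being a minimal pocket, and without it there is no bound on the number of boundary candidates, so the claim is not established.

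There is a second, logical, gap. You place $\{a,b\}\cap V(Q_i)$ into the interior of $Q_i$ via hypothesis~(3), and then ask that ``the total together with interior $\{a,b\}$-vertices does not exceed two.'' But if $a$ or $b$ is an interior vertex of $Q_i$, it has a neighbour outside $H\supseteq Q_i$, so it is itself a vertex of $Q_i$ with an outside-$Q_i$ neighbour, on top of the boundary candidates $s,d$. Your framework then allows up to four outside-neighbour vertices, and nothing in the proposal explains why the ``two slots'' absorb them rather than adding to the count. The paper sidesteps this by arguing (implicitly assuming the stronger reading of the hypothesis, namely that no vertex of $Q_i$ at all has a neighbour outside $H$) that $s,d$ are the only outside-neighbour vertices; your weaker placement of $a,b$ in the interior is not reconciled with that and would need a separate argument that such interior $a,b$ in fact cannot occur. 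As it stands, the proposal sets up the right objects but omits both the no-pseudo-pocket/path argument that bounds the boundary interface to two vertices and the argument that excludes additional interior outside-neighbour vertices, so it does not prove the claim.
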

 We illustrate the previous claim in \autoref{Deg2nddual} $i)$-$iii)$. In $i)$, $j_i$ has two neighbours $u$ and $w$. In $iii)$, $Q_i$ is bounded by the two faces $u^*$ and $w^*$ and only the nodes $s$ and $d$ in $Q_i$, the two nodes of~$Q_i$ which belong to both $u^*$ and $w^*$, have neighbours outside $Q_i$.
 \begin{proof}
  Intuitively, the neighbours of $J_i$ in $ H^* \backslash J_i$ correspond to the faces of $H$ bound $J_i$. 
  Informally, if $J_i$ has only 2 neighbours $u,w$ in $ H^* \backslash J_i$ and $u,w \neq h_\infty$,  then the corresponding faces $u^*$ and $  w^*$ bound $Q_i$, which implies $Q_i$ is a pocket (see \autoref{Deg2nddual} $iii)$). 

  To be precise, suppose that $j_i$ has degree two and $u,w$ are the only nodes of $V(H^*) \backslash J_i$ with neighbours in $J_i$ (see \autoref{Deg2nddual} $ii)$). 
  Each edge $e$ on the infinite face $W_i$ of $Q_i$ lies on a face $a^*$ of $H$ where~$a$ is a node of $ H^* \backslash J_i$.
  The only nodes of $V(H^*)$ that have neighbours in $J_i$ are $u,w$. 
  Thus,~$a$ is either $u$ or $w$. 
  So the edge $e$ lies on one of the faces $u^*$ or $w^*$.
  We may assume $u \neq h_ \infty$.
  Recall that~$H$ contains no pseudo-pockets. 
  Therefore, the intersection of any two finite faces of a subgraph of $H$ with a common edge is a path.  
 Let $W_i$ denote the outside face of $Q_i$, which by \autoref{QiCycle} is a cycle. 
It follows that $A_1 = W_i \cap u^* $ is a path. 
Let $s$ and $d$ denote the endpoints of $A_1$. 
Since each edge of $A_1$ lies on a face of $Q_i$ and $u^*$, it does not lie on the face $w^*$.
 So $A_2 = W_i \cap w^*$ consists of the subgraph of $W_i$ formed by the nodes not in the interior of $A_1$. 
  Hence, $A_2$ is a path with endpoints $s$ and $d$.
  Thus, in the graph $H$, only nodes $s$ and $d$ of $Q_i$ can have neighbours in~$H \backslash Q_i$. 
  Thus, if no node of $Q_i$ has a neighbour outside $H$, then $Q_i$ is a pseudo-pocket of~$H$.

Now suppose $j_i$ has a single neighbour $u$.
Let $W_i$ denote the outside face of $Q_i$, which is a cycle.
If $u^*$ is the infinite face, then $W_i \cap u^*$ is the infinite face of $Q_i$, which is a cycle. 
In this case $Q_i = H$. 
Suppose $u \neq h_\infty$. 
Since each edge lies on two faces, each edge of $W_i$ lies on $u^*$. 
Note that faces of graphs are enclosed by closed walks such that each cycle  contains at most one node with a neighbour in the walk but outside this cycle.
Thus, there is exactly a single node $s \in W_i$ for which $s$ contains a neighbour in $u^* \backslash W_i$. 
This node $s$ is the only node of $Q_i$ with a neighbour outside $Q_i$, see \autoref{Deg2nddual} $vii)$.  
Thus, $Q_i$ is a pseudo-pocket of $H$.
  \begin{figure}[h]
    \centering
    \includegraphics[scale=0.8]{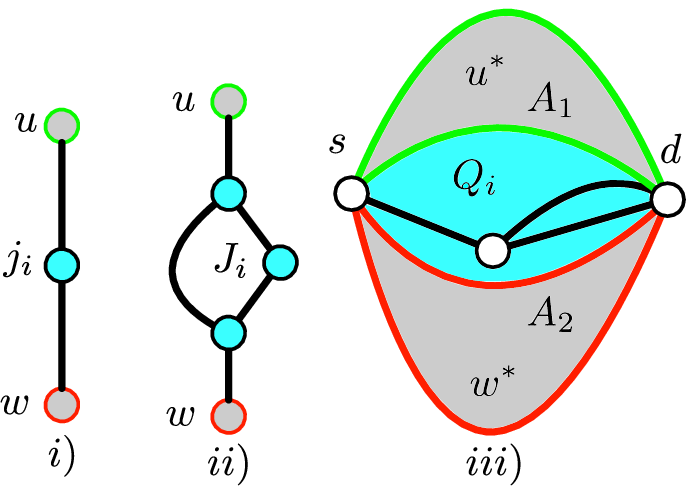} 
    \hspace{1mm}
    \includegraphics[scale=0.8]{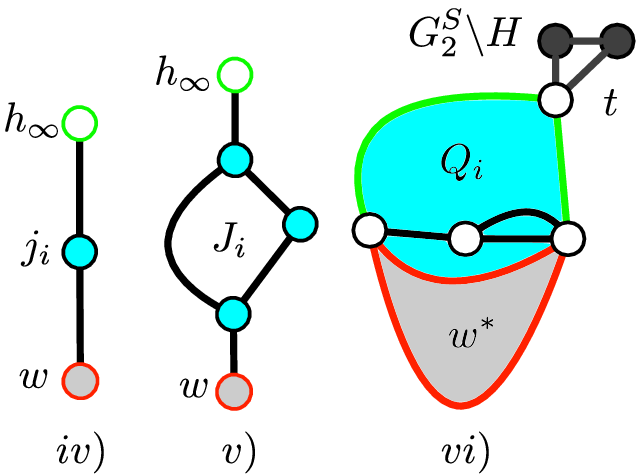}
     \includegraphics[scale=0.8]{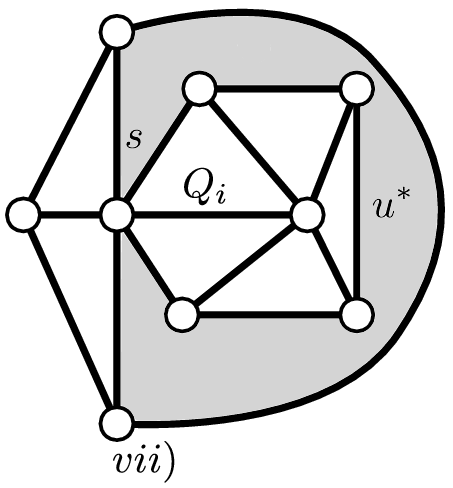}
     \caption{Figures $i),ii),$ and $iii)$ show how a degree two node in $\hat{H}$, not incident to $h_\infty$, which is shown in (i), corresponds to a pseudo-pocket, which is shown in (iii). 
     Figures $iv),v),vi)$ show the exception when conditions of \autoref{claim:pseudopockets} are not satisfied, that is, the node $j_i$ is adjacent to $h_\infty$, and a node $t$ on the infinite face of $Q_i$ has a neighbour outside $H$. In this case, $j_i$ may not correspond to a pseudo-pocket of $G^S_2$.
     The shaded nodes in vi) are part of $G^S_2 \backslash H$. Figure $vii)$ shows $Q_i$ bounded by a single face $u^*$. In this case $Q_i$ is also a pseudo-pocket.
     }
     \label{Deg2nddual}
  \end{figure}
    This completes the proof of \autoref{claim:pseudopockets}.
  \end{proof}

 The proof of \autoref{sumodeg} will use the fact that $H^* \backslash h_\infty$ is connected, which we prove next. 
   \begin{remark}\label{Hdual2conn}
   For the minimal pocket $H$ found by \autoref{tiling}, $H^* \backslash h_\infty$ is connected.
 \end{remark}
 \begin{proof}
 We show $H$ is 2-connected.
   Note that if $H$ has a cut node $v$, then some component of $H \backslash v$, say $H^1$, contains at most one node with a neighbour outside $H$.
  As a consequence, $H^1$ would be a smaller pocket, which would contradict the fact that $H$ is a minimal pocket. 
  Thus, $H$ is 2-connected.
  It is well known that if $H$ is two connected, then the infinite face $h^*_\infty$ is a cycle. 
  Thus each face of $H$ lies  in the finite region bounded by $h^*_\infty$ and thus $H^* \backslash h_\infty$ is connected.
     \end{proof}

  We now prove \autoref{sumodeg}, \autoref{EdgesOfH'} and \autoref{JvsH'}.
  \begin{proof}[Proof of \autoref{sumodeg}]
   
    We distinguish two cases.
    \begin{enumerate}
        \item \label{1J_i} Some $j_i$ contains only one neighbour in $X$.
        \item \label{2J_i} Each $j_i$ contains at least 2 neighbours in $X$.
    \end{enumerate}
  
In Case \ref{1J_i}, we claim that for $j_a$ such that $a \neq i$, $|\delta_{\hat{H}}(j_a) | \geq 3$. We consider three sub-cases.
  
    Case \ref{1J_i}a) $h_\infty \notin J_i$ and the one neighbour that $J_i$ has in $X$ is not $h_\infty$.
    Then by \autoref{claim:pseudopockets} the subgraph of $H$ corresponding to the faces $J_i$ is a pocket, which contradicts our assumption that $H$ is a minimal pocket.
  
   Case \ref{1J_i}b) $h_\infty \notin J_i$ and the one neighbour that $J_i$ has in $X$ is $h_\infty$.
    Then $h_\infty$ separates~$J_i$ from the rest of $H^* \backslash h_ \infty$.
    That is, $J_i$ is a component of $ H^* \backslash h_\infty $. By \autoref{Hdual2conn}, $ H^* \backslash h_\infty $ is connected, so $J_i = H^* \backslash h_\infty$. Thus there do not exist $J_a$ for $a \neq i$ and the condition is trivially true.

    Case \ref{1J_i}c) $h_\infty \in J_i$.
    Then $J_i$ contains all  nodes that have neighbours outside $H$ and no other $J_a$ contains a node with a neighbour outside $H$. Thus for each $a \neq i$, $j_a$ satisfies $|\delta_{\hat{H}}(j_a) | \geq 3$.
  
    In all three sub-cases, $J_a$ does not contains a node with a neighbour outside $H$.
    So, $|\delta_{\hat{H}}(j_a) | \geq 3$ for all $a \in \{1,\hdots,\ell\} \backslash \{i\}$.
    
    Therefore in Case \ref{1J_i},  $\sum_{t=1}^\ell |\delta_{\hat{H}}(j_t) | \geq 3\ell-2$.
    This completes the analysis of Case \ref{1J_i}.
  
    In the Case \ref{2J_i}, each $J_i$ contains at least two neighbours in $X$.
    If $Q_i$  contains a node $v_i$ with a neighbour outside $H$ in the interior of the shared path between $Q_i$ and the infinite face of~$H$, then $v_i$ has degree~two in $H$.
    Thus, $v_i$ is incident to only faces $h_\infty$ and $J_i$.
    So $v_i$ does not lie in any~$Q_t$ for $t \neq i$.
    Since at most two nodes of $H$ have neighbours outside $H$, there are at most two $Q_a$ 
    that contain a node $v_a$ with a neighbour outside $H$ in the interior of the shared path between $Q_a$ and the infinite face of $H$.
    For these $Q_a$, $|\delta_{\hat{H}}(j_a)| \geq 2$. 
    For every other $Q_r$, $|\delta_{\hat{H}}(j_r)|$ is at least 3, and thus $\sum_{t=1}^\ell |\delta_{\hat{H}}(j_t) | \geq 3\ell-2$.  
  
    In either case, we get $\sum_{t=1}^\ell |\delta_{\hat{H}}(j_t) | \geq 3\ell-2$, as desired.    
    This completes the proof of \autoref{sumodeg}.
\end{proof}

  \begin{proof}[Proof of \autoref{EdgesOfH'}]
    First note that if $H'[X \cup Y]$ contains  parallel edges $e_1,e_2$ between two nodes $u,w \in X \cup Y$, then in the planar embedding of $H'$, there are nodes of $J$ that lie in the region bounded by $e_1$ and~$e_2$.  
    The faces corresponding to $u$ and $w$ in $H$ then bound a pocket unless one of those faces is the infinite face, and the region bounded contains a node with a neighbour outside $H$. 
    Hence, $H'[X \cup Y]$ contains at most two faces of length two. 
    Thus, if $ |X \cup Y| \geq 2 $, then $H'[X \cup Y]$  contains at most two more edges than a planar graph on at least two nodes, that is, at most $2 + 3|X \cup Y| - 5 = 3|X \cup Y| - 3$ edges. 
    Otherwise, $|X \cup Y | \leq 1  $, so $ |E(H'(X \cup Y)| = 0 $, which is at most  $ 3 | X \cup Y | -3 $. 
    This completes the proof of \autoref{EdgesOfH'}.
  \end{proof}

  \begin{proof}[Proof of \autoref{JvsH'}]
    We claim that in any embedding of $H'$ each node $r \in X \cup Y$ does not have two consecutive neighbours in $J$ in the clockwise orientation about $r$.  
    Assume that some $r \in X \cup Y$ has two consecutive neighbours $j_a,j_b \in J$.
    Consider the face containing the nodes $r,j_a,j_b$. 
    Let~$r'$ be a neighbour of $j_b$ in this face. 
    Then the edge $r r'$ can be added to $H'$ without creating a face of length~two, which contradicts the fact that $H'$ is an edge maximal multigraph with respect to planarity and not having faces of length two, that is, no edge can be added to $H'$ while maintaining planarity and not creating any face of length two. 

    This implies that, for each $x \in X \cup Y$, it holds
    \begin{equation*}
      |E(H') \cap J \times \{x \}|  \leq  |E(H') \cap (X \cup Y) \times \{x \}| \enspace .
    \end{equation*}

    Summing up over all each $x \in X \cup Y$ we obtain  
    \begin{align*}
      |E(H') \cap J \times (X \cup Y)|  &    = \sum_{x \in X \cup Y} |E(H') \cap J \times \{x \}|\\
                                        & \leq \sum_{x \in  X \cup Y} |E(H') \cap (X \cup Y) \times \{x \}|\\
                                        & \leq 2|E(H'(X \cup Y)| \enspace .
    \end{align*}
    Thus, it holds $|E(H') \cap J \times (X \cup Y)| \leq 2|E(H'(X \cup Y)|$.

    This completes the proof of \autoref{JvsH'}.
  \end{proof}

So let $\mathcal T$ be a $2/3$-pseudo-perfect pseudo-tiling of $H$.
Let $\beta'$ be the fraction of odd faces of $H$ which are covered by $\mathcal T$, and let $\psi'$ be the fraction of even faces of $H$.
Next, we will show that if $\mathcal{T}$ covers more faces than a maximum tiling of $H$, then $\mathcal{T}$ satisfies a slightly stronger condition than $2/3$-pseudo-perfect, namely, $\beta'(1-\psi')|V(H^*)| + 2\psi'|V(H^*)| \geq \frac{2}{3}|V(H^*)|+\frac{4}{3}$. 
Formally, this means:
\begin{lemma}
\label{tilingimproved}
  Let $H$ be as in \autoref{tiling}, that is, $H$ is a minimal pocket of $G^S_2$. 
  Suppose that any maximum size pseudo-tiling of $H$ covers the infinite face.
  Then $H$ has a pseudo-tiling covering a $\beta'$-fraction of all odd faces such that
  \begin{equation}
    \label{eqn:lemma3old}
    \beta'(1-\psi')|V(H^*)| + 2\psi'|V(H^*)| \geq \frac{2}{3}|V(H^*)| + \frac{4}{3} \enspace .
  \end{equation}
\end{lemma}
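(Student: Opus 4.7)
The plan is to imitate the proof of Lemma~\ref{pseudo-tiling} while exploiting the new hypothesis through the strengthened form of Tutte's Theorem (Theorem~\ref{TutteThm}).  If $h^*_\infty$ is even, then $h_\infty\in Y$ holds trivially.  Otherwise $h_\infty$ is odd and, by hypothesis, every maximum matching of $H^*\setminus Y$ covers $h_\infty$, so the strengthened part of Theorem~\ref{TutteThm} lets me pick a deficiency-witness set $X\subseteq V(H^*)\setminus Y$ with $h_\infty\in X$.  In either case, $h_\infty\in X\cup Y$.

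With $h_\infty$ now in $X\cup Y$, I redo the construction of $\hat H$, its edge-maximal planar extension $H'$, and the odd components $J_1,\dots,J_\ell$ of $H^*\setminus(X\cup Y)$ exactly as in Lemma~\ref{pseudo-tiling}.  The three counting claims~\ref{sumodeg}, \ref{EdgesOfH'}, and~\ref{JvsH'} carry over verbatim and yield the chain
\[
3\ell-2 \;\leq\; \sum_{i=1}^\ell |\delta_{\hat H}(j_i)| \;\leq\; 2|E(H'[X\cup Y])| \;\leq\; 6|X\cup Y|-6.
\]
In the original proof this produced $|X\cup Y|\ge|J|/2+2/3$; my task is to sharpen it to $|X\cup Y|\ge|J|/2+1$, i.e., to gain two units of slack on the right-hand side.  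The additional slack comes from $h_\infty\in X\cup Y$: Case~1c of Claim~\ref{sumodeg} (which requires $h_\infty\in J_i$) is now ruled out, and each of the at most two ``bad'' $J_i$ of Case~2 must touch the infinite face of $H$ and so has $h_\infty$ as a neighbour in $\hat H$.  Charging these two extra $h_\infty$-incidences against the ``$-2$'' in Claim~\ref{sumodeg} raises that bound to $\sum|\delta_{\hat H}(j_i)|\ge 3\ell$, which combined with the other two claims gives the desired $|X\cup Y|\ge|J|/2+1$.

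With $|X\cup Y|\ge|J|/2+1$ in hand, combining with $|J|+|X\cup Y|\le|V(H^*)|$ yields $|J|\le\tfrac{2}{3}|V(H^*)|-\tfrac{2}{3}$ and hence $|J|-|X\cup Y|\le\tfrac{|J|}{2}-1\le\tfrac{1}{3}|V(H^*)|-\tfrac{4}{3}$; substituting this into the Tutte bound~\eqref{tuttee2p5} and rearranging exactly as in the final paragraphs of the proof of Lemma~\ref{pseudo-tiling} yields inequality~\eqref{eqn:lemma3old}.  The hardest part of this plan is the sharpening of Claim~\ref{sumodeg}: I need to verify that the $h_\infty$-incidence at each bad $J_i$ is a \emph{new} neighbour of $j_i$ in $\hat H$, not one of the two neighbours already supplied by the original Case~2 argument.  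A case analysis on whether the corresponding subgraph $Q_i$ is bounded in $H$ by $h^*_\infty$ together with one or with two other faces should settle this.  If this direct tightening falls short in some configuration, a backup is to sharpen Claim~\ref{EdgesOfH'} instead, by observing that every length-$2$ face of $H'[X\cup Y]$ must be incident to the single vertex $h_\infty$, which limits the aggregate contribution of such faces to a tighter bound than the generic ``$+2$'' used in Lemma~\ref{pseudo-tiling}.
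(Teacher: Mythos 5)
The paper proves this lemma by sharpening Claim~\ref{EdgesOfH'} rather than Claim~\ref{sumodeg}: under the hypothesis, $h_\infty \in X\cup Y$, and Claim~\ref{quasitileproof} gives $|E(H'[X\cup Y])|\le 3|X\cup Y|-4$, which (combined with Claims~\ref{sumodeg} and~\ref{JvsH'} unchanged) yields $|J|\le 2|X\cup Y|-2$ and then the desired inequality by the same rearrangement of~\eqref{tuttee2p5}. Your backup plan is therefore the one the paper actually takes, and your identification of the target $|X\cup Y|\ge |J|/2+1$ is correct.

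Your primary plan, however, has a genuine gap, and the issue you flag at the end is precisely where it breaks. You want each bad $J_i$ of Case~2 to gain $h_\infty$ as a \emph{new} neighbour, raising its degree from $\ge 2$ to $\ge 3$. But when the infinite face is odd, the strengthened Tutte theorem places $h_\infty\in X$ (not $Y$), and then $h_\infty$ may already be one of the two $X$-neighbours that the Case~2 hypothesis supplies. Concretely, a bad $Q_a$ can be bounded in $H$ by exactly two faces, $h_\infty^*$ and one other $u^*$ with $u\in X$: then $j_a$ has degree exactly~$2$ in $\hat H$ with neighbour set $\{h_\infty,u\}$. Nothing forbids this configuration --- the node $v_a$ on the infinite face with a neighbour outside $H$ merely gives $Q_a$ three boundary nodes ($s$, $d$, $v_a$), so $Q_a$ is not a pseudo-pocket of $G^S_2$ and there is no contradiction with the minimality of~$H$. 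Your proposed case analysis (``bounded by $h_\infty^*$ and one versus two other faces'') runs straight into this case and does not resolve it, so the bound $\sum_i|\delta_{\hat H}(j_i)|\ge 3\ell$ is not established.

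Two further points about the backup plan. First, the observation that every length-$2$ face of $H'[X\cup Y]$ is incident to $h_\infty$ is true, but it is not by itself what saves the unit: the paper's Claim~\ref{quasitileproof} gains its improvement by using the sharper Euler bound $3n-6$ (valid for $n\ge 3$) in place of the generic $3n-5$ from Claim~\ref{EdgesOfH'}, and by a separate pocket argument ruling out three parallel edges when $|X\cup Y|=2$. Second, the corner case $X\cup Y=\{h_\infty\}$ makes the target $3|X\cup Y|-4$ negative, so it cannot be proved as an edge bound; the paper instead argues directly in that case that a maximum tiling is already $2/3$-quasi-perfect. Your outline does not address this case, and any completed proof along either of your two routes must handle it separately.
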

\begin{proof}

  To show the statement of \autoref{tilingimproved}, we will need the following slight strengthening of \autoref{EdgesOfH'}.
  \begin{claim}
  \label{quasitileproof}
    Suppose that any maximum size pseudo-tiling of $H$ covers the infinite face and $H$ admits no $2/3$-quasi-perfect tiling.
    Then $|E(H'(X \cup Y)| \leq 3| X \cup Y| - 4$.
  \end{claim}
  \begin{proof}[Proof of \autoref{quasitileproof}.]
  Let $\mathcal{T}$, $X$, $Y$ be as in the proof of \autoref{pseudo-tiling}. 
  If the infinite face of $H$ is odd, then by assumption, every maximum matching of $H^* \backslash Y$ covers $h_\infty$.
  Recall this meant we picked~$X$ to contain $h_\infty$.
  Otherwise, $h_\infty \in Y$.
  So we may assume $ h_\infty \in X \cup Y$.
  
 By \autoref{Hdual2conn}, if $ X \cup Y = \{ h_\infty \}$, then $ \mathsf{oc}(H^* \backslash (X \cup Y)) =1$, which means that either $\beta' = 1$  or $X = \emptyset$. 

 Let us first prove the claim in the case that $ X \cup Y = \{ h_\infty \}$. 
 Suppose $ X \cup Y = \{ h_\infty \}$.

  In case $\beta'=1$, then a maximum pseudo-tiling covers all odd faces, and a maximum tiling covers all but at most one odd face. 

  In case $X= \emptyset$, we get that at most one odd face is not covered by a maximum pseudo-tiling.
  As $X \cup Y = \{h_\infty\}$, the infinite face is even.
  Thus, at most one odd face is missed by a maximum tiling.

  In either case, a maximum tiling $\mathcal{T}$ misses at most one odd face. 

  Let $\beta$ be the fraction of odd finite faces that are covered by $\mathcal{T}$, and $\psi$ the fraction of finite faces of~$H$ that are even.
  As $\mathcal{T}$ misses at most one odd face, it holds\linebreak $\beta (1- \psi )(|V(H^*)|-1) \geq (1- \psi )(|V(H^*)|-1) -1$.
  
  First, assume that $H$ contains some even finite face.
  Then
  \begin{align*}
    \beta (1- \psi )(|V(H^*)|-1) + 2 \psi (|V(H^*)|-1) & \geq (1- \psi )(|V(H^*)|-1) -1 + 2 \psi (|V(H^*)|-1)\\
                                                       &   = (|V(H^*)|-1) -1 + \psi (|V(H^*)|-1)  \geq (|V(H^*)|-1) \enspace .
  \end{align*}
  So, $\mathcal{T}$ is $2/3$-quasi-perfect.
  
  Second, suppose that $H$ contains no even finite faces.
  If $H$ contains a single odd finite face, then it contains no even cycle, which is a contradiction.
  If $H$ contains exactly two odd finite faces, then since the maximum tiling misses at most one odd finite face, all odd finite faces of $H$ are covered; so, a maximum tiling is 1-quasi-perfect.
  
  If $H$ contains three or more finite faces.
  Then noting that at most one face of $H$ is not covered by  $\mathcal{T}$, it follows that $\psi (|V(H^*)|-1) + \beta (1- \psi )(|V(H^*)|-1) \geq |V(H^*)|-2 $.
  So the inequality $ \beta (1- \psi )(|V(H^*)|-1) + 2 \psi (|V(H^*)| -1) \geq  \psi (|V(H^*)|-1) + \beta (1- \psi )(|V(H^*)|-1) \geq |V(H^*)|-2  $ holds, which by algebra yields  $\beta (1- \psi ) + 2 \psi \geq \frac{|V(H^*)|-1}{|V(H^*)|-2} $.
  As $|V(H^*)|-1 \geq 3$, $ \frac{|V(H^*)|-1}{|V(H^*)|-2} \geq \frac{2}{3}$, so the tiling is $\frac{2}{3}$-quasi-perfect. 
 
  Henceforth, we assume $ |X \cup Y| \geq 2$.

  Suppose first that $|X \cup Y| =2$.
  If $H'(X \cup Y)$ contains three parallel edges $e_1,e_2,e_3$, then it contains three faces of length two each bounded by a pair of parallel edges.
  Since $H'$ contains no parallel edges, the set $R_i$ of nodes lying in the face bounded by the parallel edges $e_ie_{i+1}$ where $e_4=e_1$ is nonempty for $i=1,2,3$.
  For illustration, see \autoref{3paredge}(i).
  Let $R^*_i$ be the subgraph of $H$ induced by the nodes that lie on a face which is the dual of a node of $R_i$. 
  \begin{figure}[h]
    \centering
    \includegraphics[scale=0.8]{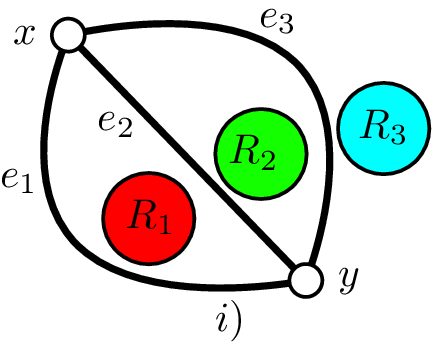}   \hspace{3mm}
    \includegraphics[scale=0.8]{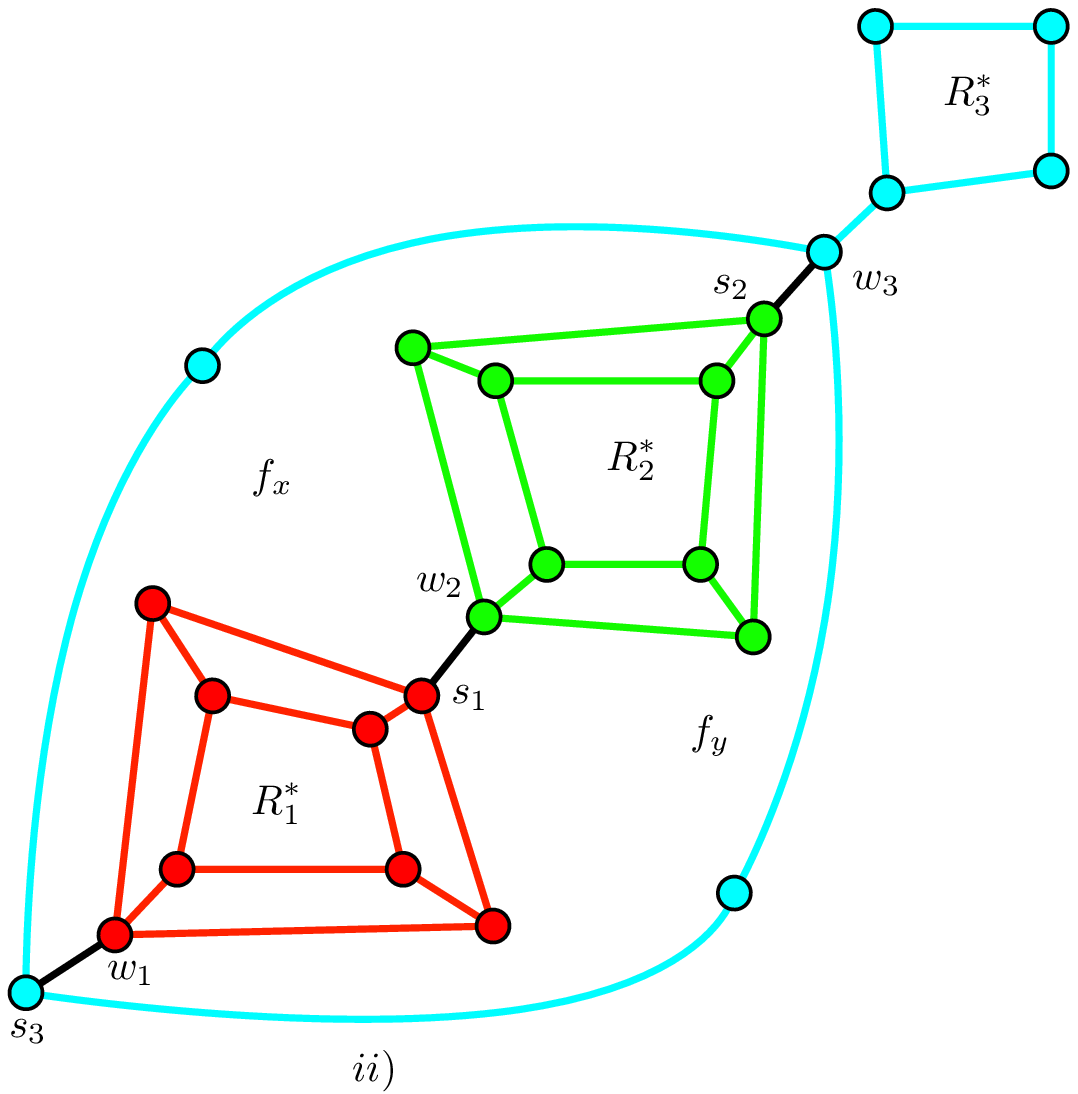}
    \caption{In i) 3 parallel edges $e_1,e_2,e_3$ bounding node sets $ R_1,R_2,R_3 $ in $H^*$ and in ii) the duals $R^*_1,R^*_2,R^*_3$ in $H$ respectively. }
    \label{3paredge}
  \end{figure}
  Note that $R^*_i$ lies in a region~$T_i$ bounded by the faces $f_x$ and $f_y$ of $H$ which are dual to $x$ and $y$ respectively, see \autoref{3paredge}ii).
  Denote by $ w_i $ and $s_i$ the two nodes on the boundary of the region $T_i$ belonging to both faces $f_x$ and~$f_y$.
  Then $R^*_i$ is a pocket unless some node of $V(R^*_i) \backslash \{ w_i, s_i \} $ has a neighbour outside $H$. 
  Note that for $i \neq j$, $ V(R^*_i) \backslash \{ s_i,w_i \}  )  \cap (V(R^*_j) \backslash \{ s_j,w_j \} ) = \emptyset $.
  Since at most 2 nodes of $H$ have neighbours outside $H$, at least one $R_i$ has no node with a neighbour outside $H$ and thus is a pocket, which is a contradiction.  
  Hence $H'(X \cup Y)$ contains only two edges, and $ |E(H'(X \cup Y))| \leq 2 = 3|X \cup Y| -4$.  

  Second, suppose that $|X \cup Y| >2$.
  By Euler's formula, any planar graph with nodes $X \cup Y$ without faces of length two has at most $3|X \cup Y|-6$ edges.
  Suppose $F_i$, $i=1,\hdots, p$  are faces of length~two in $H'(X \cup Y)$, let $q_i,r_i$ be the nodes, and $e_i,d_i$ the edges of $F_i$.
  Since $H'$ contains no parallel edges, the subgraph $R_i$ of $H'$ lying inside the region bounded by $F_i$, is nonempty.
  Let $R^*_i$ denote the subgraph of $H$ induced by the set of nodes that lie on a face which is the dual of a node of $R_i$.
  Then each $R^*_i$ lies in a region $T_i$ bounded by two faces $f_{q_i}$ and $f_{r_i}$  which are the dual of $q_i$ and~$r_i$.
  Let $ s_i, w_i $ be the nodes of $H$ on the boundary of $T_i$ that belong to both faces $f_{q_i}$ and $f_{r_i}$.
  See \autoref{OnePar} for an illustration.

  If no node of $V(R^*_i) \backslash \{ s_i,w_i \}$ has a neighbour outside $H$, then $R^*_i$ is a pseudo-pocket.
  Note that for $i \neq j$, the sets $ V(R^*_i) \backslash \{s_i,w_i\}$ and $V(R^*_j) \backslash \{ s_j,w_j \}$ are disjoint.
  Hence, if there were three length-2 faces $F_1,F_2,F_3$, then one of $R_1,R_2,R_3$ would be a pseudo-pocket, which is a contradiction.
  Thus, $H'(X \cup Y)$ contains at most two faces of length two.
  Therefore, there are two edges $e'_1e'_2$ that we can remove from $H'(X \cup Y)$ such that $H'(X \cup Y) \backslash \{e'_1,e'_2\}$ contains no face of length two.
  Hence, $|E(H'(X \cup Y) \backslash \{e'_1, e'_2\}| \leq 3|X \cup Y| - 6$ and $|E(H'(X \cup Y))| \leq 3|X \cup Y| - 4$.

This completes the proof that $|E(H'(X \cup Y))| \leq 3|X \cup Y| - 4$.
  \begin{figure}[h]
    \centering
    \includegraphics[scale=0.8]{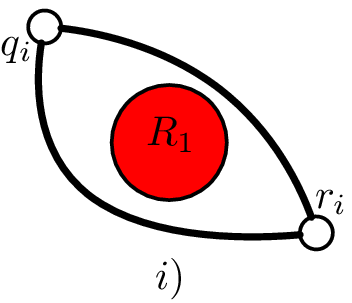}
    \includegraphics[scale=0.8]{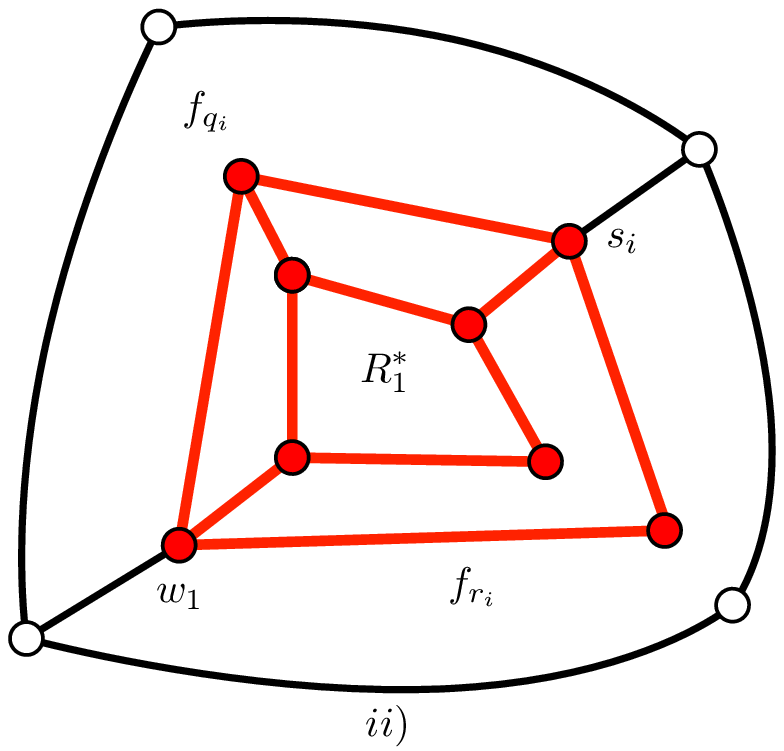}
    \caption{ One the left, one parallel edge in $H'$ bounding a region containing a set of nodes $R_1$. On the right is shown the dual graph, in which $R_1^*$ is a pocket. 
    }
    \label{OnePar}
  \end{figure}

  \end{proof}
  
  By assumption, $\mathcal{T}$ covers more faces than a maximum tiling of $H$.
  Suppose, for sake of contradiction, that $\beta'(1-c)|V(H^*)| + 2c|V(H^*)| < \frac{2}{3}|V(H^*)|+4/3$.
  Then, by \autoref{quasitileproof}, it holds $|E(H'(X \cup Y)| \leq 3| X \cup Y| - 4$.
  Further, by \autoref{JvsH'}, it holds $|E(H') \cap J \times \{x \}| \leq 2|E(H'(X \cup Y)|$.
  Also, by \autoref{sumodeg}, we have $\sum_{i=1}^\ell |\delta_{\hat{H}}(j_i) | \geq 3\ell-2$. 
  So in summary, we obtain
  \begin{equation*}
    3|J| -2  \leq \sum_{i=1}^\ell |\delta_{H'}(j_i)| = |E(H') \cap J \times (X \cup Y)| \enspace .
  \end{equation*}
  Hence, $3|J| -2 \leq 2|E(H'(X \cup Y)| \leq 6|X \cup Y | -8$.  
  Therefore,
  \begin{equation}
  \label{quasiJvsXcupY}
    |J| \leq 2|X \cup Y| - 2 \enspace .
  \end{equation}
  Substituting $\psi'=\frac{|Y|}{|V(H^*)|}$ into $\beta'(1-\psi')|V(H^*)| + 2c|V(H^*)| < \frac{2}{3}|V(H^*)|+4/3$, we obtain\linebreak $\beta'(1- (|Y|/|V(H^*)|) )|V(H^*)| + 2|Y| < \frac{2}{3}|V(H^*)| + \frac{4}{3}$.
  So $2|Y| + \beta'|V(H^* \backslash Y)| < \frac{2}{3}|V(H^*)| + \frac{4}{3}$, and thus $\frac{1}{3}|V(H^*)| - \frac{4}{3} < |V(H^*)| - 2|Y| - \beta'|V(H^* \backslash Y)|$.
  Substituting this into the left-hand side of \eqref{tuttee2p5}, we obtain 
  \begin{equation}
  \label{tutte3}
    \frac{1}{3}|V(H^*)| - \frac{4}{3}     <  |V(H^*)| - 2|Y| - \beta'|V(H^* \backslash Y)|
                                       \leq \mathsf{oc}(H^* \backslash X \cup Y) - |X \cup Y| \enspace .
  \end{equation}
  Multiplying both sides by $(-1)$ and adding to $\mathsf{oc}(H^* \backslash X \cup Y)  + |X \cup Y| \leq |V(H^*)|$, we obtain $2|X \cup Y | < \frac{2}{3}|V(H^*)| + \frac{4}{3}$. Simplifying, we obtain $ |X \cup Y | < \frac{1}{3}|V(H^*)| + \frac{2}{3}$. 
  Thus,  
  \begin{equation}
  \label{tutte4}
    \mathsf{oc}(H^* \backslash X \cup Y) > 2|X \cup Y| - 2 \enspace .
  \end{equation}
  This, however, contradicts \eqref{quasiJvsXcupY}.
  Hence, $\beta'(1-\psi')|V(H^*)| + 2\psi'|V(H^*)| \geq \frac{2}{3}|V(H^*)|+4/3$, which completes the proof.
\end{proof}

\begin{theorem}
  Let $H$ be an inclusion-minimal pocket of $G^S_2$.
  Then we can obtain $2/3$-quasi-perfect tiling of~$H$ in polynomial time.
\end{theorem}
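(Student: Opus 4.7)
The plan is to bootstrap from the two pseudo-tiling results already proved in this section: \autoref{pseudo-tiling}, which produces a $2/3$-pseudo-perfect pseudo-tiling, and \autoref{tilingimproved}, which produces a pseudo-tiling with the strictly stronger bound $\beta'(1-\psi')|V(H^*)| + 2\psi'|V(H^*)| \geq \frac{2}{3}|V(H^*)| + \frac{4}{3}$ whenever every maximum pseudo-tiling of $H$ is forced to cover $h_\infty$. The only structural difference between a pseudo-tiling and a tiling is coverage of the infinite face, so the strategy is either to find a maximum pseudo-tiling that already avoids $h_\infty$, or to use the $+4/3$ slack of \autoref{tilingimproved} to afford deleting the unique tile covering $h_\infty$.

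Algorithmically, I would first compute the set $Y$ of even faces of $H$ together with a maximum matching of $H^* \setminus Y$ using any standard polynomial-time matching algorithm. By comparing the maximum matching sizes of $H^* \setminus Y$ and $H^* \setminus (Y \cup \{h_\infty\})$, I can decide in polynomial time whether some maximum pseudo-tiling leaves $h_\infty$ uncovered. In the case that such a pseudo-tiling exists (which forces the infinite face to be odd, since otherwise $h_\infty \in Y$ is covered automatically), I take this pseudo-tiling $\mathcal{T}$, which is already a tiling. The proof of \autoref{pseudo-tiling} gives $\beta \cdot O_F + 2 E_F \geq \frac{2}{3}|V(H^*)|$, where $O_F$ and $E_F$ denote the total numbers of odd and even faces. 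Since the infinite face is odd and uncovered, the covered odd faces are precisely the covered odd finite faces and $E_F = E_f$; hence $\beta \cdot O_F + 2 E_F \geq \frac{2}{3}(|V(H^*)|-1)$, which is the quasi-perfect condition for $\mathcal{T}$ viewed as a tiling.

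Otherwise, every maximum pseudo-tiling covers $h_\infty$, so \autoref{tilingimproved} applies and, via the same matching computation, yields a pseudo-tiling $\mathcal{T}$ satisfying the strengthened inequality. I would then delete from $\mathcal{T}$ the unique tile covering $h_\infty$: either the single-face tile $\{v_{h_\infty}\}$ when the infinite face is even (losing one even face from the count), or the edge tile pairing $h_\infty$ with another odd face when the infinite face is odd (losing two odd faces). In both subcases, subtracting $2$ from the left-hand side of $\beta' O_F + 2 E_F \geq \frac{2}{3}|V(H^*)| + \frac{4}{3}$ leaves $\frac{2}{3}|V(H^*)| - \frac{2}{3} = \frac{2}{3}(|V(H^*)|-1)$, which is precisely the quasi-perfect condition after appropriately reindexing the fractions over finite faces instead of all faces. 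The main obstacle is the precise bookkeeping of which faces drop out when the $h_\infty$-tile is deleted, together with checking that the $+4/3$ slack from \autoref{tilingimproved} is exactly enough to absorb the loss in each of the two subcases; once this accounting is verified, the remaining steps are routine and the entire procedure reduces to at most two maximum matching computations on $H^*$, and so runs in polynomial time.
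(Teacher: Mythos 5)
Your proposal matches the paper's proof in both structure and substance: both split according to whether a maximum pseudo-tiling already avoids $h_\infty$, invoke \autoref{pseudo-tiling} in the first case, invoke \autoref{tilingimproved} and delete the $h_\infty$-tile (single even face, or pair of odd faces) in the second case, and observe that the work reduces to maximum-matching computations on $H^*$. The bookkeeping you flag as the ``main obstacle'' checks out in both subcases, and the argument is if anything slightly cleaner than the paper's since you avoid the explicit fraction rewrites over finite versus all faces.
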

\begin{proof}
  We first show that $H$ admits a $2/3$-quasi-perfect tiling.
   Let us show that if some tiling $\mathcal{T}$ is $2/3$-pseudo-perfect, then it is $2/3$-quasi-perfect.
  Let $\beta'$ be the fraction of odd faces of $H$ that are covered by $\mathcal{T}$ and $\psi'$ the fraction of faces of $H$, that are even.
  As $\mathcal{T}$ is $2/3$-pseudo-perfect, it covers all even faces. 
  Since $\mathcal{T}$ is a tiling, the infinite face is odd.
  As the number of even finite faces is $\psi'|V(H^*)|$, so $\frac{\psi'|V(H^*)|}{|V(H^*)|-1}$ is the fraction of finite faces of $H$ that are even.
  $(1-\psi')|V(H^*)|$ is the number of odd faces of $H$, so $\beta'(1-\psi')|V(H^*)|$ is the number of odd faces of $H$ covered by $\mathcal{T}$. 
  Since the infinite face is odd, $(1-\psi')|V(H^*)|-1$ is the number of odd finite faces.  
  Thus $\frac{\beta'(1-\psi')|V(H^*)|}{(1-\psi')|V(H^*)|-1}$ is the fraction of odd finite faces of $H$ covered by $\mathcal{T}$. 
  Since 
  \begin{align*}
    &\qquad   \frac{\beta'(1-\psi')|V(H^*)|}{(1-\psi')|V(H^*)|-1} \left(1-\frac{\psi'|V(H^*)|}{|V(H^*)|-1}\right) +\frac{2\psi'|V(H^*)|}{|V(H^*)|-1}\\
    & = \frac{\beta'(1-\psi')|V(H^*)|}{(1-\psi')|V(H^*)|-1}(1-\psi') +  2\psi' + \left(2-\frac{\beta'(1-\psi')|V(H^*)|}{(1-\psi')|V(H^*)|-1}\right )\left(\psi'- \frac{\psi'|V(H^*)|}{|V(H^*)|-1}\right)\\ 
    &\leq \frac{\beta'(1-\psi')|V(H^*)|}{(1-\psi')|V(H^*)|-1}(1-\psi') +  2\psi' \\
    &\leq \frac{2}{3},
  \end{align*}
  it holds that $\mathcal{T}$ is $2/3$-quasi-perfect.

  If there is a maximum size pseudo-tiling that is also a tiling,  then it follows from \autoref{pseudo-tiling} that such a tiling is $2/3$-quasi-perfect.
  
  Otherwise, if no pseudo-tiling exists, the largest pseudo-tiling is larger than the largest tiling. 
  Let $\mathcal{T}$ be a maximum size pseudo-tiling.
 
  If the infinite face of $\mathcal{T}$ is even, consider the tiling~$\mathcal{T'}$ obtained by removing the infinite face from $\mathcal{T}$.
  Let $\psi^{(1)}:=(\psi'|V(H^*)|-1)/(|V(H^*)|-1)$ be the fraction of finite faces of $H$ which are even.
  As the infinite face is even, $\beta'$ is the fraction of odd finite faces of $H$ which are covered by~$\mathcal{T}'$.
  It holds that
  \begin{align*}
    \beta'(1-\psi^{(1)})(|V(H^*)|-1) + 2\psi^{(1)}(|V(H^*)|-1) &    = \beta'|V(H^*)|(1-\psi')+\psi'|V(H^*)| - 1\\
                                          & \geq \frac{2}{3}|V(H^*)| + \frac{4}{3} - 1\\
                                          &    = \frac{2}{3}(|V(H^*)| - 1) \enspace .
  \end{align*}
  So $\mathcal{T}'$ is $2/3$-quasi-perfect.

  If the infinite face is odd, consider the tiling $\mathcal{T'}$ obtained by removing the even cycle covering the infinite face from $\mathcal{T}$.
  Let $\psi^{(2)}:=\psi'|V(H^*)|/(|V(H^*)|-1)$ be the fraction of finite faces of $H$ that are even.
  At least $\beta'|V(H^*)|-2$ of the finite faces of $H$ are covered by $\mathcal{T'}$ 
  so the fraction $\beta'''$ of finite odd faces of $H$ that are covered satisfies $b'' \geq (\beta'|V(H^*)|-1)/(1-\psi^{(2)})(|V(H^*)|-1)$.
  Therefore,
  \begin{align*}
    b''(1-\psi^{(2)})(|V(H^*)|-1) +2\psi^{(2)}(|V(H^*)|-1) & \geq (\beta'|V(H^*)|-1) + 2c|V(H^*)|\\
                                             & \geq \frac{2}{3}|V(H^*)| + \frac{4}{3} - 1\\
                                             &    = \frac{2}{3}(|V(H^*)| - 1) \enspace .
  \end{align*}
  Hence also in this case, $\mathcal{T}'$ is $2/3$-quasi-perfect. 

  Finally, since a tiling corresponds to the union of a matching and a set of even faces,  finding a maximum tiling of $H$ corresponds to finding a maximum matching of the odd finite faces of $H$.
  Computing such a maximum matching can be done in polynomial time.
\end{proof}

\subsection{Proof of \autoref{by2.4}}\label{by2.4Proof}
In this section we will prove \autoref{by2.4}.

Let $G,H, \mathcal{R}, S, \mathcal{A}$ be as in the statement of \autoref{by2.4}.
Recall the notion of debit graph of $G$ from \autoref{debit}.
Let $\mathcal{D}_G$ be the debit graph of $G$ with respect to $S$.

We introduce the notion of ``balance'', which captures for subsets $\mathcal{R'} \subseteq \mathcal{R}$ of cycles are incident to more or less than $18/7$ nodes of $S$ 
on average.

\begin{definition}
\label{balance}
  For each subset $\mathcal{R}' \subseteq \mathcal{R}$, its \emph{balance} $\mathsf{bal}(\mathcal{R}')$ is the quantity $|\mathcal{R}'| - \frac{7}{18} |E_\mathcal{R}'|$.
\end{definition} 
Our proof follows the same methodology as Berman and Yaroslavtsev~\cite{BermanY2012}.
First, it shows a pseudo-witness cycle that is not a face and is minimally so, that is any pseudo-witness cycle lying in the finite region bounded by it is a face,  has balance at least $1-\frac{7}{18}$.
Then it uses this to apply a reduction on $G$.
We will use the following result of theirs.
\begin{proposition}[{\cite[Lemma 4.3]{BermanY2012}}]
\label{complex}
   Let $W$ be a planar graph, 
   $\hat{S}$ be a set of nodes of $W$ and $Q \subset \hat{S} $ be a set of nodes of $W$ that we call \emph{outer nodes}.
  Let $\mathcal{R}_W$ be a set of faces of $W$  such that each non-outer node of $\hat{S} \cap W $  has a pseudo-witness cycle in $\mathcal{R}_W$.  
  If $W$ contains $a \leq 2 $ outer nodes, 
  then  $\mathsf{bal}( \mathcal{R}_W $) $ \geq  1 - \frac{7}{18} a $.
\end{proposition}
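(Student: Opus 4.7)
\textbf{Proof plan for \autoref{complex}.}
The plan is to bound $\mathsf{bal}(\mathcal{R}_W) \geq 1 - \tfrac{7}{18}a$ by combining a planar embedding of the debit graph with the structural role of pseudo-witness faces. First, I would observe that the bipartite debit graph on $\mathcal{R}_W \cup \hat{S}$ admits a planar drawing inside $W$: place an auxiliary node $v_R$ in the interior of each face $R \in \mathcal{R}_W$ and route each incidence $(R,s) \in E_{\mathcal{R}_W}$ as an arc from $v_R$ to $s$ through $R$. Hence Euler's formula for planar bipartite graphs gives the baseline $|E_{\mathcal{R}_W}| \leq 2(|\mathcal{R}_W| + |\hat{S}|) - 4$, which by itself is too weak for the coefficient $7/18$ and must be sharpened.

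For the refinement I would exploit the hypothesis that every non-outer $s \in \hat{S}$ has a pseudo-witness face $F_s \in \mathcal{R}_W$ with $F_s \cap \hat{S} = \{s\}$. Choosing one such $F_s$ per non-outer node yields a subcollection $\mathcal{P} \subseteq \mathcal{R}_W$ of size at least $|\hat{S}| - a$ in which every face has degree exactly $1$ in the debit graph, so
\[
\mathsf{bal}(\mathcal{R}_W) \;=\; \tfrac{11}{18}|\mathcal{P}| + \mathsf{bal}(\mathcal{R}_W \setminus \mathcal{P}).
\]
I would then proceed by induction on $|\hat{S}|$: locate an innermost non-outer node $s$ (minimal with respect to the laminar family of pseudo-witness cycles), remove $s$ together with its pseudo-witness face $F_s$, and check that the total balance drops by at most $\tfrac{11}{18}$ per removed pair. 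Applying the inductive hypothesis to the smaller instance, with the same set of outer nodes and thus the same value of $a$, reduces the claim to a base case in which $\hat{S}$ consists only of the at most two outer nodes and the residual debit graph on $\mathcal{R}_W \setminus \mathcal{P}$ has all $\mathcal{R}$-side degrees at least $2$, which can be handled by the plain Euler bound above.

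The hard part will be calibrating the constants to obtain exactly $7/18$ rather than some weaker fraction. The bound is tight when the residual debit graph has many short faces of length $4$ or $6$ in its planar embedding, so a direct Euler count will overshoot, and a discharging scheme is needed that assigns small positive charges to high-degree debit-graph faces and balances them against the contribution of $\mathcal{P}$-faces. A subtle point is the handling of the up to $a$ outer nodes: since these are not required to possess pseudo-witness faces, their incidences with $\mathcal{R}_W$ can accumulate on a few faces and must be absorbed into the additive correction $-\tfrac{7}{18}a$. I expect the technical core of the argument to be a case analysis at the innermost pseudo-witness face, verifying that the chosen discharging rules maintain the invariant $\mathsf{bal}(\mathcal{R}_W) \geq 1 - \tfrac{7}{18}a$ under the inductive reduction, and that the few tight small-graph configurations attain equality.
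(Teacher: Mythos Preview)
The paper does not prove \autoref{complex}; it is imported wholesale from Berman and Yaroslavtsev \cite[Lemma~4.3]{BermanY2012} and invoked as a black box in the proof of \autoref{by2.4}. There is therefore no in-paper argument against which to compare your plan.

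As a standalone sketch, your decomposition $\mathsf{bal}(\mathcal{R}_W)=\tfrac{11}{18}\,|\mathcal{P}|+\mathsf{bal}(\mathcal{R}_W\setminus\mathcal{P})$ is correct, but the peel-off induction you describe does not close on its own. Removing an innermost non-outer node $s$ together with its witness face $F_s$ changes the balance by $-1+\tfrac{7}{18}(1+k)$, where $k$ is the number of \emph{other} faces of $\mathcal{R}_W$ through $s$; when $k=0$ this is $-\tfrac{11}{18}$, and nothing in the hypotheses forbids $k=0$ at every step. After $|\hat S|-a$ such removals the base case would have to supply a surplus of roughly $\tfrac{11}{18}(|\hat S|-a)$, which the bipartite Euler bound on a residual debit graph with only $a\le 2$ vertices on the $\hat S$ side cannot provide (that graph has at most $2a$ edges, hence balance at most $|\mathcal{R}_W\setminus\mathcal{P}|$, independent of $|\hat S|$). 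You correctly anticipate that a discharging argument is what is actually needed; the point is that the $\tfrac{11}{18}$ credit from each $\mathcal{P}$-face has to be spent \emph{locally}, against neighbouring high-degree faces in the planar debit embedding, rather than deferred to a global base case.
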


\begin{definition}
  If all nodes of a pseudo-witness cycle $A$ are contained in $H$, call $A$ a \emph{hierarchical} pseudo-witness cycle. 
  Otherwise, call $A$ a \emph{crossing} pseudo-witness cycle. 
  Denote the set of crossing pseudo-witness cycles by $\hat{\mathcal{A}}$.
\end{definition}
We are now ready to complete the proof of \autoref{by2.4}.
We begin by reductions on our instance $(G,H,\mathcal{R}, \mathcal{A},S)$ which simplify our instance and do not increase the balance. If after applying this reduction our instance has positive balance, then our instance had positive balance before the reduction. We define the reduction below.
\begin{definition}
\label{reduction}
  We define the following reduction on our instance $(G,H,\mathcal{R}, \mathcal{A},S)$.
  If $H$ contains a hierarchical pseudo-witness cycle $A$ that is not a face of $\mathcal{R}$, delete all nodes, edges and faces of~$\mathcal{R}$ inside $A$ from $H$ and add $A$ to $\mathcal{R}$.
  If $H$ does not contain a hierarchical witness cycle, we call the instance $(G,H,\mathcal{R}, \mathcal{A},S)$ \emph{reduced}.
\end{definition}
Let $\mathcal{R}_C$ be the faces in $\mathcal{R}$ contained in the region bounded by $C$.
Let $H^1,\mathcal{R}^1 $ be the result of applying the reduction in \autoref{reduction} on $H,\mathcal{R}$.
The balance of $H^1, \mathcal{R}^1$ is equal to
\begin{align*}
    |(\mathcal{R} \backslash \mathcal{R}_C)\cup\{C\}| - \sum_{M\in(\mathcal{R} \backslash \mathcal{R}_C)\cup\{C\}}|M \cap S|
  & = |\mathcal{R}| - \sum_{M \in \mathcal{R}}|M\cap S| - (|\mathcal{R}_C| + 1 - (\sum_{M \in \mathcal{R}_C}|M \cap S|) + 1)\\
  & = \mathsf{bal}(H)+1 - \mathsf{bal}(\mathcal{R}_C)-\frac{7}{18}.
\end{align*}
That is to say, the reduction changes the balance by $1 - \mathsf{bal}(\mathcal{R}_C)-\frac{7}{18}$, which by \autoref{complex} is non-positive. 

Thus if after applying the reduction in \autoref{reduction}, our instance has positive balance then it initially had positive balance. 
We know apply the reduction in \autoref{reduction} until our instance is reduced, for simplicity we will continue to call this graph $H$.

The crossing pseudo-witness cycles $\hat{\mathcal{A}}$  partition $H$ into \emph{regions}, see \autoref{PartitionOfH}.
\begin{figure}
  \centering
  \includegraphics[scale=0.8]{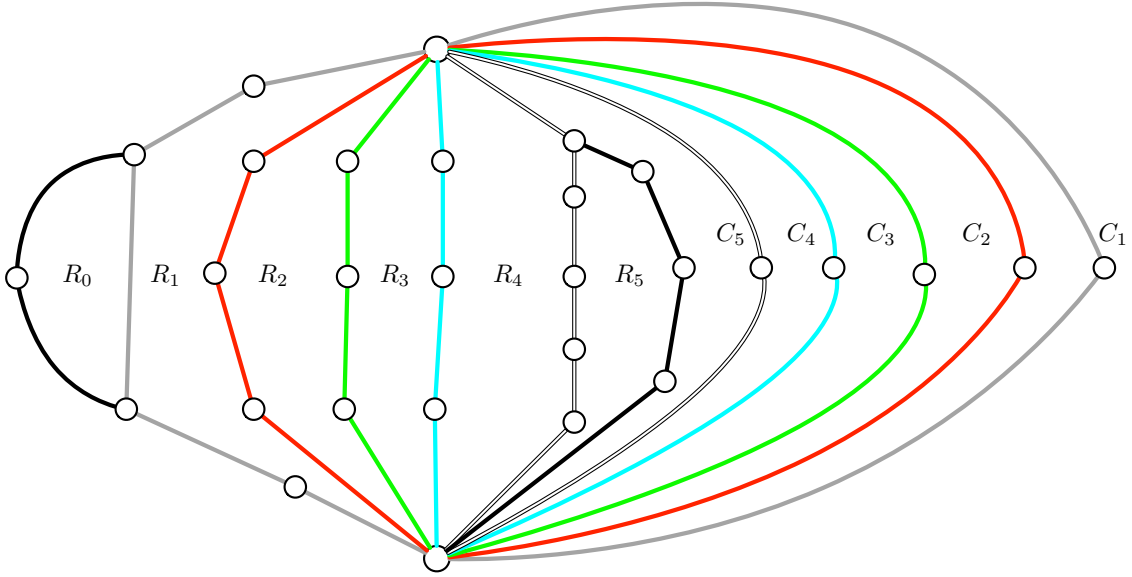}
  \caption{Pseudo-witness cycles $C_1,C_2,.., C_5$ divide $H$ into regions $R_0,R_1,\hdots,R_5$.\label{PartitionOfH}}
\end{figure}
That is, consider the subgraph $K \subset H $ consisting of nodes and edges lying on a witness cycle of $\hat{\mathcal{A}}$ or on the outside face of $H$. 
The regions are defined as the portions of the plane bounded by the finite faces of $K$.  
Define a \emph{subpocket} \cite{BermanY2012} as the subgraph of $H$ consisting of the nodes and edges lying in or on the boundary of a region.

\begin{proposition} [\cite{BermanY2012}]
\label{partitiontosubpocket}
  The regions that the set of crossing cycles $\hat{\mathcal{A}}$ partition the plane into satisfy the following. 
  For each region, there is a set $\tilde{\mathcal{A}}$ of at most 2 pseudo-witness cycles of $\hat{\mathcal{A}}$ such that each node  bounding the region either does not lie on a pseudo-witness cycle in $\hat{\mathcal{A}}$  or lies on a cycle of $\tilde{\mathcal{A}}$. 
\end{proposition}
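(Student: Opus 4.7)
The plan is to exploit the pocket property of $H$: by definition, at most two nodes of $H$, say $u$ and $v$, have neighbours outside $H$. A preliminary observation is that $H$ must in fact be $2$-connected, since otherwise one of the components of $H$ minus a cut node would be a strictly smaller pocket, contradicting minimality. Hence the outer face $\partial H$ is a simple cycle through $u$ and $v$, which splits into two arcs $P_1,P_2$ joining them. If $H$ has fewer than two boundary nodes, then any simple crossing cycle would have to cross the boundary an even positive number of times while visiting each boundary node at most once, which is impossible; hence $\hat{\mathcal A}=\emptyset$ and the claim is vacuous.

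Given this, consider any $A\in\hat{\mathcal A}$ and traverse it as a simple closed curve in the plane. Every transition between the interior of $H$ and its exterior uses a boundary node; simplicity of $A$ permits visiting each of $u,v$ at most once, and returning to the start forces the number of transitions to be even, hence exactly two. Therefore $A\cap H$ is a single $u$--$v$ path $Q_A$, while $A\setminus(H\setminus\{u,v\})$ is a companion $u$--$v$ path lying outside $H$. The laminarity of $\mathcal A$ given by \autoref{thm:minimalsubsetfvs} asserts that no two cycles of $\hat{\mathcal A}$ cross in the plane; because the companion arcs lie outside the closed region bounded by $\partial H$, the internal arcs $Q_A$ are pairwise non-crossing $u$--$v$ paths inside that region.

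Consequently the paths $\{Q_A:A\in\hat{\mathcal A}\}$ admit a linear order $Q_{A_1},\dots,Q_{A_k}$ from the side of $P_1$ to the side of $P_2$, and $K$ is combinatorially a planar graph consisting of $k+2$ internally disjoint $u$--$v$ paths $P_1,Q_{A_1},\dots,Q_{A_k},P_2$. The finite faces of $K$, i.e.\ the regions of the partition, are precisely the $k+1$ strips bounded by consecutive paths in this order. For the strip bounded by $Q_{A_i}$ and $Q_{A_{i+1}}$ set $\tilde{\mathcal A}=\{A_i,A_{i+1}\}$; for a strip bounded by some $P_j$ and $Q_{A_i}$ set $\tilde{\mathcal A}=\{A_i\}$; and when $k=0$ the unique region is served by $\tilde{\mathcal A}=\emptyset$. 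In every case $|\tilde{\mathcal A}|\le 2$, and a node on the boundary of the region either lies on $\partial H$ (and then on no cycle of $\hat{\mathcal A}$ except possibly $u$ or $v$, which lie on every cycle of $\hat{\mathcal A}$, in particular those of $\tilde{\mathcal A}$) or lies on one of the at most two bounding arcs $Q_{A_i}$, hence on a cycle of $\tilde{\mathcal A}$.

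The main obstacle I expect is the careful derivation of the two structural facts that each crossing cycle meets $H$ in exactly one $u$--$v$ path and that laminarity of the full cycles transfers to non-crossing of the corresponding internal arcs inside the disk bounded by $\partial H$. Both reduce to combining simplicity of the cycles with the two-boundary-node property of the pocket $H$, after which the proposition follows at once from the theta-like structure of the resulting planar graph $K$.
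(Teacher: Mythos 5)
The paper does not supply its own proof of this proposition; it is quoted directly from Berman and Yaroslavtsev, so there is no internal argument to compare against. Judged on its own merits, your proof captures the right structural picture — that crossing cycles meet $H$ along a laminar family of $u$--$v$ arcs which, together with the two boundary arcs of $\partial H$, subdivide the disk into strips — but it silently assumes two things that are not always true.

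First, you conclude from the transition-parity argument that $A \cap H$ is a nontrivial $u$--$v$ path, i.e.\ that the number of boundary transitions is ``exactly two'' and that they occur one at $u$ and one at $v$. Neither is forced. A crossing pseudo-witness cycle may satisfy $A \cap H = \emptyset$, $A \cap H = \{u\}$ (both transitions at the same boundary node, so $u$ is a singleton run), or $A \cap H = \{u\} \cup \{v\}$ (four transitions, two singleton runs). In each of these cases $A$ contributes no edges to $K$, so your theta decomposition and region count are unaffected --- but such an $A$ may still contain $u$ or $v$ on its boundary. In particular, if \emph{all} crossing cycles are of this degenerate kind, $K = \partial H$, there is a single region, your recipe gives $\tilde{\mathcal A}=\emptyset$, and yet $u$ (and possibly $v$) lies on a cycle of $\hat{\mathcal A}$ that your $\tilde{\mathcal A}$ fails to cover. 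The fix is easy (put one such cycle through $u$ and one through $v$ into $\tilde{\mathcal A}$ --- still at most two), but as written the argument does not give it, and the earlier assertion that $\hat{\mathcal A}=\emptyset$ whenever $H$ has fewer than two boundary nodes fails for the same reason.

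Second, the paths $Q_A$ need not be \emph{internally disjoint}: distinct pseudo-witness cycles $A_1, A_2$ share at most the $S$-node of one of them, but may freely share non-$S$ internal nodes and edges, so $Q_{A_1}$ and $Q_{A_2}$ can coincide along subpaths. The linear order on the $Q_A$'s still exists because they are pairwise non-crossing, and your conclusion survives, but calling $K$ a union of $k+2$ internally disjoint $u$--$v$ paths overstates the structure, and the derivation of non-crossing itself needs the local alternation argument (if $Q_{A_1}$ and $Q_{A_2}$ alternate around a shared node $p$, then each of $A_1, A_2$ has an edge in the other's interior, contradicting laminarity) rather than the informal remark that ``the companion arcs lie outside.''
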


By the reduction described in \autoref{reduction} 
each non-crossing cycle of $\mathcal{A}$ is a face. 
Since by \autoref{partitiontosubpocket}, the outside face of each subpocket $W$ contains nodes from at most two crossing pseudo-witness cycles, and contains all nodes that belong to pseudo-witness cycles lie on the outside face, there are at most two hit nodes of $W$ whose pseudo-witness is not a face and they must lie on the outside face of $W$.
Hence, each subpocket satisfies the conditions of \autoref{complex} and hence has positive balance.
Thus, $H$ has positive balance, that is, $0 \leq |\mathcal{R}| - \frac{7}{18} |E_\mathcal{R}| =|R| - \sum_{M \in R} |M \cap S| $.
Rearranging, $\sum_{M \in R} |M \cap S| \leq \frac{18}{7}|\mathcal{R}|$, which completes the proof of \autoref{by2.4}.
$\qed$

\bibliographystyle{abbrvnat}
\bibliography{EvenCycleTransversal}

\end{document}